\newcommand{\pder}[2]{\ensuremath{\frac{\partial #1}{\partial #2}}}
\newcommand{\so}{\ensuremath{\mathfrak{so}}}
\newcommand{\R}{\ensuremath{\mathbb{R}}}
\newcommand{\dx}{\ensuremath{\textrm{d}x}}
\newcommand{\dz}{\ensuremath{\textrm{d}z}}
\def\contract{\makebox[1.2em][c]{\mbox{\rule{.6em}
{.01truein}\rule{.01truein}{.6em}}}}
\newtheorem{thm}{Theorem}[section]
\newtheorem{prop}[thm]{Proposition}
\newtheorem{cor}[thm]{Corollary}
\newtheorem{defn}[thm]{Definition}
\DeclareMathOperator{\SDiff}{SDiff}
\DeclareMathOperator{\Diff}{Diff}
\DeclareMathOperator{\Jet}{Jet}
\DeclareMathOperator{\SO}{SO}
\DeclareMathOperator{\SL}{SL}
\DeclareMathOperator{\tr}{tr}
\DeclareMathOperator{\iso}{iso}
\DeclareMathOperator{\kernel}{kernel}
\DeclareMathOperator{\bag}{bag}
\DeclareMathOperator{\lie}{\mathcal{L}}
\DeclareMathOperator{\Ad}{Ad}
\DeclareMathOperator{\ad}{ad}
\title[Weak dual pairs and jetlet methods for ideal fluids]%
{Weak dual pairs and jetlet methods for ideal incompressible fluid models in $n \geq 2$ dimensions}
\author{C.~J. Cotter$^1$, J. Eldering$^1$, D.~D. Holm$^1$, H.~O. Jacobs$^{1,*}$,}
\address{$^1$ Department of Mathematics, Imperial College London, London SW7 2AZ, UK}
\author{D.~M. Meier$^2$}
\address{$^2$ Department of Mathematics, Brunel University London, Uxbridge UB8 3PH, UK}
\address{$^*$ Corresponding author (h.jacobs@ic.ac.uk)}
\begin{document}

\begin{abstract}
  We review the role of dual pairs in mechanics and use them
  to derive particle-like solutions to regularized incompressible
  fluid systems.
  In our case we have a dual pair resulting from the action of
  diffeomorphisms on point particles (essentially by moving the points).
  We then augment our dual pair by considering the action of diffeomorphisms
  on Taylor series, also known as \emph{jets}.
  The augmented \emph{weak} dual pairs induce a hierarchy of particle-like solutions
  and conservation laws
  with particles carrying a copy of a jet group.
  We call these augmented particles \emph{jetlets}.
  The jet groups serve as finite-dimensional models of the diffeomorphism
  group itself, and so the jetlet particles serve as a finite-dimensional
  model of the self-similarity exhibited by ideal
  incompressible fluids.
  The conservation law associated to jetlet solutions is 
  shown to be a shadow of Kelvin's circulation theorem.
  Finally, we study the dynamics of infinite time particle mergers.
  We prove that two merging particles at the zeroth level in the hierarchy
  yield dynamics which asymptotically approach that of a single particle in the first level in the hierarchy.
  This merging behavior is then verified numerically
  as well as the exchange of angular momentum which must
  occur during a near collision of two particles.
  The resulting particle-like solutions suggest a new class
  of meshless methods which work in dimensions $n \geq 2$
  and which exhibit a shadow of Kelvin's circulation theorem.
  More broadly, this provides one of the first finite-dimensional models of
  self-similarity in ideal fluids.
\end{abstract}

\maketitle

\section{Introduction}
Arnold's geometric insight in \cite{Arnold1966}  has forever changed the way mathematicians look at ideal fluid dynamics.  According to \cite{Arnold1966}, ideal incompressible fluid motion on an
orientable Riemannian manifold $M$ is equivalent to geodesic motion on the Lie group of volume preserving diffeomorphisms $\SDiff(M)$ (i.e., the smooth invertible volume preserving maps of $M$ into itself, with smooth inverses).
The Riemannian metric is simply the fluid kinetic energy, which is the $L^2$-norm of the fluid's velocity field.
This characterization of ideal fluid flow allowed Poisson geometers and geometric mechanicians to provide a new perspective on ideal fluids and other PDEs with hydrodynamics background \cite{EbinMarsden1970,MarsdenWeinstein1983,Zeitlin1991,ArnoldKhesin1998,HolmMarsdenRatiu1998,FoiasHolmTiti2001}.
As will be shown, such a perspective is particularly fruitful for the purpose of reducing the infinite-dimensional fluid system
to a finite-dimensional ordinary differential equation (e.g.\ point vortex solutions).

The dimension reduction that will be performed in this paper and the method which it proposes
bears much semblance to the point vortex method \cite{Chorin1973}.
In particular, \cite{MarsdenWeinstein1983} illustrated how the point vortex solutions and the conservation of circulation could be derived via a pair of Poisson maps known as a dual pair.
One of these Poisson maps was the embedding map from a finite-dimensional manifold into the infinite-dimensional space dual to the divergence free vector fields.
In other words, the space of point vortices is merely an invariant manifold of ideal fluid motion.\footnote{This is not quite correct, as one must ignore the infinite self-energy terms of point vortices and then extend the space of admissible solutions to non-smooth velocity fields.  However, modulo this physically motivated caveat the statement holds.}
It was later realized that the use of dual pairs related to fluid applications was problematic and needed to be relaxed.
This led to the notion of weak dual pairs \cite{GayBalmazVizman2012}.
In this paper we will derive a hierarchy of different weak dual pairs, in order to obtain a class of finite-dimensionally parametrized solutions,
each of which comes with a conservation law that shadows Kelvin's circulation theorem.
Just as one can consider the point vortices to be the atoms of the point vortex method,
the atoms of these new solutions are particle-like objects which we call \emph{jetlets} (or $k$-\emph{jetlets} if we wish to emphasize that we are considering the $k$-th level of the hierarchy).

It is notable that the zeroth level of the hierarchy is a classical particle-like solution which appears in many geodesic
systems on diffeomorphism groups \cite{CamassaHolm1993,JoshiMiller2000,FringerHolm2001,MumfordMichor2013}.
It is also notable that, unlike a point vortex, a jetlet is also well-defined in dimensions greater than two.
\subsection{Main contributions}
 In this article we derive a hierarchy of particle-like
solutions for a regularized model of ideal fluids described in  \cite{MumfordMichor2013} as motion on the group of volume preserving
diffeomorphisms of $\R^n$, denoted $\SDiff(\R^n)$.
Each level in the hierarchy consists of particles
with internal group variables, parametrized by
a finite-dimensional model of a diffeomorphism group 
known as a jet group \cite[Chapter 4]{KMS99}.
The jet groups lie at the foundations of certain representation theories of diffeomorphism groups (see \cite[Appendix 2]{VershikGelfandGraev1975} or \cite{Kirillov1981}).
Hence, it seems natural to invoke these foundations in the context of fluids, where the configuration manifold is a diffeomorphism group.
The particle-like solutions, which ``carry'' jet groups, paint an intuitive picture of a large scale diffeomorphism that advects particles,
each of which carries its own partial description of a local deformation of the fluid in a small region around it.
At each higher level in the hierarchy, the description of the deformation becomes more detailed.
Thus the jets of diffeomorphisms possess a natural sense of the ``self-similarity'' which is present in the diffeomorphism group itself.
 Models such as this are crucial to our understanding of fluids,
  both from a numerical perspective and from the perspective
  of fundamental mathematics.

 Specifically, we will accomplish the following:
\begin{enumerate}
  \item We provide the first explicit Hamiltonian description of the full jet hierarchy of the particle-like solutions discovered in 
  	\cite{JacobsRatiuDesbrun2013,CotterHolmJacobsMeier2014}.
  \item We compute a nested sequence of conserved quantities at 
    each level in the hierarchy.
    Each of these conserved quantities will be related to the conservation
    of circulation.
  \item We numerically compute some particle-like solutions at the zeroth and first levels in the hierarchy.  We will observe a form of cascade in which interactions of solutions at the $k$-th level tend asymptotically in time toward solutions at the $(k+1)$-th level.
\end{enumerate}

\subsection{Outline of the paper}
After the introduction, in Section~\ref{sec:approach}, we outline our strategy: The first goal is to show that jetlet models admit a weak dual pair at each level in the hierarchy. Once this has been done, Theorem~\ref{thm:dual_pairs} (proven in the appendix) does the rest. Namely, the jetlets satisfy canonical Hamiltonian equations, and the right momentum map is conserved by the flow.

Section~\ref{sec:rigid_body} contains a brief discussion of the standard dual pair for the example of the  rigid body.

In Section~\ref{sec:reg_fluids} the main development of the paper starts. We first discuss briefly the Lie-Poisson approach to ideal fluids and Euler's equation. This is to set the scene for what follows. In Section~\ref{sec:MME} we discuss the Mumford-Michor model~\cite{MumfordMichor2013} and mention its standard dual pair (in parallel with the rigid body, the two legs of the dual pair correspond to the cotangent lift momentum maps for right and left actions, respectively).
We also recall that the conservation of the right momentum map $J_R$ here is equivalent to Kelvin's circulation theorem. Again, the dual pair viewpoint is extremely efficient. Once one has realized that there is a dual pair, one knows (from right-invariance) that the left momentum map $J_L$ maps Hamilton's equations on $T^*\SDiff$ to (reduced) Lie-Poisson form, and at the same time that $J_R$ is conserved.

In Section~\ref{sec:zeroth_order} we introduce zeroth order jetlets, also called landmarks, and we discuss their dual pair. This is done in parallel with later sections. The left momentum map is the usual one, and the right momentum map is in fact trivial. So the `dual pair' is somewhat unnatural here, but it helps present already the kind of thinking that will be employed in the later parts of the paper. 
%

In Section~\ref{sec:first_order} we consider first order jetlets. Everything here goes in parallel, except that one has to be more careful when introducing the relevant right and left actions. Once one has the definitions in place, it is not difficult to recognize that the cotangent lifts for the right and left actions lead to, now, a \emph{weak} dual pair of momentum maps.
We have to add `weak' here since the transitivity of the left action on level sets of the right momentum map is lost, but a weak dual pair is retained, essentially because the group actions still commute, see Definition~\ref{def:weak_dual_pair}.  Proposition~\ref{prop:1-solutions} collects the results that follow immediately as a consequence of the weak dual pair.

Throughout the paper we have taken care to explain the intuition behind the more abstract concepts. The discussion just after Proposition~\ref{prop:1-solutions} is an example of this. There, we discuss the relationship between the jetlet solutions on the diffeomorphism group and the resulting trajectory on the space of Taylor jets.

In Section~\ref{sec:higher_order} we describe jetlets at general levels. Again, once the relevant spaces and actions have been introduced (which is now quite an intricate endeavor), one recognizes that the momentum maps are a weak dual pair by the usual arguments. We give the form of the momentum maps explicitly just after Proposition~\ref{HL_action_prop}. Again, the weak dual pair leads to analogous conclusions about dynamics. Namely, it is canonically Hamiltonian, with $J_R$  conserved; $J_L$ maps the dynamics to the Lie-Poisson dynamics on the one-form densities.

In Section~\ref{sec:Kelvin} we discuss in more detail Kelvin's circulation theorem. More precisely, we discuss the relationship between the `standard' circulation theorem for the fluid and the conserved momentum maps $J_R$  at the various levels of jetlets. The main result is represented schematically and proven in detail (all that is required, in essence, is the standard formula for cotangent lift momentum maps). Intuitively speaking, what we show is that the conserved jetlet momentum maps are `shadows' of the `full' right momentum map of the fluid.

In Section~\ref{sec:collisions} we discuss the explicit dynamical behavior of the
particle model, in particular we study `collisions' leading to mergers of jetlets.
This makes an explicit connection between dynamics in the different
levels of the jetlet hierarchy. It can potentially be useful when
simulating jetlet systems: when two particles become close, they
can be replaced by a merged state whose momenta do not blow up.

We redo the analysis of Mumford and Michor to find that two 0-jetlets
can either merge in infinite time, or bounce off each other.
Then, we analyze the asymptotic dynamics of this merged state and show
that it coincides with the dynamics of a single 1-jetlet particle. This
improves the claim in \cite{CotterHolmJacobsMeier2014} which showed the
convergence to a 1-jetlet state without explicitly considering the dynamics.

Finally, we suggest a more algebraic interpretation of mergers in
the jetlet hierarchy by viewing all levels of the hierarchy as embedded
in the larger space $\mathfrak{X}_{\rm div}(\mathbb{R}^n)^*$, where levels form (part of) the
boundaries of other levels (a bit like in a CW-complex).

In summary, the numerics section~\ref{sec:Numerical experiments} shows first of all that it is feasible to implement this jetlet model numerically, also in dimensions higher than
two. Further, we corroborate the analytical results and confirm the
conserved quantities and jetlet particle merging behavior numerically.
Several different experiments show that the merging/scattering behavior persists
under various perturbations that cannot be studied analytically anymore.
It also shows how $J_L$ (i.e.\ angular and linear) momentum is exchanged
in the jet-particle collisions.

We provide a detailed appendix. In Appendix~\ref{sec:Hamiltonian}, a brief discussion of symplectic and Poisson manifolds is followed by the definition of weak dual pairs. Then we prove Theorem~\ref{thm:dual_pairs}, which is the core result used in the paper. Appendix \ref{app:diagramatic} presents an overview over the spaces used in the paper and relates them to some general results of reduction theory, see in particular Figures \ref{fig_diagram_1} and \ref{fig_diagram_2}. 
 In Appendix~\ref{sec:multi} we describe the index conventions used in the main text (e.g., when calculating momentum maps for the general jetlet solutions), while Appendix~\ref{sec:measure_valued_momap} briefly discusses the more abstract point of how the dual space of vector fields can be viewed as tensor product of 1-forms and distributions.
Finally, Appendix~\ref{sec:eom} provides information describing the equations of motion for 1-jets, but in reduced coordinates, which arise after the reduction that eliminates the conserved quantities, $J_R$.

\subsection{Previous work}
  Lagrangian models of ideal fluids such as \emph{smooth particle hydrodynamics} \cite{Monaghan1977,Lucy1977} and
\emph{vortex methods} \cite{Chorin1973}, do not exhibit structures which express the nested character of the diffeomorphism group.
 One means of obtaining a Lagrangian model with a nested structure
 was recently presented in \cite{JacobsRatiuDesbrun2013}
 where a sequence of infinite-dimensional reductions by symmetry was
 executed, to produce a hierarchy of finite-dimensional systems from a
 regularized fluid model.
 The finite-dimensional systems were particle-like solutions
 in which each particle carries a model of the diffeomorphism group,
 known as jet groups.
 Thus, \cite{JacobsRatiuDesbrun2013} derived a Lagrangian analog of 
 ``whirls within whirls'', similar to the Eulerian models proposed
 in \cite{HolmTronci2012}.
 However, the specifics of the regularized fluid model were not determined,
 and the analysis of \cite{JacobsRatiuDesbrun2013} was purely
 formal.

 Later, a regularized version of the ideal fluid equations was presented by
 \cite{MumfordMichor2013}.
 This new partial differential equation was amenable to the theory presented in 
 \cite{JacobsRatiuDesbrun2013}, and gave rise to a specific and easily
 implementable manifestation of the hierarchy of particle models
 described there.
 In \cite{CotterHolmJacobsMeier2014} we numerically computed some
 of these particle-like solutions and observed cascade phenomena as
 an emergent behavior at the zeroth level in the hierarchy.

 It is notable that the zeroth level of the hierarchy has been studied in the context of partial differential equations with hydrodynamic background.
 In particular, \cite{HoldenRaynaud2006} provide the first convergence proof of such a method in the context of the Camassa-Holm equation.
 This same idea was implemented for the $n$-dimensional Camassa-Holm
 equation in \cite{ChertockDuToitMarsden2012}.
 In the context of image registration algorithms, the need to obtain
 compressible diffeomorphisms motivated the use of particle methods
 in a similar framework \cite{JoshiMiller2000}.
 These methods, designed for a wide array of PDEs, are studied analytically
 in \cite{TrouveYounes2005}, which also contains a proof of well-posedness
 for a range of PDEs.
 Finally, in the context of image registration, \cite{Sommer2013} 
 discovered a compressible fluid version of the hierarchy derived in \cite{JacobsRatiuDesbrun2013},
 and numerically integrated solutions in the first level of the hierarchy.

\subsection{Notation}
We will let $\mathfrak{X}(\R^n)$ denote the space of $H^\infty$ vector fields,
and we let $\mathfrak{X}_{\rm div}(\R^n)$
denote the space of divergence free vector fields resulting from the Hodge decomposition.
We let $\Diff(\R^n)$ denote the space of $H^\infty$ diffeomorphisms of $\R^n$  (see \cite{MichorMumford2013})
and we let $\SDiff(\R^n)$ denote the subgroup of volume preserving diffeomorphisms.

Various different types of indices will be used throughout this paper.
To distinguish between the types, we keep the following
conventions:
\begin{itemize}
\item indices $a,b,c,\ldots$ label particles and range from $1$ to $N$;
\item indices $i,j,k,\ldots$ label space coordinates and range from
  $1$ to $n$, the dimension of space;
\item superscript indices $(k)$ denote the order of jets in a
  coordinate-free representation.
\end{itemize}

\section{Main approach}
\label{sec:approach}
Let us begin by describing our strategy for obtaining
particle-like solutions to regularized fluid equations.
Our main framework is that of Hamiltonian mechanics
and symplectic geometry.
In particular, our main hammer is Theorem~\ref{thm:dual_pairs} (see page
\pageref{thm:dual_pairs} in the Appendix), which we repeat
here for convenience.
  \begin{quote}
    {\bf Theorem \ref{thm:dual_pairs}}
    Let $P_1$ and $P_2$ be Poisson manifolds and let $S$ be
    a symplectic manifold.  Let $J_1,J_2:S \to P_1,P_2$ form a weak dual pair,
    see Definition~\ref{def:weak_dual_pair}.
    Let $h \in C^1(P_1)$.
    If $(q,p)(t) \in S$ is a solution to Hamilton's equations
    with respect to the Hamiltonian $H = h \circ J_1$,
    then $J_1\left( (q,p)(t) \right) \in P_1$ is a solution
    to Hamilton's equations on $P_1$ with respect to $h$,
    and $J_2( (q,p)(t))$ is constant in time.
  \end{quote}

We will leverage this theorem in the following way.
We will find a (weak) dual pair, $J_1,J_2: S \to \mathfrak{g}^*$, where
\begin{enumerate}
\item $S$ is the space of particle locations and momenta (and, for the higher orders of the hierarchy, internal group variables),
\item $\mathfrak{g}$ is the space of vector fields,
\item and $h$ is the Hamiltonian of a regularized model of ideal fluid (i.e.\ a kinetic energy).
\end{enumerate}
Theorem~\ref{thm:dual_pairs} applied to the (weak) dual pair $J_1,J_2$
then yields particle-like solutions and conserved quantities to the equations of motion
of a regularized model of an ideal incompressible fluid.

\section{Example: the rigid body}
\label{sec:rigid_body}
  In this section we review basic notions from classical mechanics
by studying the motion of a rigid body whose center of mass rests
at the origin.
We will see a first application of Theorem~\ref{thm:dual_pairs}
in this context.
The configuration of a rigid body is described by a
rotation matrix $R \in \SO(3)$.
The equations of motion are given by Hamilton's equations
on the cotangent bundle $T^*\SO(3)$.
There are canonical coordinates on $T^*\SO(3)$ given by $(R,P)$
where $P$ is such that $R^TP $ is a $3 \times 3$ anti-symmetric
matrix\footnote{%
  We use the pairing $\langle P, v \rangle = \tr(P^T v)$ for
  $P \in T^*_R SO(3)$ and $v \in T_R SO(3)$. More generally, the pairing $\langle \cdot , \cdot \rangle$ will always refer to the natural pairing between a vector space and its dual in this paper.}.
The angular momentum in the body frame is the
unique vector $\Pi \in \mathbb{R}^3 \cong \so(3)^*$ such that
\begin{align*}
  J_R(R,P) := R^TP = \begin{bmatrix}
    0 & -\Pi_3 & \Pi_2 \\
    \Pi_3 & 0 & -\Pi_1 \\
    -\Pi_2 & \Pi_1 & 0 
    \end{bmatrix}.
\end{align*}
Denoting this map from $(R,P)$ to $\Pi$ by $J_R$,
the Hamiltonian can be written as a function
of $\Pi$.  In particular, the reduced Hamiltonian is
\begin{align*}
  h(\Pi) = \frac{1}{2}\Pi \cdot \mathbb{I}^{-1} \cdot \Pi,
\end{align*}
where $\mathbb{I}$ is a non-degenerate $3\times 3$ symmetric matrix
known as the moment of inertia matrix.
The unreduced Hamiltonian is $H(R,P) = h(J_R(R,P))$.

  The group $\SO(3)$ acts upon itself by left multiplication.
  This action can be lifted to an action on $T^*\SO(3)$ given by
  \begin{align*}
    (R,P) \in T^* \SO(3) \stackrel{g \in \SO(3) }{\mapsto}
    (g R , gP ) \in T^* \SO(3).
  \end{align*}
  It is notable that $\Pi = R^T P = R^T g^T g P = (gR)^T (gP)$
  is unaltered by this transformation.
  Therefore the Hamiltonian, $h(\Pi)$, is invariant under this action of $\SO(3)$.
  By Noether's theorem there is a conserved quantity associated to 
  this symmetry.
  The conserved quantity is manifested by the \emph{momentum map}
  \begin{align*}
    J_L(R,P) = PR^T .
  \end{align*}
  Moreover, due to this symmetry, one can write the evolution equations
  on a lower-dimensional space.
  The very fact that the Hamiltonian is written in terms of $\Pi$
  suggests that the equations of motion can be written in terms of $\Pi$ alone.
  Indeed this is the case,
  \begin{align}
    \dot{\Pi} = \Pi \times (\mathbb{I}^{-1} \Pi ). \label{eq:rigid_body}
  \end{align}
  This equation can be seen as a Hamiltonian equation on $\mathbb{R}^3$
  with respect to the non-canonical Poisson bracket
  \begin{align*}
    \{ F , G \}_{\rm Nambu}(x)  = - x \cdot (\nabla F \times \nabla G) 
  \end{align*}
  known as the \emph{Nambu bracket}.
  This is no coincidence. Let us first introduce the isomorphism of
  Lie algebras $(\R^3,\times)$ and $\so(3)$, given by the so-called hat map
  \begin{equation}\label{eq:hat-map}
    x \in \R^3 \mapsto
    \hat{x} = 
    \begin{bmatrix}
      0   & -x_3 &  x_2 \\
      x_3 & 0    & -x_1 \\
     -x_2 &  x_1 & 0
    \end{bmatrix} \in \so(3).
  \end{equation}
  \begin{prop}[see {\cite[\S 2.5]{HolmBook2}}] \label{prop:Nambu}
    The Nambu bracket on $\mathbb{R}^3$
    is identified with the Lie--Poisson bracket on $\so(3)^*$
    through the hat map isomorphism~\eqref{eq:hat-map}
    in the sense that $\{ f,g\}_{\rm Nambu}( x) = \{ \hat{f} , \hat{g} \}_{\rm LP}( \hat{x})$
    where $f,g \in C^1(\mathbb{R}^3)$ and 
    $\hat{f},\hat{g} \in C^1(\so(3)^*)$ are defined by 
    $\hat{f}( \hat{x} ) = f(x), \hat{g}( \hat{x} ) = g(x)$.
  \end{prop}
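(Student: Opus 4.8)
The plan is to reduce the claimed identity to the single algebraic fact that the hat map \eqref{eq:hat-map} is an \emph{isomorphism of Lie algebras}, i.e.\ that $[\hat{x}, \hat{y}] = \widehat{x \times y}$ for all $x, y \in \R^3$. First I would verify this on the standard basis $e_1, e_2, e_3$ of $\R^3$: since $e_i \times e_j = \epsilon_{ijk} e_k$, multiplying out the antisymmetric matrices gives $[\hat{e}_i, \hat{e}_j] = \epsilon_{ijk}\hat{e}_k = \widehat{e_i\times e_j}$, so the structure constants of $\so(3)$ in the basis $\{\hat{e}_i\}$ are exactly the Levi--Civita symbols $c_{ij}^k = \epsilon_{ijk}$.

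Next I would unwind the two sides of the asserted equality. Writing an element of $\so(3)^*$ in the coordinates $\Pi \in \R^3$ supplied by the hat map, the definition $\hat f(\hat x) = f(x)$ means that, as a function of $\Pi$, the functional $\hat f$ is literally $f$; since $\so(3)^*$ is three-dimensional, its functional derivative $\delta\hat f/\delta\hat x \in \so(3)$ is just the hat of the ordinary gradient, $\widehat{\nabla f(x)}$. Substituting into the (minus) Lie--Poisson bracket $\{\hat f, \hat g\}_{\rm LP}(\hat x) = -\langle \hat x, [\delta\hat f/\delta\hat x, \delta\hat g/\delta\hat x]\rangle$ and applying the isomorphism, the commutator collapses via $[\widehat{\nabla f}, \widehat{\nabla g}] = \widehat{\nabla f\times\nabla g}$, so the bracket becomes $-\langle\hat x, \widehat{\nabla f\times\nabla g}\rangle$. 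Finally the natural pairing identifies $\langle\hat x, \hat c\rangle$ with $x\cdot c$, leaving $-x\cdot(\nabla f\times\nabla g)$, which is precisely $\{f, g\}_{\rm Nambu}(x)$.

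Equivalently, and perhaps more transparently, I would compute both brackets in coordinates at once: inserting $c_{ij}^k = \epsilon_{ijk}$ into the coordinate form of the minus Lie--Poisson bracket yields $-\Pi_k\,\epsilon_{ijk}\,\partial_i f\,\partial_j g = -\Pi\cdot(\nabla f\times\nabla g)$, matching the Nambu bracket term by term. That the minus convention is the correct one is dictated by \eqref{eq:rigid_body}: with $\nabla h = \mathbb{I}^{-1}\Pi$, the bracket $-\Pi\cdot(\nabla\,\cdot\,\times\nabla h)$ reproduces $\dot\Pi = \Pi\times(\mathbb{I}^{-1}\Pi)$.

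The only real obstacle is bookkeeping of signs and normalizations, rather than anything conceptual. One must fix the sign convention of the Lie--Poisson bracket (the body-frame rigid body uses the minus bracket, as just checked) and be consistent about which pairing identifies $\so(3)^*$ with $\R^3$. In particular, if one carries the trace pairing $\langle P, v\rangle = \tr(P^T v)$ through literally, the relation $\tr(\hat x^T\hat y) = 2\,x\cdot y$ introduces factors of two that enter once through each functional derivative and once through the final pairing, producing a spurious overall factor of $1/2$. The statement as phrased corresponds to the normalization in which the hat map is an isometry, where these factors are absent; I would therefore state the $\so(3)^*$--$\so(3)$ pairing convention explicitly at the outset to avoid an erroneous factor of $1/2$.
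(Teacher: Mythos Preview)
Your proposal is correct and follows essentially the same route as the paper: both arguments reduce the identity to the Lie-algebra isomorphism $[\hat{x},\hat{y}]=\widehat{x\times y}$, identify $d\hat{f}$ with $\widehat{\nabla f}$, and then read off the Nambu bracket from the Lie--Poisson formula. Your treatment is in fact more scrupulous about the sign convention and the trace-pairing normalization than the paper's own proof, which glosses over both points.
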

  \begin{proof}
    The Lie--Poisson bracket on $\so(3)^*$ is
    \begin{align*}
      \{ \hat{f} , \hat{g} \}_{\rm LP} (\hat{\Pi}) =
      \left \langle \hat{\Pi} , \left[ d\hat{f}(\hat{\Pi}) , d\hat{g}(\hat{\Pi}) \right]
        \right \rangle,
    \end{align*}
    for arbitrary functions $\hat{f},\hat{g} \in C^1(\so(3)^*)$.
    There exist functions $f,g \in C^1(\mathbb{R}^3)$ related to $\hat{F},\hat{G}$ through the hat map.
    One can observe that $d\hat{f}(\hat{\Pi}) \in \so(3)$ is 
    related to $\nabla f(\Pi)$ through the relation $\widehat{ \nabla f(\Pi)} = d\hat{f}( \hat{\Pi})$.
    We see that the commutator bracket satisfies
    \begin{align*}
      [\hat{x},\hat{y} ] := \hat{x} \hat{y} - \hat{y} \hat{x} = \widehat{x \times y }.
    \end{align*}
    Therefore the Lie--Poisson bracket can be written as
    \begin{align*}
    \{ \hat{f} , \hat{g} \}_{\rm LP}(\hat{\Pi})
    &= \langle \hat{\Pi} , \widehat{ \nabla f \times \nabla g }(\Pi) \rangle \\
    &= \Pi \cdot  \left( \nabla f \times \nabla g \right)(\Pi) \\
    &= \{ f , g \}_{\rm Nambu}(\Pi).
    \end{align*}
  \end{proof}

  Recall that $T^*\SO(3)$ is a symplectic manifold. The momentum maps
  $J_L,J_R$ arise canonically from the left and right action of
  $\SO(3)$ on $T^* \SO(3)$. The actions commute and it can also
  be checked that $J_L$ and $J_R$ have symplectically orthogonal
  kernels, hence it follows from~\cite[Corollary~2.6]{GayBalmazVizman2012}
  that the diagram
  \begin{alignat*}{2}
    \so(3)^* \stackrel{J_L}{\longleftarrow}&
    \,T^* \SO(3)
    &&\stackrel{J_R}{\longrightarrow} \so(3)^* \\
     PR^T \stackrel{J_L}{\longmapsfrom}&
    \;\; (R,P)
    &&\stackrel{J_R}{\longmapsto} R^TP
  \end{alignat*}
  is a dual pair
  (see page \pageref{thm:dual_pairs} for details).
  The maps $J_L$ and $J_R$ are called \emph{symplectic variables} in \cite{MarsdenWeinstein1983}
  as they allow one to pull-back calculations on a Poisson manifold to a symplectic manifold.\footnote{
  	\cite{MarsdenWeinstein1983} also referred to $J_R$ and $J_L$ as \emph{Clebsch variables},
		however this terminology has changed over the past few decades.}

  By Theorem~\ref{thm:dual_pairs}, this dual pair expresses rigid
  body dynamics and conserved quantities.
  Specifically, the right leg yields the reduced phase space where 
  the system evolves in time.
  The left leg yields the conserved quantities of the rigid body
  associated with the left action of $\SO(3)$ on itself.
  The most important aspect of these maps is that they are both Poisson
  maps, i.e.\ they carry the canonical Poisson bracket on $T^* \SO(3)$
  to the Nambu bracket on $\so(3)^* \cong (\mathbb{R}^3,\times)$,
  as the following proposition shows.

  \begin{prop}[remark 2.5.11 \cite{HolmBook2}] \label{prop:SO3_to_Nambu}
    Let $\{ \cdot , \cdot \}_{\rm can}$ denote the canonical Poisson bracket
    on the cotangent bundle $T^* \SO(3)$.
    Let $\{ \cdot , \cdot \}_{\rm Nambu}$ denote the Nambu bracket on $\mathbb{R}^3$.
    Both $J_L$ and $J_R$ are Poisson maps.
    Explicitly, this means
    \begin{align*}
      -\{ f \circ J_L , g \circ J_L \}_{\rm can} = \{ f , g \}_{\rm Nambu} \circ J_L \\
      \{ f \circ J_R , g \circ J_R \}_{\rm can} = \{ f , g \}_{\rm Nambu} \circ J_R \\
    \end{align*}
    for any $f,g \in C^2( \mathbb{R}^3)$.
  \end{prop}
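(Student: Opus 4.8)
The plan is to verify the Poisson-map identities by reducing to linear functions and then exploiting that $J_R$ and $J_L$ are the cotangent-lift momentum maps of the right and left translation actions of $\SO(3)$ on $T^*\SO(3)$. Since the canonical bracket is a derivation in each argument, the identity $\{f\circ J,g\circ J\}_{\rm can}=\pm\{f,g\}_{\rm Nambu}\circ J$ holds for all $f,g\in C^2(\R^3)$ as soon as it holds for a family of functions whose differentials span $T^*\R^3$ at every point; it therefore suffices to test it on the three linear coordinate functions, or equivalently on the linear functionals $\langle\,\cdot\,,\xi\rangle$ with $\xi\in\so(3)$. Accordingly I would set $(J_R)_\xi:=\langle J_R,\xi\rangle$ and $(J_L)_\xi:=\langle J_L,\xi\rangle$ and compute their canonical brackets.

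First I would rewrite both momentum maps as \emph{momentum functions}. Using the pairing $\langle A,B\rangle=\tr(A^TB)$, a one-line manipulation gives $(J_R)_\xi(R,P)=\langle R^TP,\xi\rangle=\langle P,R\xi\rangle=\langle P,\xi^L(R)\rangle$, the fiber-linear function paired with the \emph{left}-invariant vector field $\xi^L$ on $\SO(3)$; symmetrically, $(J_L)_\xi(R,P)=\langle PR^T,\xi\rangle=\langle P,\xi R\rangle=\langle P,\xi^R(R)\rangle$ is paired with the \emph{right}-invariant field $\xi^R$. This is the structural heart of the statement: the body momentum $J_R$ sees left-invariant fields, while the spatial momentum $J_L$ sees right-invariant fields, and it is precisely this left/right asymmetry that will produce the opposite signs.

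Next I would invoke two standard facts. The assignment $X\mapsto P_X$, sending a vector field on $\SO(3)$ to the corresponding momentum function on $T^*\SO(3)$, is a Lie-algebra anti-homomorphism for the canonical bracket, $\{P_X,P_Y\}_{\rm can}=-P_{[X,Y]}$; and left- and right-invariant fields carry opposite structure constants, $[\xi^L,\eta^L]=[\xi,\eta]^L$ whereas $[\xi^R,\eta^R]=-[\xi,\eta]^R$. Combining these (with the present sign conventions) yields
\[
\{(J_R)_\xi,(J_R)_\eta\}_{\rm can}=-(J_R)_{[\xi,\eta]},\qquad \{(J_L)_\xi,(J_L)_\eta\}_{\rm can}=+(J_L)_{[\xi,\eta]},
\]
so the two legs behave with opposite sign, one as a Poisson map and the other as an anti-Poisson map. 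Since $(J_\bullet)_{[\xi,\eta]}(z)=\langle J_\bullet(z),[\xi,\eta]\rangle$ is exactly the Lie--Poisson bracket of the linear functions $\xi,\eta$ evaluated at $J_\bullet(z)$, these identities say precisely that $J_R$ and $J_L$ are Poisson (resp.\ anti-Poisson) maps onto $\so(3)^*$ with its Lie--Poisson bracket. Passing from the Lie--Poisson bracket to the Nambu bracket through Proposition~\ref{prop:Nambu} then delivers the two displayed equalities, the minus sign on the $J_L$ line and the plus sign on the $J_R$ line being exactly the imprint of the left/right asymmetry found above.

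The main obstacle is purely the sign bookkeeping: one must fix the sign conventions for the canonical symplectic form, for the momentum-function bracket $\{P_X,P_Y\}_{\rm can}=\mp P_{[X,Y]}$, and for the Lie--Poisson/Nambu identification of Proposition~\ref{prop:Nambu}, and then check that they conspire to reproduce precisely the signs in the statement rather than their opposites. Everything else is an immediate consequence of the momentum-function formalism. If a fully explicit check is preferred, the same conclusion follows by verifying $\{\Pi_i,\Pi_j\}_{\rm can}=-\varepsilon_{ijk}\Pi_k$ for the body components of $J_R$ and the opposite-sign relation for the spatial components of $J_L$, but the invariant computation above makes the origin of the two signs most transparent.
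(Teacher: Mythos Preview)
Your proof is correct, but it takes a genuinely different route from the paper's. The paper's argument is a one-liner: it observes that $J_R$ (and analogously $J_L$) is the cotangent-lift momentum map of the right (resp.\ left) action of $\SO(3)$ on itself, invokes the general theorem that cotangent-lift momentum maps are equivariant and hence Poisson (Theorem~12.4.1 of \cite{MandS}), and then appeals to Proposition~\ref{prop:Nambu} to identify the target Lie--Poisson bracket with the Nambu bracket. No explicit bracket computation is carried out.

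Your approach, by contrast, is a direct calculation: you identify $(J_R)_\xi$ and $(J_L)_\xi$ as the momentum functions $P_{\xi^L}$ and $P_{\xi^R}$, then combine the anti-homomorphism $\{P_X,P_Y\}_{\rm can}=-P_{[X,Y]}$ with the opposite structure constants of left- versus right-invariant fields to produce the two signs. This is more hands-on and has the virtue of making the origin of the sign asymmetry between the two legs completely transparent---something the paper's black-box citation leaves implicit. The cost is that you must keep careful track of three independent sign conventions (as you yourself flag), whereas the paper's appeal to equivariance sidesteps that bookkeeping entirely. Both arguments are sound; yours is more instructive, the paper's more economical.
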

  \begin{proof}
    Let $(R,P) \in T^* \SO(3)$ and set $\Pi = J_R(R,P)$.
    We observe that
    \begin{align*}
      \langle \Pi , \Omega \rangle
      = \tr( \hat{\Pi}^T \hat{\Omega} ) 
      = \tr( P^T R \hat{\Omega} ) 
      = \langle P , R \cdot \hat{\Omega} \rangle.
    \end{align*}
    This tells us that $J_R$ is the momentum map associated
    with the cotangent lift of the right action of $\SO(3)$ on itself.
    Such momentum maps are always equivariant
    and thus yield Poisson maps (Theorem 12.4.1 \cite{MandS}).
    Thus $J_R$ carries the canonical
    Poisson bracket on $T^* \SO(3)$, to the Lie--Poisson
    bracket on $\so(3)^*$.
    By Proposition~\ref{prop:Nambu}, this is nothing but the
    Nambu bracket upon identifying $\so(3)$ with $\mathbb{R}^3$.
    The same argument applies to $J_L$ using a left action.
  \end{proof}

  One can obtain solutions to \eqref{eq:rigid_body}
  by solving canonical Hamiltonian equations with respect to $H(R,P)$.
  In particular, if $(R,P)(t)$ is a solution to Hamilton's equation,
  then $\Pi(t) = J_R( (R,P)(t))$ is a solution to Hamilton's equation
  with respect to the Nambu bracket.
  This is a result of Proposition~\ref{prop:Poisson_dynamics}
  paired with the observation that $J_R$ is a Poisson map via
  Proposition~\ref{prop:SO3_to_Nambu}.

\section{Regularized fluids}
\label{sec:reg_fluids}
  Euler's equations of motion for incompressible fluids can be seen as
  Hamiltonian equations on the (dual) space of divergence
  free vector fields \cite{Arnold1966}.
  Consider the Lie algebra of vector fields on $\mathbb{R}^n$,
  denoted by $\mathfrak{X}(\mathbb{R}^n)$.
  Formally, the dual space to $\mathfrak{X}(\mathbb{R}^n)$ is a Poisson
  manifold when equipped with the Lie-Poisson bracket
  (see \eqref{eq:Lie-Poisson} in Appendix~\ref{sec:Poisson}).
  We may consider the map
  $\psi: T^*R^n \to \mathfrak{X}(\mathbb{R}^n)^*$
  given implicitly by
  \begin{align*}
    \langle \psi(q,p) , u \rangle = p \cdot u(q) \quad
    \forall u \in \mathfrak{X}(\mathbb{R}^n), (q,p) \in T^*\mathbb{R}^n.
  \end{align*}
  Explicitly we may write $\psi$ using the Dirac-delta
  functional as $\psi(q,p) = p \otimes \delta_q$.
  It is shown in \cite{HolmMarsden2005} that this map is Poisson.
  Furthermore, as the $n$-dimensional Camassa--Holm equation \cite{CamassaHolm1993} is a
  Hamiltonian equation on $\mathfrak{X}(\mathbb{R}^n)^*$,
  Proposition~\ref{prop:Poisson_dynamics} promises to express a 
  certain subset
  of solutions by solving Hamiltonian equations for a finite number
  of particles.
  Specifically, $\psi$ yields the peakon solutions of the $n$-dimensional
  Camassa--Holm equation.
  In this section we explore analogous constructions for
  an incompressible and regularized version of the
  Camassa--Holm equation, discovered in \cite{MumfordMichor2013}.

%

In the case where $h_{\rm Euler}(m) = \frac{1}{2} \| m \|^2_{L^2}$ is the standard fluid kinetic energy on the
dual vector space to the incompressible vector fields, $\mathfrak{X}_{\rm div}(\mathbb{R}^n)^*$, 
Hamilton's equations are written as
\begin{align}
	\partial_t m+ \lie_u [ m] = 0 \quad , \quad u^i = \delta^{ij} m_j. \label{eq:ideal_fluid}
\end{align}
where $\lie_u[m]$ is the Lie derivative of $m$.
The primary finding of \cite{Arnold1966} was that \eqref{eq:ideal_fluid} is equivalent to the
inviscid fluid equation
\begin{align*}
	\partial_t u + u \cdot \nabla u = - \nabla p \quad , \quad  \nabla \cdot u = 0.
\end{align*}
Since we have not yet clarified the Poisson structures of the system, it may not be obvious that \eqref{eq:ideal_fluid} is a Hamiltonian equation. The following proposition shows this for a general Hamiltonian  on $\mathfrak{X}_{\rm div}(\R^n)^*$.

  \begin{prop}[\cite{Arnold1966}] \label{prop:LPDiff}
  Let $h \in C^{\infty}( \mathfrak{X}_{\rm div}(\R^n)^* )$.
  Recall that $\mathfrak{X}_{\rm div}(\R^n)^*$ is a Poisson manifold
  when equipped with the Lie--Poisson bracket,
  and given a function $h$, the Fr\'echet derivative $dh(m)$ is an element  of $\mathfrak{X}_{\rm div}(\R^n)^{**}$.
  In the event that $dh(m) \in \mathfrak{X}_{\rm div}(\R^n)$,
  Hamilton's equations are given by
  \begin{align*}
    \dot{m} + \lie_u [m] = 0 \quad , \quad u = dh(m). 
  \end{align*}
\end{prop}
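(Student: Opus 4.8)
The plan is to recognize the displayed equation as the minus Lie--Poisson equation of motion for the Lie algebra $\mathfrak{X}_{\rm div}(\R^n)$ and to verify it weakly, by pairing against test fields. Recall that the Lie--Poisson bracket on $\mathfrak{X}_{\rm div}(\R^n)^*$ (see Appendix~\ref{sec:Poisson}) has the form $\{f,g\}(m) = -\langle m, [df(m), dg(m)] \rangle$, where $[\cdot,\cdot]$ is the Jacobi--Lie bracket of vector fields; the minus sign reflects that the Eulerian velocity is a right-invariant field, so that the reduction of $T^*\SDiff(\R^n)$ by right translation yields the minus bracket. Hamilton's equations for the Hamiltonian $h$ assert that $\frac{d}{dt}F(m(t)) = \{F,h\}(m)$ for every smooth observable $F$, so it suffices to evaluate this on a generating family of observables.

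First I would test against linear observables. For a fixed divergence-free field $v \in \mathfrak{X}_{\rm div}(\R^n)$, set $F_v(m) = \langle m, v\rangle$, so that $dF_v(m) = v$. Writing $u = dh(m)$, which by hypothesis lies in $\mathfrak{X}_{\rm div}(\R^n)$ (this is precisely the assumption that lets us represent the Fr\'echet derivative, a priori only an element of $\mathfrak{X}_{\rm div}(\R^n)^{**}$, by a genuine divergence-free vector field), the defining relation gives
\[
  \langle \dot m, v\rangle = \{F_v, h\}(m) = -\langle m, [v,u]\rangle = \langle m, [u,v]\rangle = \langle \ad^*_u m, v\rangle,
\]
where $\ad^*_u$ is the infinitesimal coadjoint operator. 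Since $v$ ranges over all divergence-free fields, this identifies $\dot m = \ad^*_u m$ as an equation in $\mathfrak{X}_{\rm div}(\R^n)^*$.

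The crux is then to identify the coadjoint operator with minus the Lie derivative. I would establish the identity $\langle m, [u,v]\rangle = -\langle \lie_u m, v\rangle$, valid for all test fields $v$, by an integration by parts: writing $m$ as a one-form density and using that the $H^\infty$ decay kills boundary terms, the components of $\ad^*_u m$ work out to $-(\lie_u m)$, where $\lie_u m$ is the Lie derivative of the one-form density (computed via the Leibniz rule on the one-form factor and the volume factor). Substituting into the previous display gives $\langle \dot m, v\rangle = -\langle \lie_u m, v\rangle$ for all divergence-free $v$, i.e.\ $\dot m + \lie_u m = 0$.

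The step I expect to require the most care is the interplay between the incompressibility constraint and the signs. Because $\mathfrak{X}_{\rm div}(\R^n)^*$ is the divergence-free dual extracted by the Hodge decomposition, the identity $\dot m + \lie_u m = 0$ is really an equation modulo exact one-forms: pairing only against divergence-free $v$ automatically annihilates any gradient contribution in $\lie_u m$, and this gradient is exactly the $-\nabla p$ pressure term that is absorbed when one passes from the abstract Lie--Poisson form to the incompressible Euler equation quoted above. Keeping straight the minus sign from the right-invariant reduction together with the sign in $\ad^*_u m = -\lie_u m$, so that the Lie derivative ultimately appears with a plus sign, is the only genuinely delicate bookkeeping; the remainder is the standard Lie--Poisson computation.
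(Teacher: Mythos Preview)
Your proof is correct and follows essentially the same strategy as the paper's: both test the Hamiltonian evolution against linear observables $F_v(m)=\langle m,v\rangle$, compute the Lie--Poisson bracket, and use the duality $\langle m,\lie_u v\rangle=-\langle \lie_u m,v\rangle$ to conclude. The only cosmetic differences are your choice of the minus Lie--Poisson bracket with the standard Jacobi--Lie bracket (the paper uses the plus bracket with the opposite vector-field bracket convention, which is equivalent), and your additional commentary on the pressure term and Hodge decomposition, which the paper omits.
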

\begin{proof}
  To each $v \in \mathfrak{X}_{\rm div}(\R^n)$
  we can associate a linear function on $\mathfrak{X}_{\rm div}(\R^n)^*$
  given by $m \in \mathfrak{X}_{\rm div}(\R^n)^* \mapsto \langle m , v \rangle \in \R$.
  Let us denote this function by $f_v$.
  Let $m(t)$ satisfy Hamilton's equations.
  By the definition of the Lie--Poisson bracket
  (see \eqref{eq:Lie-Poisson} in Appendix~\ref{sec:Poisson})
  we observe
  \begin{align*}
    \frac{d}{dt} f_v(m) &= \{ f_v , h \}(m) = \langle m , [ df_v(m) , dh(m) ] \rangle \\
    &= \langle m , \lie_{dh(m)}[ df_v(m) ] \rangle.
  \end{align*}
  However $df_v = v$ and so
  the last line can be equated with  $-\langle \lie_{dh(m)}[m] , v \rangle$.
  Additionally, we know that $\frac{d}{dt} f_v(m) = \langle \dot{m} , v \rangle$ since $v$ is constant in time.
  Therefore we find
  \begin{align*}
    \langle \dot{m} + \lie_{dh(m)} [m] , v \rangle = 0.  
  \end{align*}
  As $v$ is arbitrary, this uniquely characterizes $\dot{m}$.\footnote{This is \emph{not} a ``weak'' characterization.  The entity $\dot{m}$ is
  contained in the dual space to $\mathfrak{X}_{\rm div}(\R^n)$ and it is
  therefore \emph{defined uniquely} by how it acts on $\mathfrak{X}_{\rm div}(\R^n)$.}
  The result follows.
\end{proof}

\subsection{The Mumford--Michor model}
\label{sec:MME}
Consider the Hamiltonian
\begin{align*}
  h_{p,\sigma}(m) = \frac{1}{2} \langle m , K_{p,\sigma} * m \rangle_{L^2},
\end{align*}
where $K_{p,\sigma}:\R^n \to \mathbb{R}^{n \times n}$ is the matrix
valued Green's kernel defined by the property
\begin{align*}
 \Big(1 - \frac{\sigma^2}{p} \Delta \Big)^p \cdot \int_{\R^n} K_{p,\sigma}^{ij}(x - y) m_j(y) dy = \delta^{ij} m_j(x).
\end{align*}
In this case Hamilton's equations take the form
\begin{align}
	\partial_t m+ \lie_u [ m] = 0 \quad , \quad u^i  = K^{ij}_{p,\sigma} * m_j. \label{eq:MMDiff}
\end{align}
Solutions to \eqref{eq:MMDiff} exhibit existence and uniqueness for all time.
Moreover, as $\sigma \to 0$, $h_{p,\sigma} \to h_{\rm Euler}$ and one
can speculate that solutions to~\eqref{eq:MMDiff}
approach solutions to the ideal fluid equation~\eqref{eq:ideal_fluid}.
In fact, this is the case over short times, and for $\sigma > 0$ solutions of \eqref{eq:MMDiff}
differ from those of \eqref{eq:ideal_fluid} by an amount $\sigma t$
in the $H^k$-norm.
Thus Hamilton's equations with respect to $h_{p,\sigma}$ have been proposed
as a model for ideal fluids \cite[Theorems 2 and 3]{MumfordMichor2013}. From now on we shall often suppress the
parameters $p,\sigma$ and shorten $K = K_{p,\sigma}$ to prevent index
clutter.

Next, we discuss a dual pair for this system.
The group $\SDiff(\R^n)$ acts on itself from the left and from the right.
These actions can be lifted to $T^* \SDiff(\R^n)$, and yield
momentum maps $J_L,J_R\colon T^*\SDiff(\R^n) \to \mathfrak{X}_{\rm div}(\R^n)^*$. In particular these maps form the dual pair
\begin{align*}
  \mathfrak{X}_{\rm div}(\R^n)^*
  \stackrel{J_L}{\longmapsfrom}
  T^* \SDiff(\R^n)
  \stackrel{J_R}{\longmapsto}
  \mathfrak{X}_{\rm div}(\R^n)^*.
\end{align*}
For a mathematically rigorous treatment of this dual pair we refer to~\cite{GayBalmazVizman2012}.

By Theorem~\ref{thm:dual_pairs}, we can use this dual pair to derive dynamical properties of Hamiltonian equations defined on $\mathfrak{X}_{\rm div}(\R^n)^*$.
The Hamiltonian for a fluid is written on the left instance of $\mathfrak{X}_{\rm div}(\R^n)^*$.
One can (in principle) solve Hamilton's equations on $T^*\SDiff(\R^n)$
with respect to the Hamiltonian $H = h \circ J_L$.
This yields the \emph{material} or \emph{Lagrangian} coordinate perspective
of fluid mechanics one encounters in a first course on continuum
mechanics.
The right leg yields conserved quantities associated with the particle relabeling symmetry of the fluid.
It was found in \cite{Arnold1966} that these conserved momenta are identical
to the law of conservation of circulation, that is, \emph{Kelvin's circulation theorem}.

Unfortunately, this dual pair does not help us in solving~\eqref{eq:MMDiff}
since solving Hamilton's equations on $T^*\SDiff(\R^n)$ is no less difficult
than solving Hamilton's equations on $\mathfrak{X}_{\rm div}(\R^n)^*$.
In the next section we will derive a dual pair wherein the symplectic
manifold is more reasonable.
This will yield the particle-like solutions described in \cite{MumfordMichor2013}. In the later parts of the paper we will generalize the treatment to obtain a \emph{weak} dual pair for each level in the hierarchy of particle-like solutions.

\subsection{Particle-like solutions}
\label{sec:zeroth_order}
There is a natural left group and algebra action of $\SDiff(\R^n)$
and $\mathfrak{X}_{\rm div}(\R^n)$, respectively, on $\R^n$ given by
\begin{align*}
  q \in \R^n
  &\stackrel{ \varphi \in \SDiff(\R^n) }{\longmapsto}
  \varphi(q) \in \R^n \\
  q \in \R^n
  &\stackrel{ u \in \mathfrak{X}_{\rm div}(\R^n) }{\longmapsto}
  u(q) \in T_q\R^n.
\end{align*}
The tangent lift of the former is defined in the obvious way by sending
\begin{align*}
  (q,v) \in T\R^n \cong \R^n \times \R^n
  \stackrel{ \varphi \in \SDiff(\R^n) }{\longmapsto}
  \big(\varphi(q) , (\partial_j\varphi^i(q) v^j) \partial_i \big) \in T_{\varphi(q)} \R^n.
\end{align*}
The cotangent lift is defined by taking the dual
of  the tangent lifted action.
That is to say,
\begin{align*}
  (q, p) \in T^*\R^n \cong \R^n \times \R^n
  \stackrel{ \varphi \in \SDiff(\R^n) }{\longmapsto}
  \big(\varphi^{-1}(q) , (\partial_i\varphi^j(q) p_j) e^i \big),
\end{align*}
where $\{e^i\}$ forms the dual basis to $\{\partial_i\}$ at $\varphi^{-1}(q)$.
The momentum map
$J^{(0)}_L : T^*\R^n \to \mathfrak{X}_{\rm div}(\R^n)^*$
associated to this left action is defined by the
condition
\begin{align*}
  \langle J_L^{(0)}( q , p) , u \rangle := \langle p , u(q) \rangle,
\end{align*}
for all $u \in \mathfrak{X}_{\rm div}(\R^n)$ and $(q,p) \in T^*\R^n$ (see \eqref{eq:cotangent_momap} in Appendix~\ref{sec:Symplectic}); the superscript in $J_L^{(0)}$ serves as a reminder that we are considering the zeroth level in the hierarchy of particle-like solutions.
We see that $J_L^{(0)}(q,p)$ is an evaluation operator, and we can write it 
more explicitly as a measure-valued momentum map in terms of the Dirac-delta distribution as
\begin{align*}
  J_L^{(0)}( q , p ) = p \otimes \delta_q,
\end{align*}
where $\delta_q$ is the Dirac-delta distribution on $\R^n$ centered at $q$.
This identification holds modulo $dC^1(\R^n) \otimes d{\bf x}$ where $dC^1(\R^n)$
is the space of $C^0$-exact one-forms and $d{\bf x}$ is the canonical volume form on $\R^n$
(see Appendix~\ref{sec:measure_valued_momap}).
Since $J_L^{(0)}$ is a cotangent lift momentum map, it is equivariant and therefore Poisson, again by \cite[Theorem~12.4.9]{MandS}.

We define the manifold for particles,
\begin{align*}
  Q_N^{(0)} = \left\{ (q_1,\dots,q_N) \in \R^n \times \dots\times \R^n
                 \mid q_a \neq q_b \text{ when } a \neq b \right\}.
\end{align*}
We will index the particles with $a,b,c,\dots$ and Cartesian coordinate
directions in $\R^n$ with indices $i,j,k,\dots$
Thus each $q \in Q_N^{(0)}$ can be decomposed
into $N$ particles as $(q_1,\dots,q_N)$, where each $q_a \in \R^n$
and the $i$-th coordinate of the $a$-th particle
is denoted by $q\indices{_a^i}$.

The group $\SDiff(\R^n)$ acts on $Q_N^{(0)}$ by the diagonal action.
Through the same manipulations as we applied previously we obtain the
momentum map for $N$ particles given by
\begin{align*}
  J_L^{(0)}(q,p) = p_a \otimes \delta_{q_a},
\end{align*}
where a sum over repeated indices is implied.

By Propositions~\ref{prop:Poisson_dynamics} and~\ref{prop:LPDiff},
we obtain solutions to Hamilton's equations
on $\mathfrak{X}_{\rm div}(\R^n)^*$, by solving
Hamilton's equations on $T^*\R^n$ if $dh( J_L^{(0)}(q,p) )$ is
a vector field.  If $h = h_{p,\sigma}$, then we calculate that $dh$
evaluated on the $J_L^{(0)}(p,q)$ is the vector field
\begin{align*}
	dh( p_a \otimes \delta_{q_a} )  (x) = K^{ij}(x - q_a) p_{a\,j} \pder{}{x^i}.
\end{align*}
This is a vector field whose differentiability is determined completely by that of
the kernel, $K$.
\begin{figure}
	\centering
	\includegraphics[width = 0.4\textwidth]{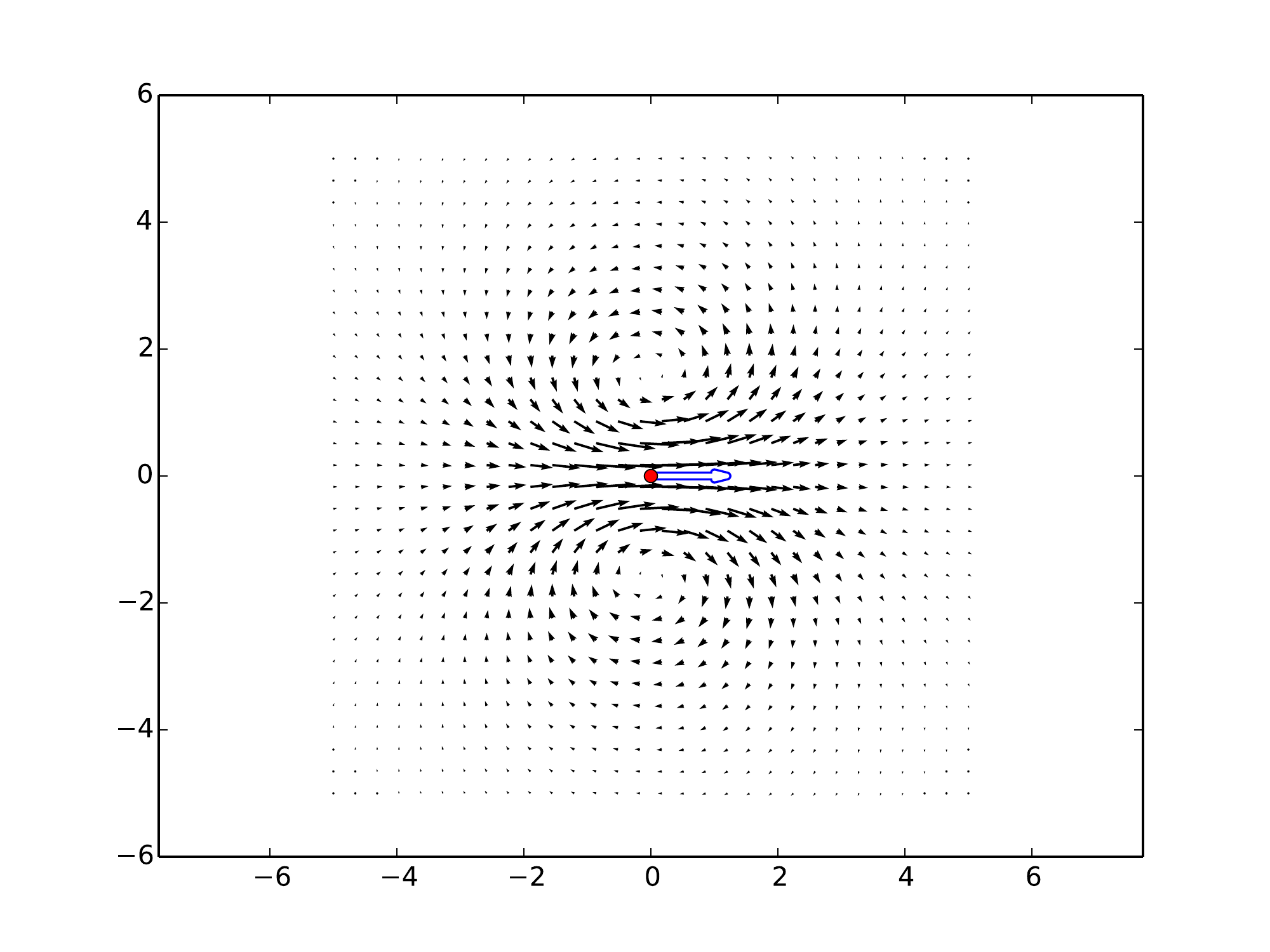}
	\includegraphics[width = 0.4\textwidth]{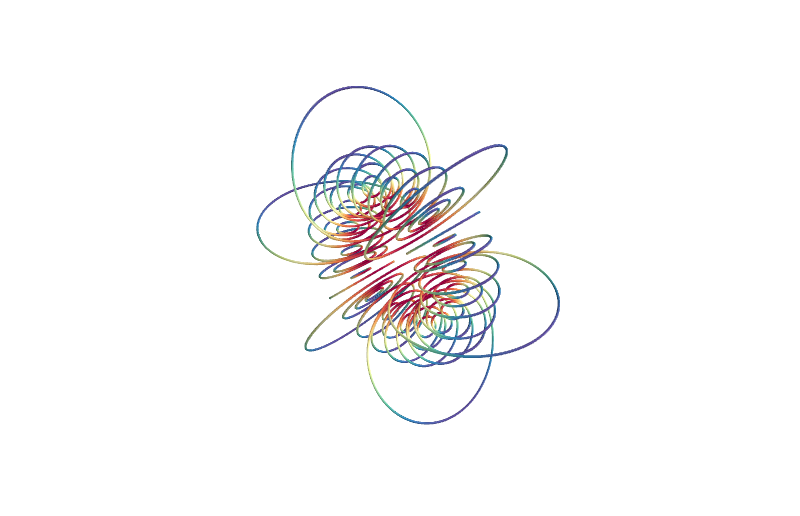}
	\caption{A $0$-jetlet with momentum $m = (1,0)^T \otimes \delta_0$ in dimensions $2$ and $3$.
	(\emph{left}) Quiver plot of the induced 2 dimensional velocity field.
	(\emph{right})  Streamline plot of the induced 3 dimensional velocity field. }
	\label{fig:zero_jetlet}
\end{figure}
Once one has found a solution $(q(t), p(t))$ to Hamilton's equations on $T^*\mathbb{R}^n$ and thus also a solution to Hamilton's equations on $\mathfrak{X}_{\rm div}(\mathbb{R}^n)^*$, one can proceed to integrate the corresponding time-dependent vector field to obtain the fluid motion $\varphi_t$ in $\SDiff(\mathbb{R}^n)$. It is natural (but not mandatory) to choose the initial map $\varphi_0$ to be the identity or at least to be a diffeomorphism that satisfies $\varphi_0(q_a(0))= q_a(0)$ for all $a$.
  If this choice is made, one can interpret the curve $q(t) = (q_1(t), \ldots, q_N(t))$ in $Q_N^{(0)}$ as the locations of particles as they are swept along by the fluid flow, that is, $q_a(t) = \varphi_t(q_a(0))$ for all $a = 1, \ldots, N$. 
  
  We see that $J_L^{(0)}$ is injective, and thus has a trivial kernel. As a result, the symplectic orthogonal to the kernel of $J_L^{(0)}$  is the full tangent bundle $T(T^*Q_N^{(0)})$. Hence, if we define the (trivial) map $J_R^{(0)}: T^*Q^{(0)}_N \to \mathfrak{X}_{\rm div}(\mathbb{R}^n)^*$ by $J_R^{(0)} \equiv 0$, it follows that the diagram
 \begin{alignat*}{2}
    \mathfrak{X}_{\rm div}(\R^n)^* \stackrel{J_L^{(0)} }{\longleftarrow}&
    \,T^* Q_N^{(0)}
    &&\stackrel{J_R^{(0)} }{\longrightarrow} \mathfrak{X}_{\rm div}(\R^n)^* \\
    p_a \otimes \delta_{q_a} \stackrel{J_L^{(0)}}{\longmapsfrom}&
    \;\;(q,p)
    &&\stackrel{J_R^{(0)} }{\longmapsto} 0
  \end{alignat*}
  is a (proper) dual pair.
  This dual pair allows us to express conservation
  laws and dynamics as a result of Theorem~\ref{thm:dual_pairs}.
  Namely, the left leg represents the space in which particle-like
  solutions to \eqref{eq:MMDiff} evolve,
  while the right leg represents a (trivial) conserved quantity.

  In order to make contact with the later parts of the paper, it is useful to remark that $J_R^{(0)}$ can formally be understood as the cotangent lift momentum map associated with a certain (trivial) group action.
  To that end, we fix a designated point $z = (z_1, \ldots, z_N) \in Q_N^{(0)}$ and take an arbitrary element $ q= (q_1, \ldots, q_N) \in Q_N^{(0)}$ to represent the set of all $\varphi \in \SDiff(\mathbb{R}^n)$ that satisfy $\varphi \cdot z  = q$.
  That is, $\varphi(z_a) = q_a$ for all $a$.
  This means in particular that the specification of $(q_1, \ldots, q_N)$ fixes the zeroth order Taylor expansion (at the locations $z_a$) of the corresponding set of diffeomorphisms. With this in mind, let us define the isotropy group
\begin{align}
\iso(z)  = \{ \psi \in \SDiff(\mathbb{R}^n) | \psi(z_a) = z_a \mbox{ for all } a\} \label{eq:isotropy_group}
\end{align}
and a (trivial) right action on $Q_N^{(0)}$ where $\psi \in \iso(z)$ maps an element $(\varphi(z_1), \ldots, \varphi(z_N))$ to $( \varphi \circ \psi (z_1), \ldots, \varphi \circ \psi(z_N))$.
Similar constructions will be crucial in later sections, when
constructing the right leg of an --- in this case weak --- dual pair for the higher levels in the hierarchy of particle-like solutions.

  In summary we find:
  \begin{prop}[\S7 of \cite{MumfordMichor2013}] \label{prop:0-solutions}
  Let $p \geq \frac{n}{2} +1$ and $\sigma > 0$.
  Let $H:T^*Q_N^{(0)} \to \R$ be the function
  \begin{align*}
    H(q,p) =\frac{1}{2} K_{p,\sigma}^{ij}(q_a - q_b)\, p_{a\,i}\, p_{b\,j}.
  \end{align*}
  If $(q,p)(t)$ is a solution to Hamilton's equations, then
  $m(t) = p_a(t) \otimes \delta_{q_a(t)}$
  is a solution to \eqref{eq:MMDiff}.
\end{prop}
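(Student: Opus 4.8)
The plan is to recognize the finite-dimensional Hamiltonian $H$ as the pullback $h_{p,\sigma} \circ J_L^{(0)}$ and then let the dual pair structure do all the work through Theorem~\ref{thm:dual_pairs}. First I would verify the identification $H(q,p) = h_{p,\sigma}\big(J_L^{(0)}(q,p)\big)$ directly. Substituting $J_L^{(0)}(q,p) = p_a \otimes \delta_{q_a}$ into $h_{p,\sigma}(m) = \tfrac{1}{2}\langle m, K * m\rangle_{L^2}$, the convolution produces the velocity $(K*m)^i(x) = K^{ij}(x - q_a)p_{a\,j}$ already computed in the text, and pairing against $m = p_b \otimes \delta_{q_b}$ collapses the $L^2$ integral to the sum $\tfrac{1}{2} K^{ij}(q_b - q_a)\,p_{a\,j}\,p_{b\,i}$. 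Using the symmetry of the Green's kernel and relabeling indices, this is exactly the stated $H(q,p)$.

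With the identification in hand, the conclusion follows from Theorem~\ref{thm:dual_pairs} applied to the dual pair established just above the statement, taking $S = T^*Q_N^{(0)}$, $P_1 = \mathfrak{X}_{\rm div}(\R^n)^*$, $J_1 = J_L^{(0)}$ (a cotangent-lift momentum map, hence equivariant and Poisson), $J_2 = J_R^{(0)} \equiv 0$, and $h = h_{p,\sigma}$. Since $J_L^{(0)}$ is injective its kernel is trivial, so the weak-dual-pair hypothesis of the theorem holds (indeed the diagram is a proper dual pair). The theorem then asserts that $m(t) := J_L^{(0)}\big((q,p)(t)\big) = p_a(t)\otimes\delta_{q_a(t)}$ solves Hamilton's equations on $\mathfrak{X}_{\rm div}(\R^n)^*$ with respect to $h_{p,\sigma}$. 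By Proposition~\ref{prop:LPDiff} together with the discussion of the Mumford--Michor model, those Lie--Poisson equations are precisely \eqref{eq:MMDiff}, which is the desired conclusion.

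The step requiring genuine care---and where the hypothesis $p \geq \tfrac{n}{2}+1$ enters---is verifying the regularity assumptions underlying both Theorem~\ref{thm:dual_pairs} (which needs $h_{p,\sigma} \in C^1(P_1)$) and Proposition~\ref{prop:LPDiff} (which needs $dh(m)$ to be an honest vector field so that $\lie_u[m]$ is defined). The Fr\'echet derivative is $dh_{p,\sigma}(m) = K * m$, whose differentiability is inherited entirely from the Green's kernel $K_{p,\sigma}$ of $\big(1 - \tfrac{\sigma^2}{p}\Delta\big)^p$. A Fourier/Sobolev-embedding estimate shows that the condition $2p > n+1$---guaranteed by $p \geq \tfrac{n}{2}+1$---makes $K_{p,\sigma}$ at least $C^1$, so the induced field $u = K^{ij}(\,\cdot\, - q_a)p_{a\,j}\,\partial_i$ is $C^1$ and the transport term $\lie_u[m]$ is well defined; the same bound renders $H$ smooth on $Q_N^{(0)}$, with the open condition $q_a \neq q_b$ only needed to stay away from any diagonal singularity of $K$. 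I expect this regularity bookkeeping, rather than any conceptual difficulty, to be the main obstacle to pin down precisely; once it is in place, the proposition is an immediate corollary of the dual-pair formalism.
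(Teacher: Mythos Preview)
Your proposal is correct and follows exactly the paper's approach: the paper's proof consists of the single line ``Note that $H = h_{p,\sigma} \circ J_L^{(0)}$ and apply Theorem~\ref{thm:dual_pairs},'' together with a pointer to the remark after Theorem~\ref{thm:k-solutions} for the smoothness bookkeeping, which records that $K_{p,\sigma} \in C^{2p-n-1}$ so that $p \ge \tfrac{n}{2}+1$ gives $K \in C^1$. You have simply written out in full what the paper leaves implicit.
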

\begin{proof}
	Note that $H = h_{p,\sigma} \circ J_L^{(0)}$ and apply Theorem~\ref{thm:dual_pairs}.
  See the remark after Theorem~\ref{thm:k-solutions} for more details on the kernel smoothness condition on $p$.
\end{proof}

\subsection{First order particle-like solutions}
\label{sec:first_order}
In this section we revisit the first order particle-like solutions of \cite{CotterHolmJacobsMeier2014} and discuss their weak dual pair, before extending the treatment to the higher levels of the hierarchy in the subsequent section. 
  Let $\SL(n)$ denote the Lie group of $n\times n$ matrices
  with unit determinant.
  Let $q = (q^{(0)}, q^{(1)}) \in \R^n \times \SL(n)$ and consider the left
  $\SDiff(\R^n)$ action on $\R^n \times \SL(n)$ given by
  \begin{align}
    \varphi \cdot q = (\varphi(q^{(0)} ) , D\varphi(q^{(0)} ) \cdot q^{(1)} ). \label{eq:first_order_action}
  \end{align}
  Where $D\varphi(q^{(0)} ) \cdot q^{(1)}$ is the result of multiplying
  the Jacobian matrix $D\varphi(q^{(0)} )$ with $q^{(1)}$.
  \begin{prop}
    The action of $\SDiff(M)$ on $\R \times \SL(n)$ in \eqref{eq:first_order_action} is a group action.
  \end{prop}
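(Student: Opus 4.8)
The plan is to verify directly the two defining properties of a left group action, together with the implicit requirement that the map lands back in $\R^n \times \SL(n)$. I would begin with well-definedness, since this is the one place the volume-preserving hypothesis enters. Because $\varphi \in \SDiff(\R^n)$ satisfies $\det D\varphi \equiv 1$, the Jacobian $D\varphi(q^{(0)})$ lies in $\SL(n)$, and as $\SL(n)$ is closed under matrix multiplication the product $D\varphi(q^{(0)}) \cdot q^{(1)}$ is again an element of $\SL(n)$. Hence $\varphi \cdot q \in \R^n \times \SL(n)$. The identity axiom is then immediate: for $\varphi = \mathrm{id}$ one has $\mathrm{id}(q^{(0)}) = q^{(0)}$ and $D\,\mathrm{id}(q^{(0)}) = I$, so $\mathrm{id} \cdot q = (q^{(0)}, I \cdot q^{(1)}) = q$.

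The substantive step is compatibility with composition, $(\varphi \circ \psi) \cdot q = \varphi \cdot (\psi \cdot q)$ for all $\varphi, \psi \in \SDiff(\R^n)$. On the first ($\R^n$) factor this follows from $(\varphi \circ \psi)(q^{(0)}) = \varphi(\psi(q^{(0)}))$. On the second ($\SL(n)$) factor it is precisely the chain rule: expanding the left-hand side gives $D(\varphi \circ \psi)(q^{(0)}) \cdot q^{(1)} = D\varphi(\psi(q^{(0)})) \cdot D\psi(q^{(0)}) \cdot q^{(1)}$, whereas expanding the right-hand side---first applying $\psi$ to form the pair $(\psi(q^{(0)}), D\psi(q^{(0)}) \cdot q^{(1)})$, then applying $\varphi$ at the base point $\psi(q^{(0)})$---produces $D\varphi(\psi(q^{(0)})) \cdot (D\psi(q^{(0)}) \cdot q^{(1)})$. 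These agree by associativity of matrix multiplication.

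I expect no genuine obstacle; the chain rule is what makes the composition close up, and the associativity of matrix multiplication is what lets the two parenthesizations match on the nose. The only points to keep straight are that the Jacobian in the second slot must be evaluated at the correct base point (the image $\psi(q^{(0)})$, not $q^{(0)}$) when composing, and that the resulting bookkeeping confirms a genuine \emph{left} action, $(\varphi \circ \psi) \cdot q = \varphi \cdot (\psi \cdot q)$, rather than an anti-action.
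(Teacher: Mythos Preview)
Your proof is correct and follows essentially the same route as the paper: well-definedness via $\det D\varphi \equiv 1$, then compatibility with composition via the chain rule on the $\SL(n)$ factor. The only difference is that you also verify the identity axiom explicitly, which the paper omits as obvious.
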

  \begin{proof}
    Since $\varphi \in \SDiff(\R^n)$, it follows that $D\varphi |_{q^{(0)}} \in \SL(n)$.
    Therefore $D\varphi |_{q^{(0)}} \cdot q^{(1)} \in \SL(n)$.
    Secondly, if $\varphi_1,\varphi_2 \in \SDiff(\R^n)$ we observe that
    \begin{align*}
      \varphi_2 \cdot (\varphi_1 \cdot q)
      &= \varphi_2 \cdot \big(\varphi_1(q^{(0)} ) , D\varphi_1 |_{q^{(0)}} \cdot q^{(1)} \big) \\
      &= \big(\varphi_2(\varphi_1(q^{(0)} )) , \left. D\varphi_2 \right|_{\varphi_1( q^{(0)} )} \cdot D\varphi_1 |_{ q^{(0)} } \cdot q^{(1)} \big) \\
      &= \big( (\varphi_2 \circ \varphi_1)(q^{(0)} ) , D( \varphi_2 \circ \varphi_1)|_{q^{(0)} } \cdot q^{(1)} \big) \\
      &= (\varphi_2 \circ \varphi_1) \cdot q,
    \end{align*}
    where the second and third lines are applications of the chain rule.
  \end{proof}
  As before, this action can be lifted to the cotangent bundle $T^*(\R^n \times \SL(n))$.  Specifically, 
  the action of the diffeomorphism $\varphi^{-1}$ is given by
  \begin{equation}\label{eq:SDiff_cotangent_action_Q1}
    (T\varphi)^* \cdot ( q^{(0)} , q^{(1)}  , p^{(0)} , p^{(1)} ) \\
    = \big( \varphi^{-1}(q^{(0)})  , [D\varphi|_{q^{(0)}}]^{-1} \cdot q^{(1)} , D\varphi|_{q^{(0)}}^* \cdot p^{(0)} , D\varphi|_{q^{(0)}}^* \cdot p^{(1)} \big).
  \end{equation}
  Also as before, we can generalize this construction to the space of
  $N > 1$ particles by considering the space
  \begin{align*}
    Q^{(1)}_N = \left\{  ( q_1 , \dots, q_N ) \,\Bigg|
      \begin{array}{c}
        q_a = (q^{(0)}_a,q^{(1)}_a) \in \R^n \times \SL(n) \\
        q^{(0)}_a \neq q^{(0)}_b \text{ when } a \neq b
      \end{array} \right\}.
  \end{align*}
  For convenience it is nice to choose coordinates at this point.
  If we let $q_a$ denote the position of the $a$-th particle,
  $q\indices{_a^i}$ denote the $i$-th component of this position,
  and let $q\indices{_a^i_j}$ denote the $(i,j)$ entry of the $a$-th
  matrix then the resulting momentum map is defined by the condition
  \begin{align*}
    \langle J_L^{(1)}(q,p) , u \rangle
    = p\indices{_a_i} u^i(q^{(0)}_a) + p\indices{_a_i^j} \partial_k u^i(q^{(0)}_a) q\indices{_a^k_j}
  \end{align*}
  for an arbitrary $u \in \mathfrak{X}_{\rm div}(\R^n)$.
  In terms of the Dirac-delta functional, we can write $J_L^{(1)}$ as the measure-valued momentum map
  \begin{align*}
    J_L^{(1)}(q, p)
    = p\indices{_a_i} \dx^i \otimes \delta_{q_a^{(0)}}
     -p\indices{_a_i^j} q\indices{_a^k_j} \dx^i \otimes \partial_k \delta_{q_a^{(0)}}.
  \end{align*}
  Next, we construct a dual momentum map associated with a right action on $Q_N^{(1)}$. Analogous to the previous section  it is useful at this stage to write elements of $Q_N^{(1)}$ in the form $ \varphi \cdot z = (\varphi (z_1, {\mathbf{1}}), \ldots, \varphi \cdot (z_N , \mathbf{1}))$, $\varphi \in \SDiff(\mathbb{R}^n)$, for some designated element $ z=( (z_1, \mathbf{1}), \ldots, (z_N, \mathbf{1}))$. Clearly, every element of $Q_N^{(1)}$ can be written in this form for some  $\varphi \in \SDiff(\mathbb{R}^n)$. Indeed, with this convention the specification of an element in $Q_N^{(1)}$ fixes the first order Taylor expansion of $\varphi$ at the locations $z_a$, $a = 1, \ldots, N$. With this in mind, let us recall the isotropy group $\iso(z)$ defined earlier in \eqref{eq:isotropy_group}, which leaves these locations invariant, and define a right action on  $Q_N^{(1)}$ given by 
\begin{equation*}
	 (\varphi \cdot z) \cdot \psi  = (\varphi \circ \psi) \cdot z.
\end{equation*}
That is,
\begin{equation}
	 (q^{(0)}, q^{(1)}) \cdot \psi = \left( \left(q_1^{(0)}, q_1^{(1)} \cdot D\psi|_{z_1}\right), \ldots, \left(q_N^{(0)}, q_N^{(1)} \cdot D\psi|_{z_N}\right)\right). \label{1-rightaction}
\end{equation}
In Proposition~\ref{HL_action_prop} we will generalize this construction to define a right action for the higher levels in the hierarchy of particle-like solutions.
  The action \eqref{1-rightaction} yields the cotangent lift momentum map defined by the condition
  \begin{align*}
    \langle J_R^{(1)}(q,p) , u \rangle = p\indices{_a_i^j} q\indices{_a^i_k} \partial_j u^k(z_a).
  \end{align*}
  In terms of the Dirac-delta functional we may write this as the measure-valued momentum map
  \begin{equation}\label{eq:JR1}
    J_R^{(1)}(q,p) = - p\indices{_a_i^j} q\indices{_a^i_k} \dz^k \otimes \partial_j \delta_{z_a}
  \end{equation}
  
  \begin{prop}\label{prop:J1-dual-pair}
    The momentum maps $J_L^{(1)}$ and $J_R^{(1)}$ form a weak dual pair.
  \end{prop}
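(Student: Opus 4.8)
The plan is to check the two conditions in Definition~\ref{def:weak_dual_pair}: that each leg is a Poisson map, and that the kernels of $TJ_L^{(1)}$ and $TJ_R^{(1)}$ are symplectically orthogonal (equivalently, that functions pulled back along the two legs Poisson-commute). The first condition is immediate. Both $J_L^{(1)}$ and $J_R^{(1)}$ were built as cotangent lift momentum maps --- of the left $\SDiff(\R^n)$ action \eqref{eq:first_order_action} and of the right $\iso(z)$ action \eqref{1-rightaction}, respectively --- and cotangent lift momentum maps are equivariant, hence Poisson, by the same reasoning already used for the zeroth order pair and for the rigid body (\cite[Theorem~12.4.9]{MandS}). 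So the real content is the orthogonality of the kernels.

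The key observation, exactly as flagged before the proposition, is that the two actions commute. I would verify this directly from the explicit formulas. On $Q_N^{(1)}$ the left action sends each particle $(q_a^{(0)},q_a^{(1)})$ to $(\varphi(q_a^{(0)}), D\varphi|_{q_a^{(0)}}\cdot q_a^{(1)})$, acting on the jet slot by \emph{left} matrix multiplication, while the right action \eqref{1-rightaction} sends it to $(q_a^{(0)}, q_a^{(1)}\cdot D\psi|_{z_a})$, acting by \emph{right} multiplication and leaving the base point $q_a^{(0)}$ untouched. Since left and right matrix multiplication commute and the base-point transformation is unaffected by $\psi$, the two actions commute on $Q_N^{(1)}$; at the conceptual level this is just associativity of composition, $(\varphi\circ\chi)\circ\psi = \varphi\circ(\chi\circ\psi)$, in the representation $\chi\cdot z$. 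Commutativity is then inherited by the cotangent lifts on $T^*Q_N^{(1)}$.

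From here I would invoke the standard mechanism (the weak form of \cite[Corollary~2.6]{GayBalmazVizman2012} already used for the rigid body): two commuting Hamiltonian actions with equivariant momentum maps satisfy $\{f\circ J_L^{(1)}, g\circ J_R^{(1)}\}_{\rm can}=0$ for all $f,g$, because the Hamiltonian flow of any component of $J_L^{(1)}$ generates the left action, which preserves $J_R^{(1)}$ by equivariance and commutativity, so the bracket vanishes. Equivalently, $\ker TJ_L^{(1)}$ and $\ker TJ_R^{(1)}$ are symplectically orthogonal, which is precisely the weak dual pair condition.

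The subtlety to watch --- and the reason the pair is only \emph{weak} --- is that this argument yields symplectic orthogonality but not the complementarity $(\ker TJ_L^{(1)})^\omega=\ker TJ_R^{(1)}$ that a full dual pair would require. Complementarity would force the left action to act transitively on the level sets of $J_R^{(1)}$, as happens for the rigid body; for jet particles this transitivity fails, since specifying $J_R^{(1)}$ does not pin down the left orbit. I would therefore take care to cite the corollary in its weak form only, and also check (immediate from \eqref{1-rightaction}) that the right action is well defined, i.e.\ independent of the representative $\chi$ in $\chi\cdot z$.
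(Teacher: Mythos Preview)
Your proposal is correct and follows essentially the same route as the paper. The paper defers the proof to the general $k$-jet case (Proposition~\ref{prop:k dual pairs}), where it invokes \cite[Corollaries~2.6 and~2.8]{GayBalmazVizman2012} after checking that the cotangent-lift momentum maps are equivariant and that $J_L$ is invariant under the right $\iso(z)$ action---the latter being, as you correctly identify, a direct consequence of the commutativity of the two actions; you carry out the same argument explicitly for $k=1$ using the matrix-multiplication formulas rather than the abstract jet formulation.
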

   We postpone the proof, as this is a special case of a proposition which comes later in the paper (Proposition \ref{prop:k dual pairs}).

  As before, the quantity $dh( J_L^{(1)}(q,p))$ is a legitimate vector field
  if $K_{p,\sigma}$ is sufficiently smooth.  Moreover, when the Hamiltonian
  $H^{(1)} = h_{p,\sigma} \circ J_L^{(1)}$ is $C^1$ we may evolve Hamilton's equations
  to obtain solutions.

  \begin{prop} \label{prop:1-solutions}
    Let $p \ge \frac{n}{2} + 2$ and $\sigma > 0$.
    Then $H^{(1)} = h_{p,\sigma} \circ J_L^{(1)}$ is $C^1$ and given by the expression
	\begin{align*}
		H^{(1)}(q,p) &= \frac{1}{2} p\indices{_a_i} K^{ij}(q_a - q_b) p\indices{_b_j}\\
			&\quad + p\indices{_a_i^l} q\indices{_a^k_l} (\partial_k K^{ij})(q_a - q_b) p\indices{_b_j} \\
			&\quad -  \frac{1}{2} p\indices{_a_i^n} q\indices{_a^l_n} (\partial_l \partial_k K^{ij})(q_a - q_b) q\indices{_b^k_m} p\indices{_b_j^m}.
  \end{align*}
  If $(q,p)(t) \in T^*Q_N^{(1)}$ is a solution to Hamilton's equations with respect to
	$H^{(1)}$, then $J_L^{(1)}( (q,p)(t)) \in \mathfrak{X}_{\rm div}(\R^n)^*$ is
	a solution of \eqref{eq:MMDiff}, and $J_R^{(1)}( (q,p)(t)) \in \mathfrak{X}_{\rm div}(\R^n)^*$ is
	constant in time. 
  \end{prop}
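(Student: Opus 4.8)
The dynamical content of the proposition --- that $J_L^{(1)}$ carries solutions to \eqref{eq:MMDiff} and that $J_R^{(1)}$ is conserved --- is an immediate consequence of Theorem~\ref{thm:dual_pairs}, exactly as in the zeroth-order case of Proposition~\ref{prop:0-solutions}. Indeed, Proposition~\ref{prop:J1-dual-pair} asserts that $(J_L^{(1)}, J_R^{(1)})$ is a weak dual pair into $\mathfrak{X}_{\rm div}(\R^n)^*$, and by construction $H^{(1)} = h_{p,\sigma}\circ J_L^{(1)}$. Thus, setting $J_1 = J_L^{(1)}$, $J_2 = J_R^{(1)}$ and $h = h_{p,\sigma}$ in Theorem~\ref{thm:dual_pairs}, one concludes that $J_L^{(1)}((q,p)(t))$ solves Hamilton's equations on $\mathfrak{X}_{\rm div}(\R^n)^*$ with respect to $h_{p,\sigma}$ --- which by Proposition~\ref{prop:LPDiff} are precisely \eqref{eq:MMDiff} --- while $J_R^{(1)}((q,p)(t))$ remains constant. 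So the only genuine work is to derive the explicit formula for $H^{(1)}$ and to justify that it is $C^1$.

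To obtain the formula I would write $m = J_L^{(1)}(q,p)$ and use $H^{(1)}(q,p) = h_{p,\sigma}(m) = \tfrac12\langle m, u\rangle$ with $u = K * m$. First I would compute the velocity field generated by $m$: convolving the delta part $p\indices{_a_i}\dx^i\otimes\delta_{q_a^{(0)}}$ with $K$ reproduces the evaluation $K^{ij}(x - q_a^{(0)})\,p\indices{_a_j}$, while convolving the dipole part $-p\indices{_a_i^j}q\indices{_a^k_j}\dx^i\otimes\partial_k\delta_{q_a^{(0)}}$ produces a first derivative of $K$ via integration by parts against the delta-derivative. Substituting $u$ and its first derivatives into the defining pairing $\langle m, u\rangle = p\indices{_a_i}u^i(q_a^{(0)}) + p\indices{_a_i^j}\partial_k u^i(q_a^{(0)})\,q\indices{_a^k_j}$ then yields four terms: a pure $pp$ term, two mixed $p\,p^{(1)}$ terms, and a pure $p^{(1)}p^{(1)}$ term carrying $\partial\partial K$.

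The main obstacle is the bookkeeping at this step. The two mixed terms are not manifestly equal, and collapsing them into the single cross term $p\indices{_a_i^l}q\indices{_a^k_l}(\partial_k K^{ij})(q_a-q_b)\,p\indices{_b_j}$ requires relabeling the dummy particle indices $a\leftrightarrow b$ together with the symmetry of the Green's kernel. That symmetry, $K^{ij}(x) = K^{ji}(-x)$ --- valid because $K$ is the kernel of the self-adjoint scalar operator $(1-\tfrac{\sigma^2}{p}\Delta)^p$ --- also supplies the relations $(\partial_k K^{ij})(x) = -(\partial_k K^{ji})(-x)$ and $(\partial_l\partial_k K^{ij})(x) = (\partial_l\partial_k K^{ji})(-x)$ needed to normalize the quadratic form. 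Tracking these signs and the attendant factors of $\tfrac12$ is where errors are easiest to make; after symmetrization the four terms collapse to exactly the three displayed in the statement, with the stated $\tfrac12$ on the first and third.

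Finally, for the regularity claim I would argue by counting derivatives, deferring the sharp threshold to the remark after Theorem~\ref{thm:k-solutions} as in Proposition~\ref{prop:0-solutions}. In Fourier variables $\hat{K}(\xi)\sim (1+\tfrac{\sigma^2}{p}|\xi|^2)^{-p}$, so $K \in H^s$ for every $s < 2p - \tfrac n2$, and the Sobolev embedding $H^s \hookrightarrow C^r$ for $r < s - \tfrac n2$ gives $K \in C^r$ for $r < 2p - n$. Since $H^{(1)}$ involves at most the second derivatives $\partial_l\partial_k K$, its first derivative in $(q,p)$ involves $\partial^3 K$, so $H^{(1)}\in C^1$ (and, equally, the induced field $dh_{p,\sigma}(J_L^{(1)}(q,p)) = K * J_L^{(1)}(q,p)$ being a genuine differentiable vector field, so the Lie--Poisson side of Theorem~\ref{thm:dual_pairs} is well posed) is guaranteed once $K \in C^3$, i.e.\ once $2p - n > 3$, which holds under the hypothesis $p \ge \tfrac n2 + 2$.
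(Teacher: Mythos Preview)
Your proposal is correct and follows essentially the same approach as the paper: the paper's own proof simply says the argument is identical to that of Proposition~\ref{prop:0-solutions} (i.e.\ apply Theorem~\ref{thm:dual_pairs} via the weak dual pair of Proposition~\ref{prop:J1-dual-pair}) and defers the smoothness threshold to the remark after Theorem~\ref{thm:k-solutions}. You have additionally supplied the explicit computation of $H^{(1)}$ and the $C^1$ count --- details the paper omits --- and your use of $K^{ij}(x)=K^{ji}(-x)$ together with the $a\leftrightarrow b$ relabeling to merge the two mixed terms is exactly the right bookkeeping.
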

  The proof of the above proposition is identical to that of
  Proposition~\ref{prop:0-solutions}. See again the remark following
  Theorem~\ref{thm:k-solutions} for more details on the kernel
  smoothness condition on $p$.

  As before, it is useful to interpret the trajectory $q(t)$ of the previous proposition in terms of the curve $\varphi_t \in  \SDiff(\mathbb{R}^n)$ obtained by integrating the time-dependent vector field $dh_{p, \sigma} J_L^{(1)}((q, p)(t)) = K_{p, \sigma} * J_L^{(1)}((q, p)(t))$. If one chooses $\varphi_0$ to be the identity or  any other element of $\SDiff(\mathbb{R}^n)$ that satisfies $\varphi_0 \cdot (q(0)) = q(0)$, then  $q(t) = \varphi_t \cdot  q(0)$. This implies in particular that the $q_a^{(0)}(t)$ are the trajectories of the particles as they are swept along by the fluid flow.

  Various vector fields for large kernel smoothness $p$
  are depicted in Figure~\ref{fig:zoo}
  for different initial values of the traceless matrix
  $\mu\indices{_i^j} = p\indices{_i^l} q\indices{^j_l}$.

  \begin{figure}
  	\centering
        \includegraphics[width=0.3\textwidth]{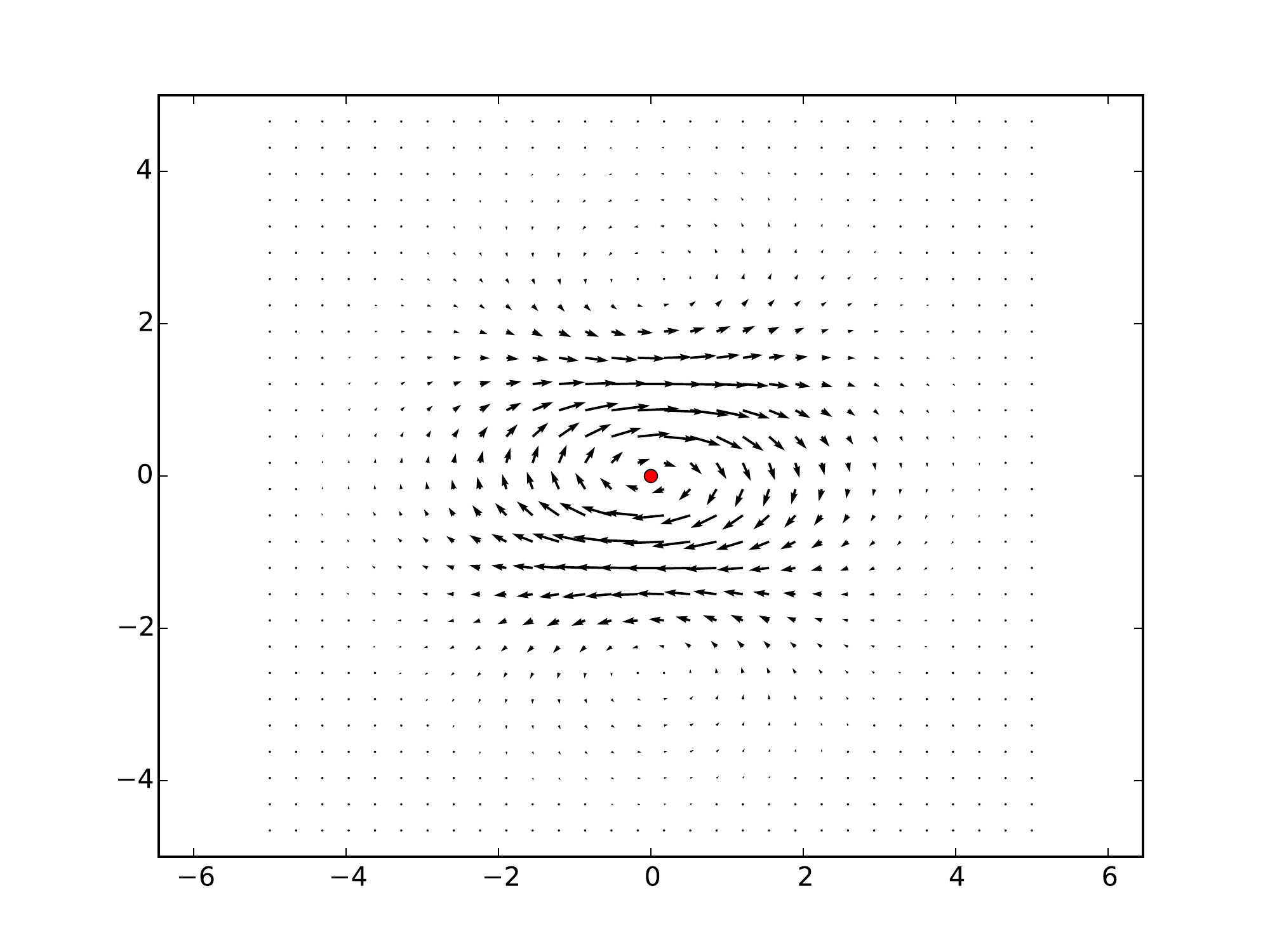}
        \includegraphics[width=0.3\textwidth]{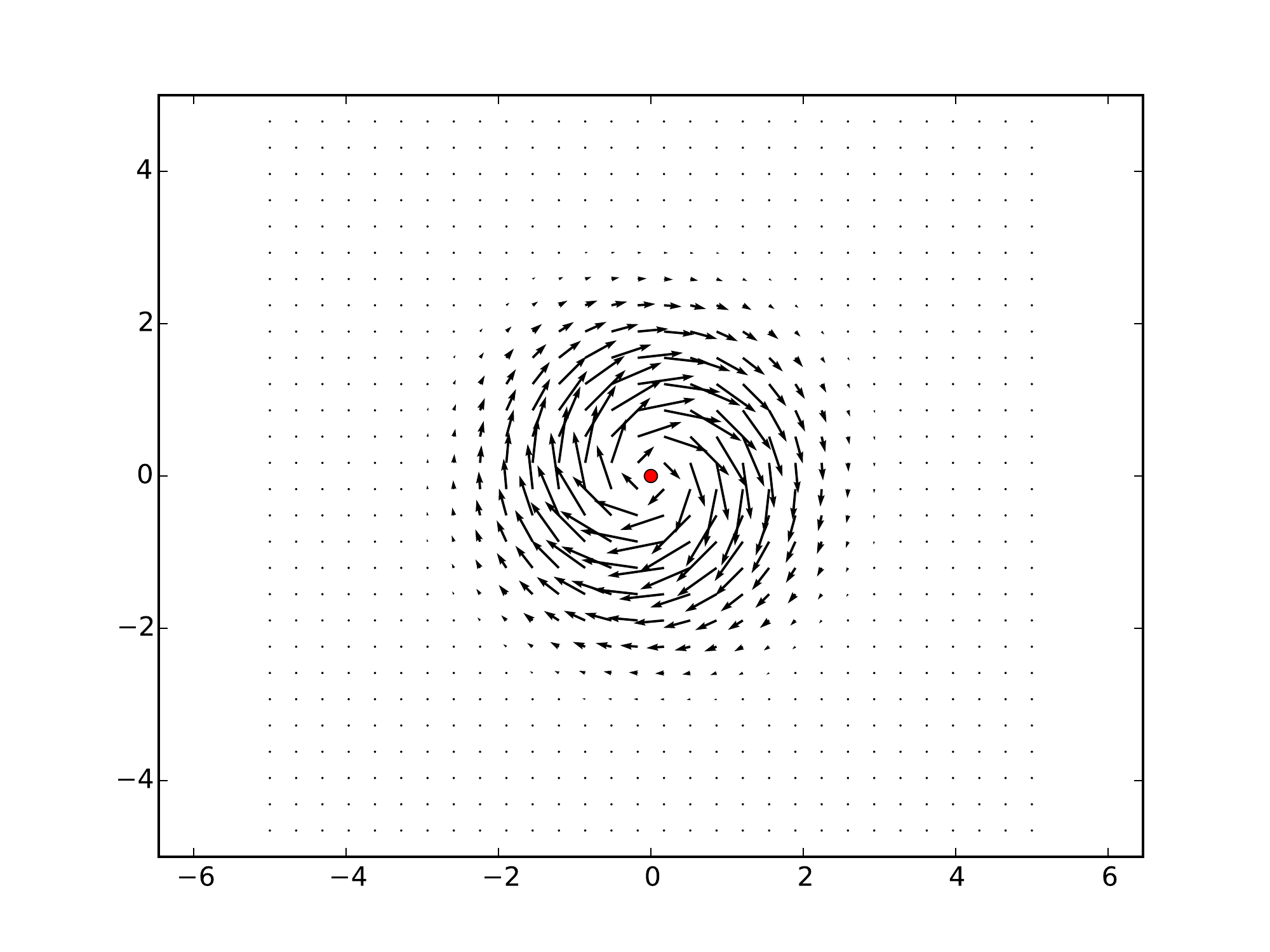}
        \includegraphics[width=0.3\textwidth]{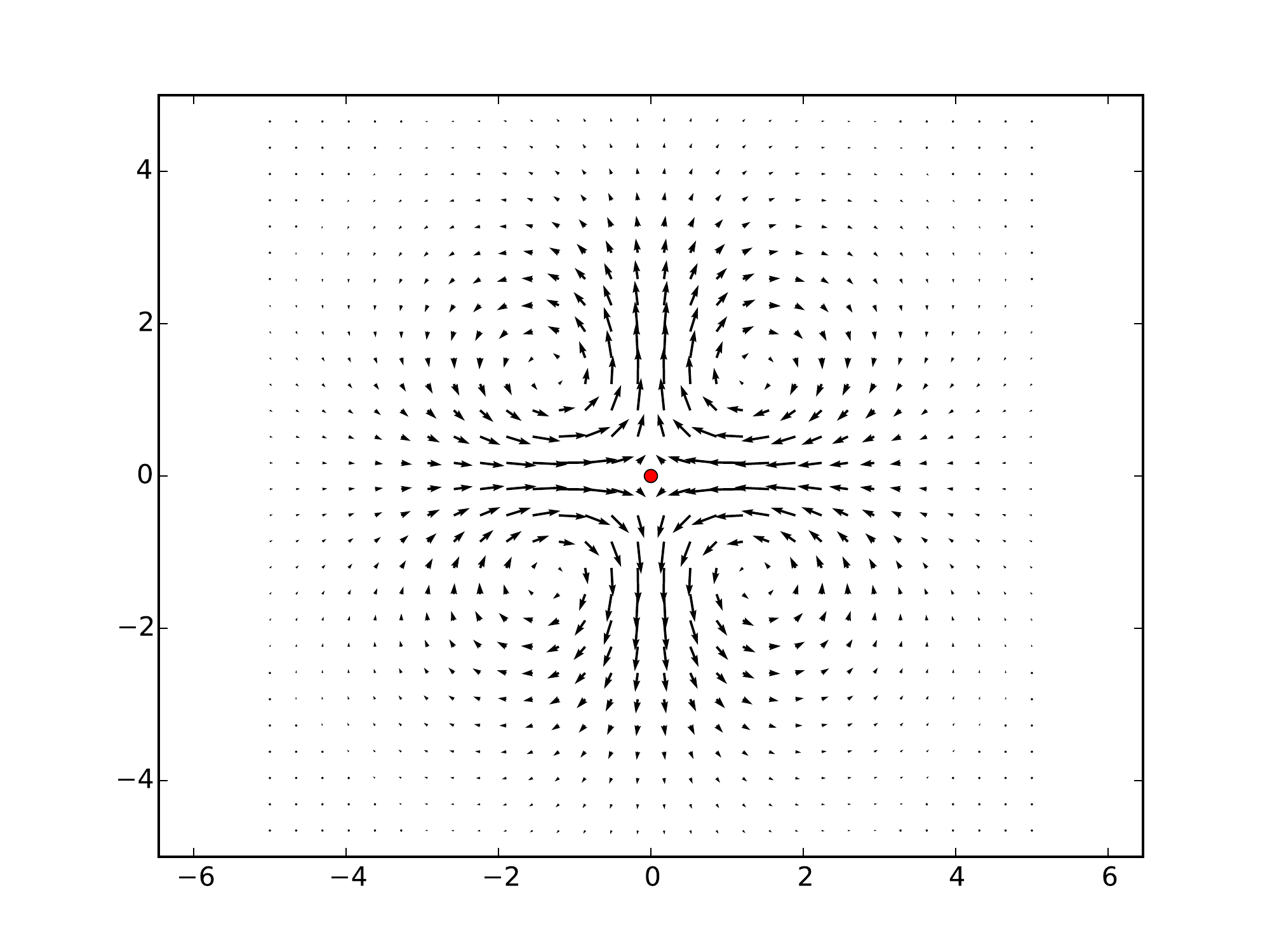}
        \\
        \includegraphics[width=0.4\textwidth]{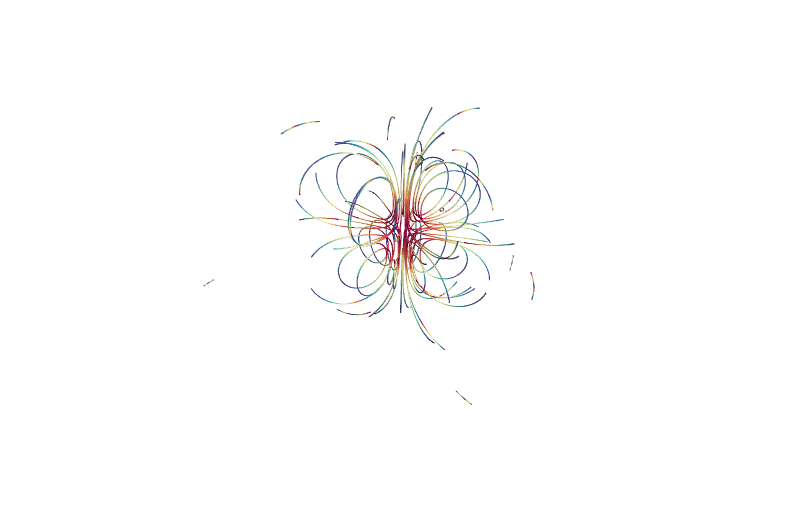}
        \includegraphics[width=0.4\textwidth]{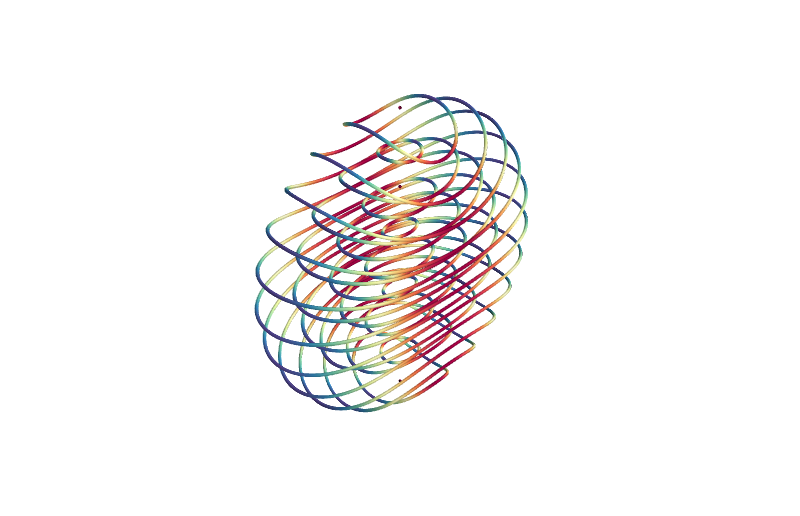}
        \caption{(\emph{top}) Quiver plots of 2D vector fields induced by first order jetlets. 
        (\emph{bottom})   Streamline plots of 3D vector fields induced by first order jetlets.}
        \label{fig:zoo}
  \end{figure}

  \subsection{Higher order particles}
  \label{sec:higher_order}
  In this section we will introduce a hierarchy of particle-like solutions whose $k$-th level includes the solutions at the $(k-1)$-th level.
  The zeroth level in the hierarchy consists of the standard particle-like solutions,
  while the first level describes the particles with internal $\SL(n)$ variables
  discussed in the previous section.
  The particles in the $k$-th level carry the coefficients of
  $k$-th order Taylor expansions of diffeomorphisms,
  or \emph{jets}.
  Therefore, we call these particles \emph{$k$-jetlets}.

  Let $\varphi \in \SDiff(\R^n)$.
  The zeroth order Taylor expansion of $\varphi$ about $0$
  is $\varphi(0)$, and the collection of such Taylor
  expansions is all of $\R^n$.
  The first order Taylor expansion of $\varphi$ about $0$ is
  \begin{align*}
    \varphi^i( x) = \varphi^i(0) + \partial_j \varphi^i(0) x^j + o( \|x\|).
  \end{align*}
  The tuple of coefficients $(\varphi(0) , D\varphi(0) ) \in \R^n \times \SL(n)$ is
  called the first order jet of $\varphi$ evaluated at $0$.
  Going further, the second order Taylor expansion of $\varphi$ about $0$
  is
  \begin{align*}
    \varphi^i(x) = \varphi^i(0) + \partial_j \varphi^i(0) x^j + 
    \frac{1}{2} \partial_{jk} \varphi^i(0) x^j x^k + o( \| x\|^2).
  \end{align*}
  We see that $D^2\varphi(0)$ is a tensor of rank $(1,2)$, which is
  symmetric in the lower indices.
  We call the space of such tensors $S^1_2$,
  and we see that the space of second order jets is a submanifold $Q^{(2)}_1 \subset \R^n \times \SL(n) \times S^1_2$.
  Finally, for $k \geq 2$ the space of $k$-th order jets is a submanifold
  \begin{align*}
    Q_1^{(k)} \subset \R^n \times \SL(n) \times S^1_2 \times S^1_3 \times \cdots\times S^1_k,
  \end{align*}
  where $S^1_j$ is the vector space of $(1,j)$-tensors which
  have been symmetrized in the covariant indices.
  The space $Q_1^{(k)}$ is equipped with the fiber bundle
  projection $\pi^{(k)} : Q_1^{(k)} \to \R^n$,
  which projects onto the $\R^n$ component.
  In fact, $Q_1^{(k)}$ is a trivial principle bundle
  where fibers, contained within $\SL(n) \times S^1_2 \times \cdots \times S^1_k$
  form a jet group \cite{JacobsRatiuDesbrun2013}.
  The jet group serves as a finite-dimensional model of the diffeomorphism group (see \cite[Chapter 4]{KMS99} for a description of the group multiplication),
  and this motivates our interpretation of jetlets as models of self-similarity.
  We define the space for an $N$-tuple of $k$-jetlets by taking a product
  \begin{align*}
    Q^{(k)}_N = \{ (q_1,\dots, q_N) \in Q_1^{(k)} \times \dots \times Q_1^{(k)}
    \mid \pi^{(k)}(q_a) \neq \pi^{(k)}(q_b) \text{ when } a \neq b \}.
  \end{align*}
  We coordinatize $Q^{(k)}_N$ as follows.
  We will use Greek indices to represent spatial
  multi-indices on $\R^n$ (see Appendix~\ref{sec:multi} for our multi-index convention).
  A typical coordinate on $Q^{(k)}_N$ will therefore look
  like $q\indices{_a^i_\beta}$ where $a \in \{1,\dots,N\}$, $i \in \{ 1 , \dots, n \}$
  and $\beta$ is a multi-index on $\R^n$.
  This coordinate is used to model the partial derivative of the $i$-th coordinate
  of a diffeomorphism at some point $z_a \in \R^n$, i.e.\ $\partial_\beta \varphi^i(z_a)$.
  For the statement of the next proposition, recall that the definition  of $\iso(z)$ was given earlier in \eqref{eq:isotropy_group}. Note also that we write $\Jet^k_x: \SDiff(\R^n) \to Q_1^{(k)}$ for
    the function which evaluates the spatial derivatives
    of a diffeomorphism up to order $k$ at the location $x \in \R^n$. In the subsequent sections we will also use the obvious generalization to multiple locations $z_1, \ldots, z_N \in \R^n$, which we will denote by $\Jet^k_z : \SDiff(\R^n) \to Q_N^{(k)}$.  

  \begin{prop}
    The group $\SDiff(\R^n)$ acts on $Q_N^{(k)}$ by a left Lie group
    action.  If $z_1,\dots,z_N \in \R^n$ are distinct points,
    then the isotropy group $\iso(z) \subset \SDiff(\R^n)$
    acts on $Q_N^{(k)}$ by a right Lie group action. \label{HL_action_prop}
  \end{prop}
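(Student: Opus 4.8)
The plan is to realize both actions as ordinary composition of diffeomorphism germs, pushed through the $k$-jet operation $\Jet^k$. Writing a point $q_a$ of the fiber of $Q_1^{(k)}$ over a base point $z_a$ as the $k$-jet $\Jet^k_{z_a}(\eta)$ of a germ of a volume-preserving diffeomorphism $\eta$ at $z_a$ --- so that $q_a^{(0)}=\eta(z_a)$ and $q\indices{_a^i_\beta}=\partial_\beta\eta^i(z_a)$ --- I would define the left action of $\psi\in\SDiff(\R^n)$ by $\psi\cdot q_a := \Jet^k_{z_a}(\psi\circ\eta)$ and the right action of $\psi\in\iso(z)$ by $q_a\cdot\psi := \Jet^k_{z_a}(\eta\circ\psi)$, extended diagonally over $a=1,\dots,N$. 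These recover \eqref{eq:first_order_action} and \eqref{1-rightaction} when $k=1$, and the germ-wise right action agrees with the representative form $(\varphi\cdot z)\cdot\psi=(\varphi\circ\psi)\cdot z$ used in the statement.

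First I would check well-definedness, i.e.\ that each formula lands in $Q_N^{(k)}$ and depends only on the listed data. The $k$-jet of a composite germ is computed by the chain rule (Fa\`a di Bruno), so $\Jet^k_{z_a}(\psi\circ\eta)$ is a polynomial in the entries of $\Jet^k_{z_a}(\eta)=q_a$ and of $\Jet^k_{q_a^{(0)}}(\psi)$; in particular it depends on $\psi$ only through its $k$-jet at the image point $q_a^{(0)}$, and on $\eta$ only through $q_a$. Since $\psi\circ\eta$ and $\eta\circ\psi$ are again volume-preserving germs, the determinant relations cutting out $Q_1^{(k)}\subset\R^n\times\SL(n)\times S^1_2\times\cdots\times S^1_k$ are preserved, and symmetry of mixed partials keeps the higher components in the spaces $S^1_j$, so the output is genuinely a point of $Q_1^{(k)}$. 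For the transversality constraint I would note that under the left action $q_a^{(0)}\mapsto\psi(q_a^{(0)})$, so distinctness of the $q_a^{(0)}$ survives because $\psi$ is injective; under the right action $\psi(z_a)=z_a$ forces $q_a^{(0)}=\eta(z_a)\mapsto\eta(\psi(z_a))=q_a^{(0)}$, so the image points are unchanged. This last computation is exactly the reason the right action must be restricted to $\iso(z)$: precomposition by a general $\psi$ would move the source point $z_a$ off itself and the construction would not close on $Q_N^{(k)}$.

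Next come the two action axioms, which I would deduce from functoriality of the jet operation, $\Jet^k_{z_a}(\psi_2\circ\psi_1)=\Jet^k_{\psi_1(z_a)}(\psi_2)\bullet\Jet^k_{z_a}(\psi_1)$ (associative composition of jets, denoted $\bullet$), together with associativity of composition of germs. For the left action, $\psi_2\cdot(\psi_1\cdot q_a)=\Jet^k_{z_a}(\psi_2\circ(\psi_1\circ\eta))=\Jet^k_{z_a}((\psi_2\circ\psi_1)\circ\eta)=(\psi_2\circ\psi_1)\cdot q_a$, and the identity germ acts trivially. For the right action the same bookkeeping gives $(q_a\cdot\psi_1)\cdot\psi_2=\Jet^k_{z_a}(\eta\circ(\psi_1\circ\psi_2))=q_a\cdot(\psi_1\circ\psi_2)$, the correct right-action identity since $\iso(z)$ is a subgroup of $\SDiff(\R^n)$ under composition; the chain-rule locality from the previous step also guarantees that the representative definition is independent of the choice of $\varphi$ with $\Jet^k_z(\varphi)=q$.

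Finally, I would verify that both actions are smooth, hence Lie group actions. The essential observation is that each action factors through the evaluation of finitely many $k$-jets: the left action is obtained by composing the smooth joint map $(\psi,q)\mapsto\Jet^k_{q^{(0)}}(\psi)$ with the polynomial jet-multiplication map, and the right action factors analogously through $(\psi,q)\mapsto\Jet^k_{z}(\psi)$ on $\iso(z)$. Since jet-multiplication is polynomial in finitely many coordinates and the jet-evaluation maps are smooth for the $H^\infty$ structure on $\SDiff(\R^n)$ and its closed subgroup $\iso(z)=\mathrm{ev}_z^{-1}(z)$, smoothness follows. I expect the genuine work to be concentrated here: making the infinite-dimensional smoothness precise requires care with the (regular/convenient) Lie group structure on $\SDiff(\R^n)$ and with the smoothness of jet evaluation, whereas the algebraic content --- well-definedness, the group axioms, and the necessity of restricting to $\iso(z)$ --- is a formal consequence of the chain rule and functoriality of $\Jet^k$.
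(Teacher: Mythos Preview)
Your proposal is correct and follows essentially the same approach as the paper: both define the actions via $\Jet^k_z$ applied to composition with a representative diffeomorphism and invoke the chain rule (Fa\`a di Bruno) to see that the result depends only on the $k$-jet data. Your write-up is actually more thorough than the paper's proof, which focuses on the explicit Fa\`a di Bruno coordinate formula and well-definedness but does not separately verify the group-action axioms, the transversality constraint, or smoothness; you address all of these, and your explanation of why the right action must be restricted to $\iso(z)$ is a nice addition.
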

  \begin{proof}
    Let $\varphi_1,\varphi_2 \in \SDiff(\R^n)$.
    The partial derivatives of $(\varphi_1 \circ \varphi_2)$
    are given by the Fa\`a di Bruno formula
    \begin{align*}
      \partial_\alpha( \varphi_1 \circ \varphi_2) = \sum_{k=1}^{|\alpha|}
      \left(
        \sum_{
        \substack{
          j_1,\dots,j_k \in \{1,\dots,n\} \\
          [\gamma_1,\dots,\gamma_k] \in \Pi(\alpha , k)
          }
          }
          \partial_{j_1 \cdots j_k} \varphi_1(\varphi_2(x))
          \partial_{\gamma_1} \varphi_2^{j_1} \cdots
          \partial_{\gamma_k} \varphi_2^{j_k}
        \right),
    \end{align*}
where we wrote $\Pi(\alpha, k)$ for the set of $k$-th order partitions of a multi-index $\alpha$. We refer to Appendix~\ref{sec:multi} for the details of our index conventions and to \cite{ConstantineSavits1996,Jacobs2014b} for a
    precise description of the multivariate Fa\`a di Bruno formula.
    One can read off   from the expression that
    a $k$-th order derivative only depends on $k$-th 
    and lower order
    partial derivatives of $\varphi_1$ and $\varphi_2$.
        A left $\SDiff(\R^n)$ action is induced on $Q_1^{(k)}$ by
    setting $\varphi \cdot q = \Jet^k_z( \varphi \circ \psi)$
    for any $\psi$ such that $\Jet^k_z(\psi) = q$.
    That this is independent of the choice of $\psi$ follows
    from observing that the Fa\`a di Bruno formula
    only uses data in $q = \Jet^k_z(\psi)$ and nothing more.
    In local coordinates, this action takes the form
    \begin{align*}
    (\varphi \cdot q)\indices{^i_\alpha} =
     \sum_{k=1}^{|\alpha|}
      \left(
        \sum_{
        \substack{
          j_1,\dots,j_k \in \{1,\dots,n\} \\
          [\gamma_1,\dots,\gamma_k] \in \Pi(\alpha , k)
          }
          }
          \partial_{j_1 \cdots j_k} \varphi^i(q^{(0)})
          q\indices{^{j_1}_{\gamma_1}} \cdots
          q\indices{^{j_k}_{\gamma_k}}
        \right),
    \end{align*}
    except for the component where $|\alpha | = 0$ in which case we observe $(\varphi \cdot q)^i = \varphi^i(q^{(0)})$.
    By the same construction, a right $\iso(z)$ action is induced
    on $Q_1^{(k)}$ by setting $q \cdot \varphi = \Jet^k_z(\psi \circ \varphi)$ for any $\psi$ such that $\Jet^k_z(\psi) = q$.
    We can choose distinct points $z_1,\dots,z_N \in \R^n$
    and apply the same process to $Q_N^{(k)}$.  In this case we observe the action to be
    \begin{align*}
      (q \cdot \varphi)\indices{^{ai}_\alpha} = \sum_{k=1}^{|\alpha|}
      \left(
        \sum_{
        \substack{
          j_1,\dots,j_k \in \{1,\dots,n\} \\
          [\gamma_1,\dots,\gamma_k] \in \Pi(\alpha , k)
          }
          }
          q\indices{^{ai}_{[j_1,\dots,j_k]}}
          \partial_{\gamma_1} \varphi^{j_1}(z_a) \cdots
          \partial_{\gamma_k} \varphi^{j_k}(z_a)
        \right).
    \end{align*}
  \end{proof}
  
  Just as in the previous sections, the actions of $\SDiff(\R^n)$ and $\iso(z)$, which commute, lift to actions on $T^*Q_N^{(k)}$.
  The associated momentum maps are given implicitly by how they act on the respective Lie algebras.
  In particular, the left action of $\SDiff(\R^n)$ yields the momentum map
  \begin{equation*}
  	\langle J_L^{(k)}(q,p) , u \rangle =
	\sum_{a=1}^N \left( p_{am} u^m(q^{(0)}_a) +
    \sum_{\substack{ |\alpha| \leq k \\ 1 \leq \ell \leq |\alpha| }} \;
		\sum_{
			\substack{
				j_1,\dots,j_\ell \in \{1,\dots,n\} \\
				[\gamma_1,\dots,\gamma_\ell] \in \Pi(\alpha,\ell)
				}
			}
			p\indices{_a_m^\alpha} \partial_{j_1 \cdots j_\ell}u^m (q^{(0)}_a)
      q\indices{_a^{j_1}_{\gamma_1}} \cdots
      q\indices{_a^{j_\ell}_{\gamma_\ell}}
			\right)
  \end{equation*}
  for arbitrary divergence free vector fields $u \in \mathfrak{X}_{\rm div}(\R^n)$.
  Equivalently, we may define $J_L^{(k)}$ as the unique map such that
  \begin{equation}\label{eq:moma_left}
    \langle J_L^{(k)}(q,p) , u \rangle = \langle (q,p) , \Jet_z^{k}( u \circ \varphi) \rangle
  \end{equation}
  for any $\varphi \in \SDiff(\R^n)$ whose $k$-jet is given by $q$ for any $(q,p) \in T^*Q^{(k)}_N$.

  The right action of $\iso(z)$ yields the momentum map $J_R^{(k)}$ defined implicitly by the relation
  \begin{equation*}
    \langle J_R^{(k)}(q,p) , w \rangle = \sum_{a=1}^N
    \sum_{\substack{ |\alpha| \leq k \\ 1 \leq \ell \leq |\alpha| }} \;
		\sum_{
			\substack{
				j_1,\dots,j_\ell \in \{1,\dots,n\} \\
				[\gamma_1,\dots,\gamma_\ell] \in \Pi(\alpha,\ell)
				}
			}
			p\indices{_a_i^\alpha} q\indices{_a^{i}_{[j_1 \cdots j_\ell]}}
      \sum_{m=1}^\ell  \partial_{\gamma_m} w^{j_m}(z_a) \left( \prod_{n \neq m}  \delta_{\gamma_n}^{[j_n]} \right),
  \end{equation*}
  where $w \in \mathfrak{X}_{\rm div}(\R^n)$ is such that $w(z) = 0$ (this describes the Lie algebra of $\iso(z)$),
  and $\delta_\alpha^\beta$ is the natural generalization of the Kronecker-delta symbol to multi-indices.
  Equivalently, we may define $J_R^{(k)}$
  as the unique map such that
 \begin{align}
  \langle J_R^{(k)}(q,p) , w \rangle = \langle (q,p) , \Jet_z^{k}( T\varphi \cdot w) \rangle, \label{Moma_right}
   \end{align}
  for any $\varphi \in \SDiff(\R^n)$ whose $k$-jet is given by $q$ for any $(q,p) \in T^*Q^{(k)}_N$. Here, we wrote $T\varphi \cdot w$ for the function obtained by applying the differential of $\varphi$ to $w$.
  
  We can write $J_L^{(k)}$ and $J_R^{(k)}$ explicitly, using the Dirac-delta distribution, as
  \begin{multline*}\label{eq:JLk}
    J_L^{(k)}(q,p) =
    \sum_{a=1}^N \Bigg(
    p_{am} \dx^m \otimes \delta_{q^{(0)}_a} \\
  + \sum_{\substack{ |\alpha| \leq k \\ 1 \leq \ell \leq |\alpha| }}
    (-1)^\ell p\indices{_{am}^\alpha} \dx^m \otimes
      \sum_{\substack{
          j_1,\dots,j_\ell \in \{1,\dots,n\} \\
          [\gamma_1,\dots,\gamma_\ell] \in \Pi( \alpha,\ell)
          }
        }
      q\indices{_a^{j_1}_{\gamma_1}} \cdots q\indices{_a^{j_\ell}_{\gamma_\ell}}
      \partial_{j_1 \cdots j_\ell} \delta_{q^{(0)}_a}
      \Bigg)
  \end{multline*}
  and
  \begin{equation}\label{eq:JRk}
    J_R^{(k)}(q,p) = \sum_{a=1}^{N}
    \sum_{\substack{ |\alpha| \leq k \\ 1 \leq \ell \leq |\alpha| }} \;
		\sum_{
			\substack{
				j_1,\dots,j_\ell \in \{1,\dots,n\} \\
				[\gamma_1,\dots,\gamma_\ell] \in \Pi(\alpha,\ell)
				}
			}
			p\indices{_a_i^\alpha} q\indices{_a^i_{[j_1 \cdots j_\ell]}}
      \sum_{m=1}^\ell (-1)^{|\gamma_m|}\left( \prod_{n \neq m}  \delta_{\gamma_n}^{[j_n]} \right) \dz^{j_m} \otimes \partial_{\gamma_m}\delta_{z_a}.
  \end{equation}

  \begin{prop} \label{prop:k dual pairs}
  	The maps $J_L^{(k)}$ and $J_R^{(k)}$ form a weak dual pair.
  \end{prop}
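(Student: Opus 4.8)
The plan is to deduce the weak dual pair property from the general principle that two commuting symplectic actions with equivariant momentum maps automatically yield a weak dual pair; this reduces the proposition to verifying a short list of structural hypotheses rather than any explicit bracket computation. Recall (Definition~\ref{def:weak_dual_pair}) that $J_L^{(k)}$ and $J_R^{(k)}$ form a weak dual pair precisely when both are Poisson maps and their kernels are symplectically orthogonal in the one-sided sense $(\ker TJ_L^{(k)})^\omega \subseteq \ker TJ_R^{(k)}$, where $\omega$ is the canonical symplectic form on $T^*Q_N^{(k)}$; equivalently, the pullbacks satisfy $\{ (J_L^{(k)})^* f, (J_R^{(k)})^* g \}_{\rm can} = 0$ for all $f,g$. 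The weakness is exactly the absence of the reverse inclusion, which would amount to the left $\SDiff(\R^n)$-orbits exhausting the level sets of $J_R^{(k)}$ (transitivity); this fails here, but the retained inclusion is all that a weak dual pair demands.

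First I would record that the two actions on $Q_N^{(k)}$ commute. Writing each $q \in Q_N^{(k)}$ as $\Jet^k_z(\psi)$ for some representative $\psi \in \SDiff(\R^n)$, the left action is $\varphi \cdot q = \Jet^k_z(\varphi \circ \psi)$ and the right action is $q \cdot \chi = \Jet^k_z(\psi \circ \chi)$ for $\chi \in \iso(z)$ (both well-defined on jets by the Fa\`a di Bruno formula used in Proposition~\ref{HL_action_prop}, and the restriction to $\iso(z)$ on the right is what keeps the base point $z$ fixed). Associativity of composition gives $(\varphi \cdot q) \cdot \chi = \Jet^k_z(\varphi \circ \psi \circ \chi) = \varphi \cdot (q \cdot \chi)$, so the actions commute. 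Since the cotangent lift is functorial, the lifted actions on the symplectic manifold $T^*Q_N^{(k)}$ are symplectic and also commute.

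Next I would invoke that $J_L^{(k)}$ and $J_R^{(k)}$, being cotangent-lift momentum maps for these actions (the defining relations \eqref{eq:moma_left} and \eqref{Moma_right}), are equivariant and therefore Poisson maps, by the same cotangent-lift result used earlier (\cite[Theorem~12.4.9]{MandS}). Commutativity of the two actions together with equivariance then forces each momentum map to be invariant under the other group's action: the left $\SDiff(\R^n)$-action preserves $J_R^{(k)}$, so by Noether's identity (up to sign conventions) $\{ (J_L^{(k)})_\xi, (J_R^{(k)})_\eta \}_{\rm can} = \xi_{T^*Q_N^{(k)}}[(J_R^{(k)})_\eta] = 0$, where $\xi_{T^*Q_N^{(k)}}$ is the infinitesimal generator of the left action. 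This is exactly the Poisson-commuting of the pullbacks, hence $(\ker TJ_L^{(k)})^\omega \subseteq \ker TJ_R^{(k)}$, which is the weak dual pair condition. This is the same mechanism underlying the full dual pair of the rigid body, specialized to the situation where transitivity is lost; one may alternatively quote the commuting-actions criterion of \cite{GayBalmazVizman2012}.

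The step I expect to require the most care is the functional-analytic bookkeeping behind the implication ``equivariant $\Rightarrow$ Poisson'' in this infinite-dimensional target. The maps land in the weak dual $\mathfrak{X}_{\rm div}(\R^n)^*$, and $J_R^{(k)}$ is properly the momentum map for $\iso(z)$, whose Lie algebra consists of divergence-free fields vanishing at $z$; one must check that the expression \eqref{eq:JRk} is insensitive to the value $w(z_a)$ (each block $\gamma_m$ of a partition is nonempty, so it pairs only against derivatives $\partial_{\gamma_m} w$ with $|\gamma_m| \geq 1$), so that it extends unambiguously to all of $\mathfrak{X}_{\rm div}(\R^n)$ and the Noether argument above is literally meaningful. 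Granting the cotangent-lift momentum-map machinery at the level of rigor adopted throughout the paper, no further estimates are needed, and the conclusion follows uniformly in $k$ and $N$.
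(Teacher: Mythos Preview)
Your proposal is correct and follows essentially the same approach as the paper: both arguments reduce the weak dual pair property to the commuting-actions criterion of \cite{GayBalmazVizman2012}, invoking equivariance of cotangent-lift momentum maps and the commutativity of the left $\SDiff(\R^n)$ and right $\iso(z)$ actions on $Q_N^{(k)}$. The only difference is presentational: the paper explicitly computes that $J_L^{(k)}$ is invariant under the cotangent-lifted right $\iso(z)$ action (using the representation $\Jet_z^k(u \circ \varphi)$), whereas you deduce the same invariance abstractly from the commutativity of the actions and then pass directly to the vanishing Poisson bracket via Noether's identity.
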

  \begin{proof}
  	 The result is a direct application of \cite[Corollary 2.8]{GayBalmazVizman2012}. To make the exposition more self-contained, we provide some details. By \cite[Corollary 2.6]{GayBalmazVizman2012} we need only show that $J_L$ and $J_R$ are equivariant, and that $J_L$ is invariant under the right action of $\iso(z)$.
	Equivariance follows from the fact that $J_R$ and $J_L$ are derived from cotangent lifted group actions \cite[Corollary 4.2.11]{FOM}.
	So we need only illustrate that $J_L$ is (right) $\iso(z)$ invariant. This can be seen as a consequence of the commutativity of the left and right actions on $Q^{(k)}$.  
	For notational clarity, let us denote this right action by $\rho : Q^{(k)} \times \iso(z) \to Q^{(k)}$. 
	Explicitly, any element $q \in Q^{(k)}$ is expressible as the $k$-jet of some diffeomorphism $\varphi$,
	and $\rho( q , \psi) = \Jet_z^{k}( \varphi \circ \psi)$.
	
	Similarly, any element of the tangent fiber $T_\varphi \SDiff(\R^n)$ may be written as a composition $u \circ \varphi$ for some $u \in \mathfrak{X}_{\rm div}(\R^n)$.
	Elements of $TQ^{(k)}$ are of the form $\Jet_z^{k}( u \circ \varphi)$ for $u \in \mathfrak{X}_{\rm div}(\R^n)$ and $\varphi \in \SDiff(\R^n)$.
	Given this representation, the tangent lift of the action $\rho$, also denoted $\rho: TQ^{(k)} \times \iso(z) \to TQ^{(k)}$, is given by
	\begin{align}
		\rho( (q,v)  , \psi ) = \Jet_z^{k}( u \circ \varphi \circ \psi ) \label{eq:tangent action}
	\end{align}
	for $(q,v) \in TQ^{(k)}$ and where $u \in \mathfrak{X}_{\rm div}(\R^n)$ and $\varphi \in \SDiff( \R^n)$ are arbitrary up to the constraint $(q,v) = \Jet_z^k( u \circ \varphi)$.
	The cotangent lifted action is a left action, $\rho^* : \iso(z) \times T^*Q^{(k)} \to T^*Q^{(k)}$, defined implicitly by the condition
	\begin{align}
		\langle \rho^*( \psi , (q,p) ) , \rho( (q,v) , \psi^{-1} ) \rangle = \langle (q,p) , (q,v) \rangle \label{eq:dual action}
	\end{align}
	for all $(q,p) \in T^*Q^{(k)}$ and $(q,v) \in TQ^{(k)}$.
	This action is equivalent to the one defined in the discussion preceding \cite[Corollary 4.2.11]{FOM}.
	In particular, $\rho^*( \psi , (q,p) )$ is a covector over the point $\tilde{q} = \rho( q , \psi^{-1})$.
	By the definition of $J_L$ we observe
	\begin{align*}
		\langle J_L( \rho^*( \psi , (q,p) ) ) , u \rangle = \langle  \rho^*( \psi ,  (q,p) ) , \Jet^{k}_z ( u \circ \bar{\varphi} ) \rangle, 
	\end{align*}
	where $\bar{\varphi} \in \SDiff(\R^n)$ is any diffeomorphism such that $\Jet_z^{k}( \bar{\varphi} ) = \rho( q , \psi^{-1} )$.
	If we let $\varphi$ be such that $\Jet^k_z( \varphi ) = q$ then we can simply choose $\bar{\varphi} = \varphi \circ \psi^{-1}$.
	Thus we get
	\begin{align*}
			\langle J_L( \rho^*( \psi , (q,p) ) ) , u \rangle &= \langle  \rho^*( \psi ,  (q,p) ) , \Jet^k_z ( u \circ \varphi \circ \psi^{-1} ) \rangle \\
			&= \langle  \rho^*( \psi ,  (q,p) )  ,  \rho( \Jet_z^{k}(u \circ \varphi) , \psi^{-1} ) \rangle \\
			&= \langle (q,p) , \Jet_z^{k}( u \circ \varphi) \rangle \\
			&= \langle J_L(q,p) , u \rangle, 
	\end{align*}
	using~\eqref{eq:tangent action} and~\eqref{eq:dual action} in the second and third equality.
	Thus, we see that $J_L$ is invariant under the right action of $\iso(z)$ on $Q^{(k)}$ and the result follows.
  \end{proof}
  
  In the case of $k=1$ we obtain the weak dual pair of the previous section and Proposition \ref{prop:J1-dual-pair} is a corollary of Proposition \ref{prop:k dual pairs}.
  Proposition \ref{prop:k dual pairs} gives us the final result on
  jetlet parametrized solutions.
  \begin{thm}\label{thm:k-solutions}
    Let $p \ge \frac{n}{2} + k + 1$ and $\sigma > 0$.
    Then $H^{(k)} = h_{p,\sigma} \circ J_L^{(k)}$ is $C^1$.
    Let $x(t)$ be a solution to Hamilton's equations on
    $T^*Q^{(k)}_N$, then $J_L^{(k)}( x(t))$ is a solution to Hamilton's
    equations on $\mathfrak{X}_{\rm div}(\R^n)^*$
    and $J_R^{(k)}( x(t))$ is constant in time.
  \end{thm}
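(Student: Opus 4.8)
The plan is to split the statement into its genuinely analytic part --- that $H^{(k)}$ is $C^1$ --- and its dynamical part, which I expect to be a formal consequence of the machinery already assembled. For the dynamical conclusions I would simply reproduce the argument used for Propositions~\ref{prop:0-solutions} and~\ref{prop:1-solutions}: since Proposition~\ref{prop:k dual pairs} records that $(J_L^{(k)}, J_R^{(k)})$ is a weak dual pair and $H^{(k)} = h_{p,\sigma} \circ J_L^{(k)}$, I would invoke Theorem~\ref{thm:dual_pairs} with $S = T^*Q_N^{(k)}$, $P_1 = P_2 = \mathfrak{X}_{\rm div}(\R^n)^*$, $J_1 = J_L^{(k)}$, $J_2 = J_R^{(k)}$ and $h = h_{p,\sigma}$. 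This yields at once that $J_L^{(k)}(x(t))$ solves Hamilton's equation for $h_{p,\sigma}$ on $\mathfrak{X}_{\rm div}(\R^n)^*$ --- i.e.\ the Lie--Poisson form~\eqref{eq:MMDiff}, by Proposition~\ref{prop:LPDiff} --- while $J_R^{(k)}(x(t))$ is constant. All the real work therefore lies in justifying the smoothness hypothesis $p \ge \tfrac{n}{2} + k + 1$, and this is where I would concentrate.

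First I would write $H^{(k)}$ out explicitly, generalizing the $k = 1$ formula of Proposition~\ref{prop:1-solutions}. Substituting the measure-valued expression for $J_L^{(k)}(q,p)$ into $h_{p,\sigma}(m) = \tfrac12 \langle m, K * m\rangle$ and using $\langle \partial_\alpha \delta_{q}, f\rangle = (-1)^{|\alpha|}\partial_\alpha f(q)$, one sees that the convolution and the subsequent pairing each differentiate against a Dirac jet of order at most $k$. Hence $H^{(k)}$ is a finite sum of terms of the form
\[
  C(p,q^{(1)},\dots,q^{(k)})\,(\partial_\mu K^{ij})(q^{(0)}_a - q^{(0)}_b), \qquad |\mu| \le 2k,
\]
where the coefficient $C$ is a polynomial --- hence $C^\infty$ --- in the momenta $p\indices{_a_m^\alpha}$ and the jet variables $q\indices{_a^i_\gamma}$. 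On $Q_N^{(k)}$ the base points satisfy $q^{(0)}_a \ne q^{(0)}_b$ for $a \ne b$, so the arguments of $K$ avoid the origin and the only obstruction to $C^1$-regularity of $H^{(k)}$ is the smoothness of the kernel $K = K_{p,\sigma}$ itself.

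Next I would pin down the kernel regularity. Since momenta and higher jet variables enter polynomially, only differentiation in a base coordinate $q^{(0)}_a$ can raise the order of the $K$-derivative, taking it to $2k+1$; thus it suffices to prove $K \in C^{2k+1}(\R^n)$. This is a Fourier/Sobolev estimate: the defining identity for $K$ gives the scalar symbol $\widehat{K}(\xi) = \big(1 + \tfrac{\sigma^2}{p}|\xi|^2\big)^{-p}$, so $\int_{\R^n}(1+|\xi|^2)^s|\widehat{K}(\xi)|^2\,d\xi$ converges exactly when $s < 2p - \tfrac{n}{2}$, i.e.\ $K \in H^s$ for all such $s$. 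The embedding $H^s \hookrightarrow C^r$ for $s > r + \tfrac{n}{2}$ then gives $K \in C^r$ whenever $r < 2p - n$. The hypothesis $p \ge \tfrac{n}{2} + k + 1$ forces $2p - n \ge 2k + 2 > 2k + 1$, so $K \in C^{2k+1}$ and therefore $H^{(k)} \in C^1$; the same smoothness makes $dh_{p,\sigma}(J_L^{(k)}(q,p)) = K * J_L^{(k)}(q,p)$ a legitimate velocity field, so that the advection in~\eqref{eq:MMDiff} is classically defined along the particle data. (For $k = 0, 1$ this recovers the thresholds $p \ge n/2 + 1$ and $p \ge n/2 + 2$ used above.)

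The step I expect to be the main obstacle is the derivative bookkeeping in the explicit computation of $H^{(k)}$: one must verify, from the Fa\`a di Bruno structure of $J_L^{(k)}$ displayed before Proposition~\ref{prop:k dual pairs}, that no term carries more than $2k$ derivatives of $K$ --- the worst case being the double top-order pairing in which both legs contribute jets of order $k$, producing $|\mu| = 2k$ --- and that the finitely many jet coefficients multiplying each such $\partial_\mu K$ are genuinely polynomial and so smooth. Once this count is secured, matching it against the sharp Sobolev exponent $2p - n$ is immediate, and the dynamical conclusions reduce, exactly as at the lower levels of the hierarchy, to a single application of Theorem~\ref{thm:dual_pairs}.
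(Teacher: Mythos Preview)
Your proposal is correct and matches the paper's approach: the dynamical conclusions are obtained exactly as you say, by applying Theorem~\ref{thm:dual_pairs} to the weak dual pair of Proposition~\ref{prop:k dual pairs}, and the $C^1$ regularity is deduced from the count that $H^{(k)}$ involves at most $2k$ derivatives of $K$ (hence $2k+1$ for its differential) together with a Fourier estimate giving $K_{p,\sigma}\in C^{2p-n-1}$. The only cosmetic difference is that the paper obtains this regularity by noting that the Fourier transforms of the derivatives of $K_{p,\sigma}$ are integrable (an $L^1$ argument via Fourier inversion), whereas you route through $H^s$ and Sobolev embedding; both yield the same threshold $2p-n-1\ge 2k+1$ under $p\ge \tfrac{n}{2}+k+1$.
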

  Remark that $K_{p,\sigma}$ has smoothness $C^{2p-n-1}$ by considering
  that the Fourier representations of its derivatives are integrable,
  while we need $K_{p,\sigma} \in C^{2k+1}$ to allow composition with
  $k$-th derivatives of delta distributions on each side and still
  obtain a $C^1$ Hamiltonian.

  As before, an even richer family of velocity fields is generated by a single particle of this type.
  At order $k=2$ this yields four new varieties of velocity fields per particle.
  Two examples of such velocity fields are depicted in Figure~\ref{fig:2_jet}.
  
  \begin{figure}[h!]
  	\centering
	\includegraphics[width=0.4\textwidth]{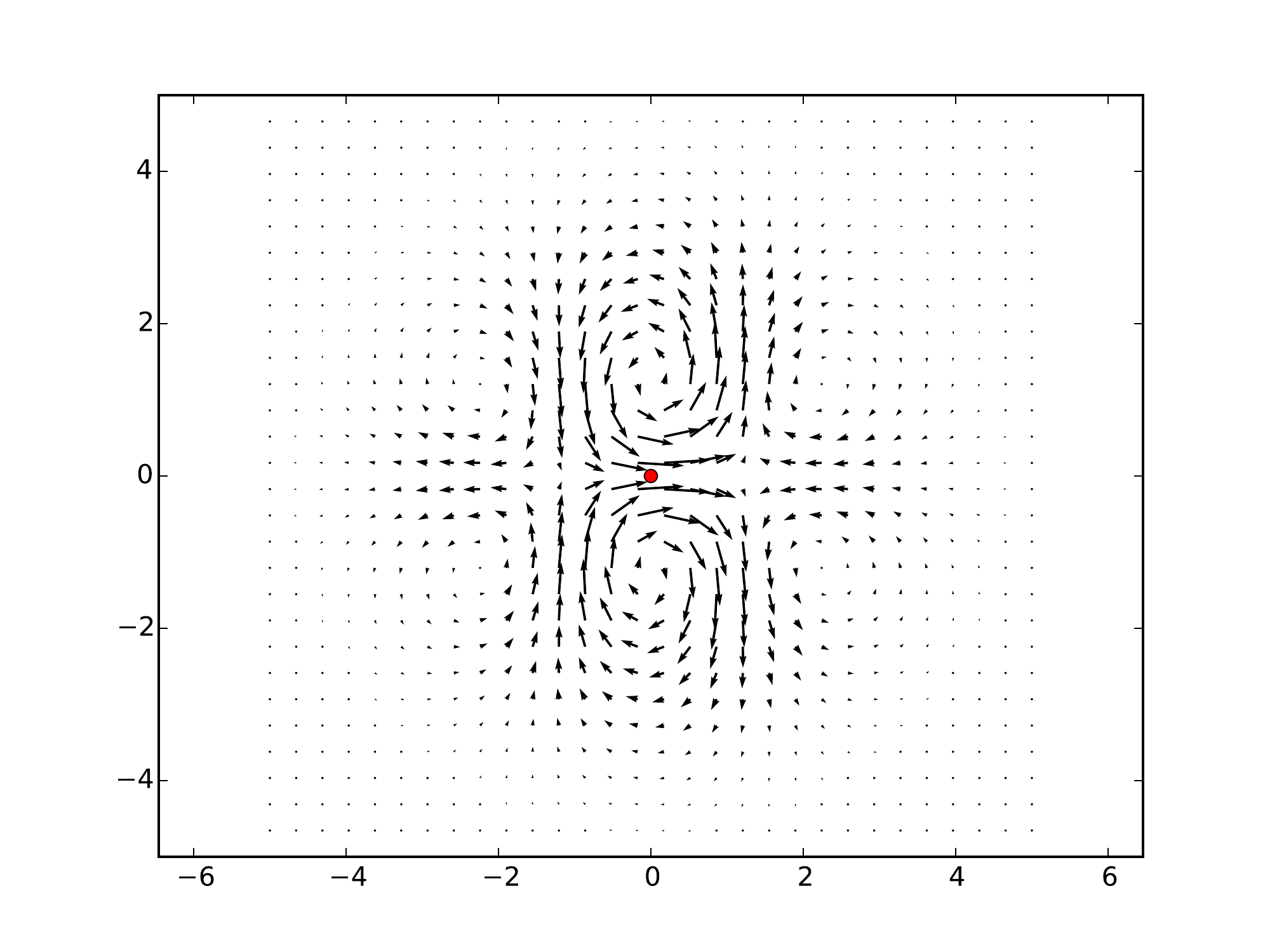}
	\includegraphics[width=0.4\textwidth]{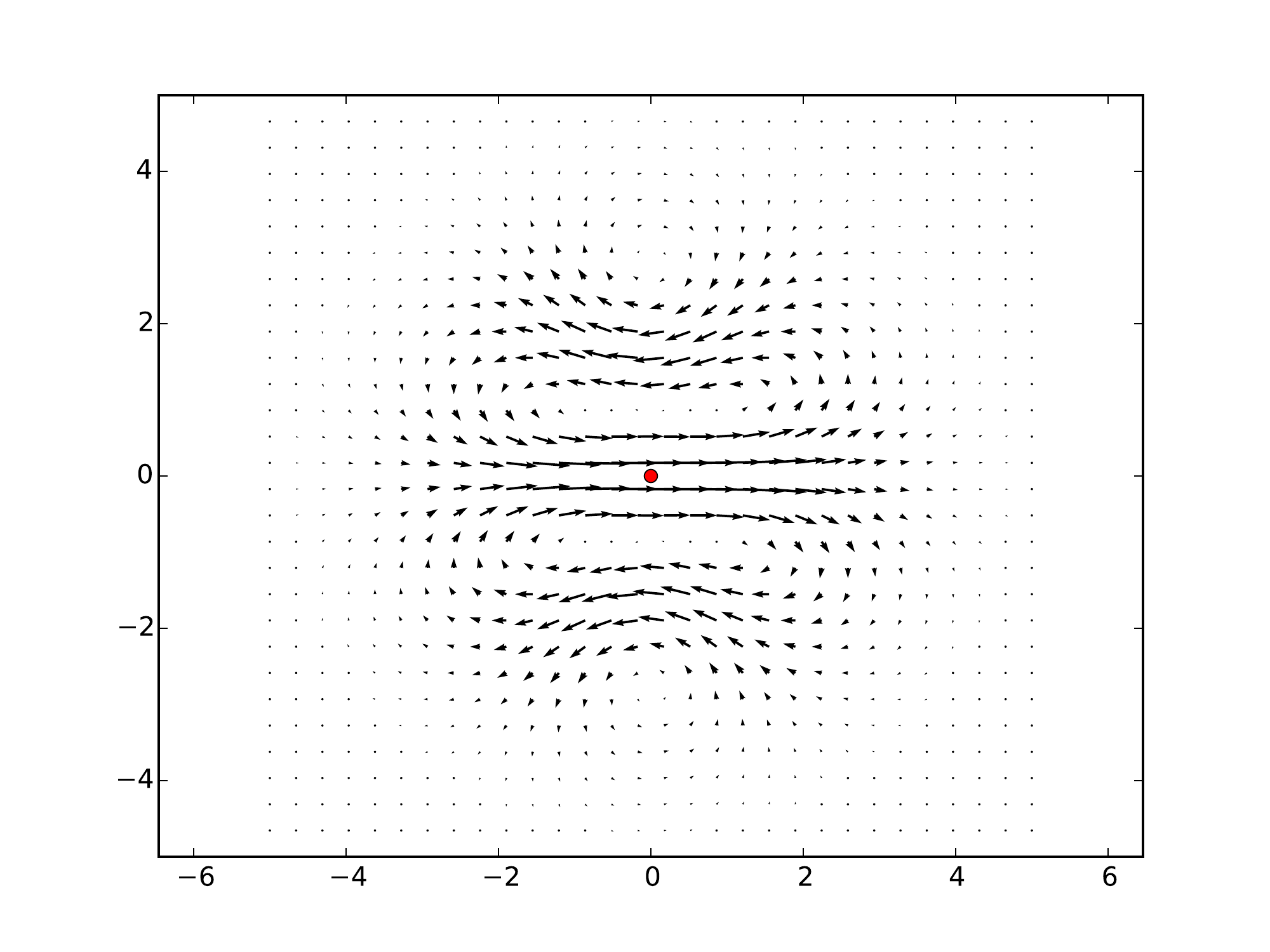}
	\caption{Some velocity fields generated by particles of order $k=2$.}
	\label{fig:2_jet}
  \end{figure}
  
  \subsection{Kelvin's circulation theorem}
  \label{sec:Kelvin}
  In this section we relate the conserved quantities associated
  with $J_R^{(k)}$ to Kelvin's circulation theorem.
Let $\varphi_t \in \SDiff(\R^n)$ denote the flow-map
produced by a solution to Euler's equation.
Let $\gamma(s,t) = \varphi_t(\gamma_0(s))$ for some loop $\gamma_0(s)$.
Kelvin's circulation theorem states that the circulation
\begin{equation*}\label{eq:Kelvin-circulation}
  \Gamma(t) = \oint u( \gamma(s,t) )  \cdot \partial_s \gamma(s,t) ds
\end{equation*}
is constant in time.
It was shown in \cite{Arnold1966} that this conservation law
is an instance of Noether's theorem.
In particular, circulation is one of the conserved momenta
associated with the particle relabeling symmetry of fluids.
More specifically,
recall that $h_0$ is the Hamiltonian for Euler's equations,
and this Hamiltonian is $\SDiff(\R^n)$ invariant. Moreover, we have a
weak\footnote{%
  With $\SDiff(\R^n)$ as right action this would be a proper dual
  pair, but $\iso(z) \subset \SDiff(\R^n)$ does not act transitively
  on the level sets of $J_\text{spatial}$, hence the dual pair is weak.
} dual pair
$J_\text{spatial}: T^*\SDiff(\R^n) \to \mathfrak{X}_{\rm div}(\R^n)^*$
and $J_\text{conv}: T^*\SDiff(\R^n) \to \mathfrak{iso}(z)^*$ of
spatial and convective momentum maps induced by
the left action of $\SDiff(\R^n)$ and the right action of
$\iso(z) \subset \SDiff(\R^n)$ on $\SDiff(\R^n)$ itself.
By applying Theorem~\ref{thm:dual_pairs} with $J_1 = J_\text{spatial}$
and $J_2 = J_\text{conv}$ we know that Hamiltonian dynamics on
$T^*\SDiff(\R^n)$ with respect to a Hamiltonian of the form
$h \circ J_1$ will exhibit the constant of motion
\begin{equation}\label{eq:abstract-circulation}
  \langle J_\text{conv}(\varphi,p_\varphi) , w \rangle
  = \langle p_\varphi , T\varphi \cdot w \rangle
  \qquad \text{for all } w \in \mathfrak{iso}(z).
\end{equation}
This instance of Noether's theorem applies to regularized models as well as to the non-regularized case of Euler's fluid equations.
In the case of Euler's fluid equations, the equivalence between the above conservation law
and Kelvin's circulation theorem is demonstrated by a heuristic
argument~\cite[Chapter~1, Theorem~5.5]{ArnoldKhesin1998}, which goes as follows. 
Consider a given closed curve $\gamma_0\colon S^1 \to \R^n$ and a family
of vector fields $w_\varepsilon$ such that $w_\varepsilon(\gamma_0(s)) = \gamma'_0(s)$
and such that the (weak in $L^2$) limit $w = \lim_{\varepsilon \to 0} w_\varepsilon$ is the generalized function
\begin{equation*}
  w(x) =  \gamma'_0(s) \otimes \delta(x-\gamma_0(s)).
\end{equation*}
Assuming $\varepsilon$ is small and writing
$p_\varphi = \langle \dot{\varphi} , \,\cdot\, \rangle_{L^2}$, we find
\begin{align*}
  \langle J_\text{conv}(\varphi,p_\varphi) , w_\varepsilon \rangle
  &= \langle \dot{\varphi} , T\varphi \cdot w_\varepsilon \rangle_{L^2}\\
  &= \int_{\mathbb{R}^n} \langle \dot{\varphi}(x) , \textrm{D}\varphi(x)\cdot w_\varepsilon(x) \rangle_{\mathbb{R}^n} \,\textrm{d} x \\
  &\approx \oint \langle \dot{\varphi}(\gamma_0(s)) , \textrm{D}\varphi(\gamma_0(s))\cdot \gamma_0'(s) \rangle_{\mathbb{R}^n} \,\textrm{d} s \\
  &= \oint \langle u(\gamma(s,t)) , \partial_s \gamma(s,t) \rangle_{\mathbb{R}^n} \,\textrm{d} s,
\end{align*}
where $u = \dot{\varphi}\circ\varphi^{-1}$ is the usual Euler
representation of the fluid flow and $\gamma(s,t) := \varphi_t(\gamma_0(s))$. 
Therefore, conservation of $J_{\rm conv}$ leads to conservation of circulation.

A more rigorous correspondence is developed in \cite{HolmMarsdenRatiu1998}.
Any $m \in \mathfrak{X}_{\rm div}(\R^n)^*$ can be written as a one-form density $\tilde{m} \otimes \mu$
where $\mu$ is the volume form on $\R^n$.
For any smooth curve $\gamma:S^1 \to \R^n$ we may consider the current
$\mathscr{K}(\gamma) \in \mathfrak{X}_{\rm div}(\R^n)^{**}$ defined by
\begin{equation*}
	\langle \mathscr{K}(\gamma) , m \rangle = \int_{\gamma} \tilde{m}.
\end{equation*}
By Theorem 6.2 of \cite{HolmMarsdenRatiu1998}, if $\gamma_t(s) = \varphi_t(\gamma_0(s))$ and $m = \tilde{m} \otimes \mu$ satisfies
the ideal incompressible fluid equation (perhaps regularized), then the circulation
\begin{equation*}
	\Gamma(t) = \langle \mathscr{K}(\gamma_t) , m(t) \rangle
\end{equation*}
is constant in time.
It is in this sense that Kelvin's circulation theorem follows from the particle
relabeling symmetry for any $\SDiff(\R^n)$ invariant Lagrangian. 

In light of this discussion, it is natural to consider \eqref{eq:abstract-circulation} as the  fundamental conservation law. The main goal in the remainder of this section is to show that the jet-particle solutions satisfy conservation laws that are `shadows' of this fundamental law in the sense that they are associated with a \emph{partial} relabeling symmetry. To see this, it is useful to define the $k$-th order isotropy group,
$\iso^{(k)}(z) = \{ \psi \in \SDiff(\R^n) \mid \Jet^{k}_z( \psi) = \Jet^k_z (\rm{id}) \}$, where we wrote $\rm{id}$ for the identity mapping and $z$ is shorthand for $z_1, \ldots, z_N \in \mathbb{R}^n$.
Then we see that $Q^{(k)}_N = \SDiff(\R^n) / \iso^{(k)}(z)$, and the corresponding quotient map is given by $\Jet_z^k$.
 Let us also introduce the right action $R\colon \SDiff(\R^n) \times \iso(z) \to \SDiff(\R^n)$ given by right composition of functions and write $R_\psi = R(\cdot , \psi)$, so that $R_\psi \varphi = \varphi \circ \psi$. We also introduce the cotangent lifted right action $TR^*$ by means of the defining relation
 \begin{equation}\label{eq:cotangent-action}
 	\langle TR^*_{\psi^{-1}}p_\varphi, v_\varphi \circ \psi \rangle = \langle p_\varphi, v_\varphi\rangle
 \end{equation}
 for arbitrary $p_\varphi \in T_\varphi^*\SDiff(\R^n)$ and $v_\varphi \in T_\varphi\SDiff(\R^n)$.

 The space $T^*Q^{(k)}_N$ naturally embeds into $T^*\SDiff(\R^n) / \iso^{(k)}(z)$, where the quotient is by the cotangent lifted action. More precisely, for any $(q,p) \in T^*Q^{(k)}_N$ we can construct the corresponding element $i(q,p) \in T^*\SDiff(\R^n) / \iso^{(k)}(z)$ in the following way: take any $\varphi$ such that $\Jet_z^k(\varphi) = q$ and find $\iota_\varphi(q,p) \in T_\varphi^*\SDiff(\R^n)$ that satisfies, for all $v_\varphi \in T_\varphi \SDiff(\mathbb{R}^n)$,
 \begin{equation}\label{eq:iota-map}
   \langle \iota_\varphi(q,p) , v_\varphi \rangle = \langle (q, p) , (v_\varphi \circ \varphi^{-1}) \cdot q \rangle,
 \end{equation}
 cf.~\cite[Equation~(2.2.4)]{HRS}, where we denote by $(v_\varphi \circ \varphi^{-1}) \cdot q$ the infinitesimal action from the left of $v_\varphi \circ \varphi^{-1} \in \mathfrak{X}_{\rm div}(\R^n)$ on $q$.
 Then set $i(q,p) = [\iota_\varphi(q,p)]$.

 To see that $i(q,p)$ is well defined, note that if $\varphi' = \varphi \circ \psi$ for $\psi \in \iso^{(k)}(z)$, then for any $v_{\varphi'} \in T_{\varphi'} \SDiff(\R^n)$ we have
 \begin{multline*}
   \langle TR^*_{\psi^{-1}} \iota_\varphi(q,p), v_{\varphi'} \rangle
 = \langle \iota_\varphi(q,p), v_{\varphi'} \circ \psi^{-1} \rangle\\
 = \langle (q, p), (v_{\varphi'} \circ \psi^{-1} \circ \varphi^{-1}) \cdot q \rangle
 = \langle (q, p), (v_{\varphi'} \circ {\varphi'}^{-1}) \cdot q \rangle
 = \langle \iota_{\varphi'}(q,p), v_{\varphi'} \rangle
 \end{multline*}
 using~\eqref{eq:cotangent-action} and~\eqref{eq:iota-map} in the first and second equalities.
 Since $[TR^*_{\psi^{-1}} \iota_\varphi(q,p)] = [\iota_{\varphi'}(q,p)]$, we conclude that $i(q,p)$ is well defined.

 We claim that
 \begin{equation}
 	i(q, p) = [TR^*_{\varphi^{-1}} J_L^{(k)}(q, p)], \label{i_explicit}
 \end{equation}
 where $\varphi$ is such that $\Jet_z^k(\varphi) = q$. This follows since for any $v_{\varphi} \in T_{\varphi}\SDiff(\R^n)$
 \begin{equation*}
	\langle TR^*_{\varphi^{-1}} J_L^{(k)}(q, p), v_{\varphi}\rangle = \langle  J_L^{(k)}(q, p), v_\varphi \circ \varphi^{-1}\rangle = \langle (q, p), (v_\varphi \circ \varphi^{-1}) \cdot q \rangle.
 \end{equation*}

With these preliminary remarks in mind, we can now show the commutativity of the following diagram:
\begin{displaymath}
\xymatrix@R=3em@C=5em{
& \ar@/_1.5pc/[dl]_{J_{\rm spatial}} S \subset T^*\SDiff(\R^n) \ar[d]^{\pi} \ar@/^1.5pc/[dr]^{ \left. J_{\rm conv} \right|_S} & \\
\mathfrak{X}_{\rm div}(\R^n)^* & \ar[l]_-{[J_{\rm spatial}]} i(T^*Q^{(k)}_N) \subset T^*\SDiff(\R^n) / \iso^{(k)}(z) \ar[r]^-{[\left. J_{\rm conv}\right|_{S} ]} & \mathfrak{iso}(z)^* \\
& T^*Q^{(k)}_N \ar[u]^i \ar@/_1.5pc/[ur]_{J_R^{(k)}} \ar@/^1.5pc/[ul]^{J_L^{(k)}} &
}
\end{displaymath}
Here, $S = (\pi^{-1} \circ i)\bigl(T^*Q^{(k)}_N\bigr)$ and $J_{\rm spatial}$ and $J_{\rm conv}$ are defined in the natural manner through the left and right actions of $\SDiff(\R^n)$ and $\iso(z)$.
To verify the left side of the diagram, we let $(q,p) \in T^*Q^{(k)}_N$ and $\varphi$ such that $\Jet_z^k(\varphi) = q$. Then we obtain from \eqref{i_explicit} that for any $w \in \mathfrak{X}_{\rm div}(\R^n)$
\begin{align*}
	&\langle [J_\text{spatial}](i(q,p)), w\rangle = \langle J_\text{spatial}( TR^*_{\varphi^{-1}} J_L^{(k)}(q, p)), w\rangle \\
	\quad&= \langle TR^*_{\varphi^{-1}} J_L^{(k)}(q, p), w \circ \varphi \rangle
	 = \langle J_L^{(k)}(q, p), w \rangle,
\end{align*}
as required. This also shows that $[J_{\rm spatial}]$ is well defined
on $T^*\SDiff(\R^n) / \iso^{(k)}(z)$ since the final expression does
not depend on the choice of representative $\varphi$ for $q$.

For the right leg, note that 
\begin{align*}
	&\langle [J_\text{conv}](i(q,p)), w\rangle = \langle J_\text{conv}( TR^*_{\varphi^{-1}} J_L^{(k)}(q, p)), w\rangle \\
	&= \langle  TR^*_{\varphi^{-1}} J_L^{(k)}(q, p), T\varphi \cdot w \rangle = \langle  J_L^{(k)}(q, p), (T\varphi \cdot w) \circ \varphi^{-1} \rangle \\
	&= \langle (q,p),  \Jet_z^k(T\varphi \cdot w)\rangle = \langle  J^{(k)}_R(q,p), w\rangle.
\end{align*}
Here we used~\eqref{i_explicit}, the definition of $J_\text{conv}$,
and~\eqref{eq:cotangent-action}, respectively, in the first three equalities,
and~\eqref{Moma_right} in the final step. Again we see that $[J_\text{conv}]$ is well-defined since it does not
depend on $\varphi$. More explicitly, for another representative
$\varphi' = \varphi \circ \psi$ with $\psi \in \iso^{(k)}(z)$ we see
that $\psi$ gets projected out by (the tangent map of) $\pi = \Jet_z^k$.
That is, the restricted momentum map $J_\text{conv}|_S$ is invariant under the right action of $\iso^{(k)}(z)$
and therefore descends to a map $[J_\text{conv}|_S ]$ defined on $i\bigl(T^*Q^{(k)}_N\bigr)$.

The right legs of both weak dual pairs yield the conserved quantities.
From the diagram it is clear that the conservation of $J_R^{(k)}$ exhibited in
our particle models is a shadow of the conservation of $J_\text{conv}$ in~\eqref{eq:abstract-circulation}.
Since $J_\text{conv}$ corresponds by Noether's theorem to the (large)
subgroup $\iso(z)$ of the right symmetry $\SDiff(\R^n)$ that generates
conservation of circulation, we see that conservation of $J_R^{(k)}$ in the jetlet solutions is a shadow
of the conservation of circulation.
In other words, our particle models contain a model of Kelvin's circulation theorem within them (cf.~\cite[Theorem 5.5]{JacobsRatiuDesbrun2013}).

This diagram also provides some insight into the relationship between the developments in this paper and classical Marsden-Weinstein reduction theory \cite{MarsdenWeinstein1974}.
For instance, letting $J_{\rm conv}^{(k)}: T^*\SDiff(\R^n) \to \mathfrak{iso}^{(k)}(z)^*$ be the momentum map associated to the cotangent lift of the right action of $\iso^{(k)}(z)$ on $\SDiff(\R^n)$,
 one can show that $i\bigl(T^*Q^{(k)}_N\bigr) = (J_{\rm conv}^{(k)})^{-1}(0) / \iso^{(k)}(z)$.  For more details on the connections between general reduction theory and the results of the present paper, see Appendix \ref{app:diagramatic}.

\section{Particle mergers}
\label{sec:collisions}

In this section we discuss some explicit dynamical behavior of the
particle model, in particular we study `collisions' of jetlets. For
the zeroth order particles case, this was already analyzed by Mumford
and Michor~\cite{MumfordMichor2013}, and they found that two
particles can merge in infinite time, or bounce off each other,
depending on the ratio of their relative angular and linear momenta.
To find the explicit behavior analytically, we shall restrict to two
dimensional space and two $0$-jetlet particles with zero total linear
momentum. We identify the asymptotics of the merged state as the
dynamics of a single $1$-jetlet particle.

We start with the Hamiltonian $H^{(0)}$ for two $0$-jetlet particles,
\begin{equation*}\label{eq:H-two-0jetlets}
  H = \frac{1}{2} \sum_{a,b=1}^2 p\indices{_a_i} K^{ij}(q_a - q_b) p\indices{_b_j}
\end{equation*}
on $T^* \R^{2n}$ with the canonical Poisson brackets.
Translation symmetry allows us perform symplectic reduction. We switch
to a center of mass frame by choosing new coordinates
\begin{equation}\label{eq:CoM-coords}
  \bar{q} = \frac{1}{2}(q_1 + q_2), \qquad \tilde{q} = q_2 - q_1
\end{equation}
with canonically associated momenta
\begin{equation*}
  \bar{p} = p_1 + p_2, \qquad \tilde{p} = \frac{1}{2}(p_2 - p_1).
\end{equation*}
The Hamiltonian in these coordinates becomes
\begin{equation*}
  H = \frac{1}{4} \bar{p}_i \big( K^{ij}(0) + K^{ij}(\tilde{q}) \big)   \bar{p}_j
              + \tilde{p}_i \big( K^{ij}(0) - K^{ij}(\tilde{q}) \big) \tilde{p}_j.
\end{equation*}
When $n = 2$ and the total momentum is zero, i.e.~$\bar{p} = 0$, we can perform
another symplectic reduction by rotational symmetry. We switch to
polar coordinates
\begin{equation}\label{eq:polar-coords}
  \tilde{q} = \big(r\cos(\phi),r\sin(\phi)\big)
\end{equation}
with canonically associated momenta
\begin{equation*}
  \tilde{p}^T =
  \begin{pmatrix}
    \cos(\phi) & -\frac{\sin(\phi)}{r} \\
    \sin(\phi) &  \frac{\cos(\phi)}{r}
  \end{pmatrix} \cdot
  \begin{pmatrix}
    p_r \\
    p_\phi
  \end{pmatrix}
  = R_\phi \cdot
  \begin{pmatrix}
    p_r \\
    \frac{p_\phi}{r}
  \end{pmatrix},
\end{equation*}
where $R_\phi$ is a rotation matrix. In these coordinates the
Hamiltonian is given by
\begin{equation*}
  \begin{aligned}
  H &= \tilde{p}_i \big[ K^{ij}(0) - K^{ij}(\tilde{q}) \big] \tilde{p}_j
     = \begin{pmatrix} p_r \\ \frac{p_\phi}{r} \end{pmatrix}^T R_\phi^T
       \big[ K(0) - K(R_\phi\cdot(r,0)) \big] R_\phi
       \begin{pmatrix} p_r \\ \frac{p_\phi}{r} \end{pmatrix}\\
    &= \begin{pmatrix} p_r \\ \frac{p_\phi}{r} \end{pmatrix}^T
       \big[ K(0) - K(r,0) \big] \begin{pmatrix} p_r \\ \frac{p_\phi}{r} \end{pmatrix},
  \end{aligned}
\end{equation*}
where in the last step we used that $K$ as a tensor is invariant under
rotations. Since $\phi$ is a cyclic variable, we find that its
associated momentum
\begin{equation*}
  p_\phi = -r \sin(\phi) \tilde{p}_1 + r \cos(\phi) \tilde{p}_2
            = \tilde{q} \wedge \tilde{p}
\end{equation*}
is conserved. Remark that this relative angular momentum is not
exactly the total angular momentum when $\bar{p} \neq 0$.

Let us now choose the smooth kernel $K = K_{\infty,1}$ given
(up to a scaling factor) by
{\newcommand{\e}{e^{-\rho}}
\begin{equation*}\label{eq:smooth-kernel}
  K^{ij}(x) = \Big(\e - \frac{1}{2\rho}\big(1 - \e\big)\Big)\delta^{ij}
    + \Big(\frac{1}{\rho}\big(1 - \e\big) - \e\Big)\frac{x^i x^j}{\|x\|^2},
\end{equation*}%
}%
where $\rho := \frac{\|x\|^2}{2}$. Note that
$K^{ij}(0) = \frac{1}{2} \delta^{ij}$ and $\partial_k K^{ij}(0) = 0$.
Using rotational symmetry we set $\phi = 0$, and obtain
\begin{equation}\label{eq:H-rot-reduced}
  H = \frac{p_r^2      }{2}    \Big(1 - \frac{1}{\rho}\big(1-e^{-\rho}\big)\Big)
     +\frac{p_\phi^2}{4\rho}\Big(1 - 2e^{-\rho} + \frac{1}{\rho}\big(1-e^{-\rho}\big)\Big).
\end{equation}
This system is Hamiltonian in $(r,p_r)$ with $p_\phi$ a parameter, so
the level sets of $H$ determine the motion. For $p_\phi = 0$ we see
that $H \propto p_r^2\, r^2 f(r^2)$ where $f$ is an analytic function
with $f(0) > 0$. Thus the level sets of $H$ near $r = 0$ look like
hyperbola with $p_r = 0$ and $r = 0$ as asymptotic axes, see the left image in
Figure~\eqref{fig:contour-red-Hamiltonian}. Hence, two particles approaching
each other head-on will `collide' in infinite time; even though their
momentum blows up, their relative velocity decays exponentially.

If $p_\phi \neq 0$ then the first non-constant contributing term has
sign opposite to the terms involving $p_r$, hence we see a new region
being created in the right image of Figure~\eqref{fig:contour-red-Hamiltonian} where orbits
approach $r = 0$ and then `bounce off'. There are two asymptotic
orbits that approach the $r = 0$ axis at finite momentum; the limit
point can be calculated from the fact that $\dot{p}_r / \dot{r} = 0$
must hold there in the limit $r \to 0$. We find that
$p_r = \pm \sqrt{5/6}\,p_\phi$. From
Figure~\ref{fig:contour-red-Hamiltonian} it is clear that this is a good
approximation for the asymptotic curve.

\begin{figure}[ht]
  \centering
  \includegraphics[width=7.5cm]{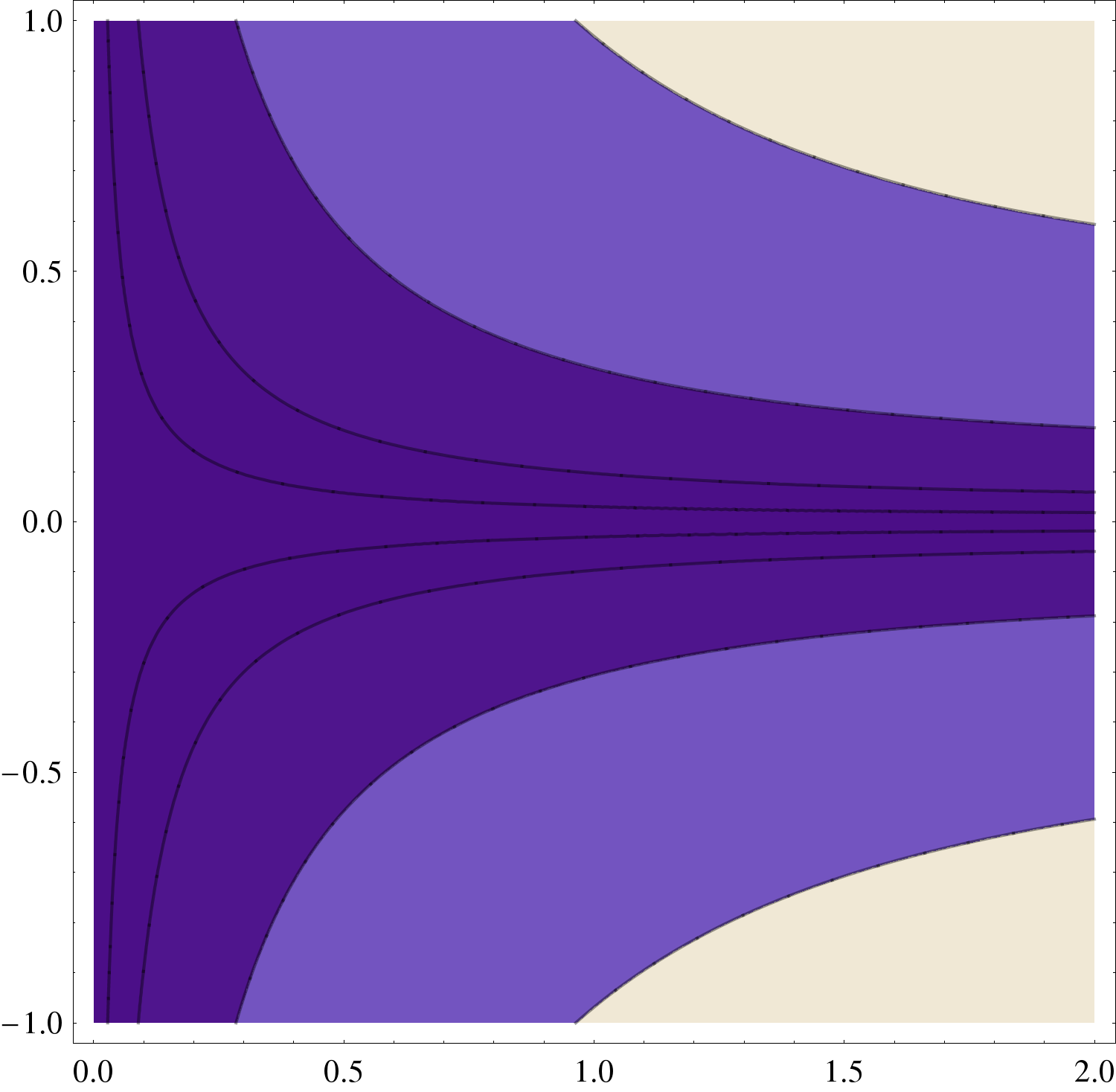}
  \hspace{0.5cm}
  \includegraphics[width=7.5cm]{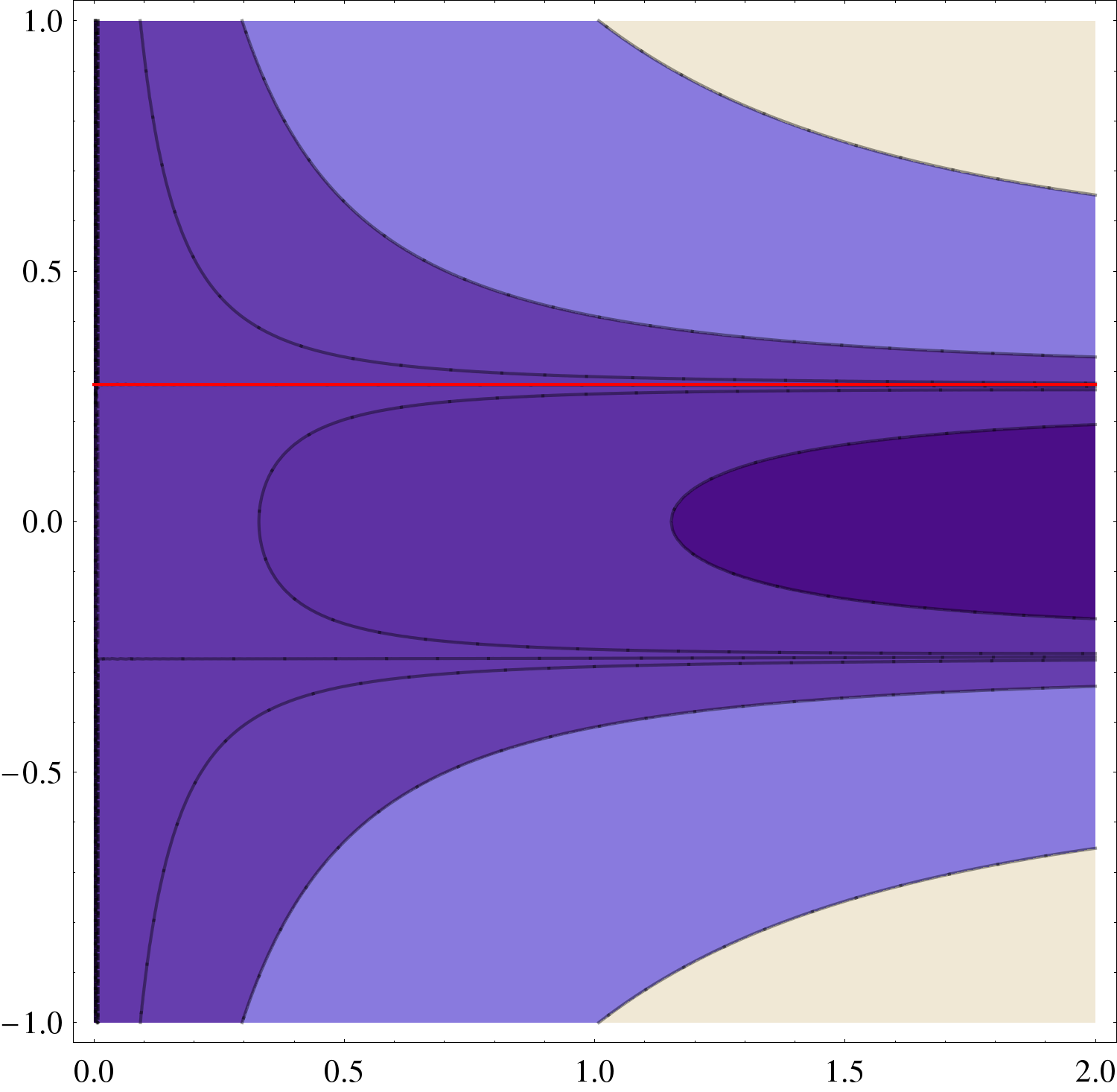}
  \caption{Contour plots of the reduced
    Hamiltonian~\eqref{eq:H-rot-reduced} with $r$ horizontal and $p_r$
    vertical, left for $p_\phi = 0$ and right for
    $p_\phi = 0.3$. The red line shows the positive asymptotic
    value for merging or bouncing.}
  \label{fig:contour-red-Hamiltonian}
\end{figure}

Finally, we can reconstruct the asymptotic $1$-jetlet trajectory that
these two merging $0$-jetlets converge to and verify that this
trajectory is indeed a solution of the Hamiltonian vector field for a
$1$-jetlet.

To analyze the asymptotic behavior, we expand $H$ around $\rho = 0$:
\begin{equation*}
  H = \frac{p_r^2      }{4} \rho
     +\frac{p_\phi^2}{8}\big(3 - \frac{5}{3}\rho\big)
     +\mathcal{O}(\rho^2).
\end{equation*}
Since $H$ and $p_\phi^2$ are preserved, we can solve for $p_r$ in
terms of $r$, and we find
\begin{align*}
  2\rho\,p_r^2 &= (8H - 3 p_\phi^2) \big(1 + \mathcal{O}(\rho)\big), \\
\Longleftrightarrow \qquad
      r\,p_r   &= \sqrt{8H - 3 p_\phi^2} + \mathcal{O}(r).
\end{align*}
We write $\zeta := \sqrt{8H - 3 p_\phi^2}$ and thus obtain
asymptotically $p_r = \frac{\zeta}{r} + \mathcal{O}(1)$. Further, we
have dynamics
\begin{equation*}
  \dot{r} = \frac{\partial H}{\partial p_r} = \frac{1}{4}p_r\,r^2 + \mathcal{O}(r^4), \qquad
  \dot{\phi} = \frac{\partial H}{\partial p_\phi}
                 = \frac{3}{4}p_\phi + \mathcal{O}(r^2)
\end{equation*}
and reconstruct
\begin{alignat*}{2}
  \bar{q}   &= 0,& \qquad
  \tilde{q} &= r R_\phi \cdot\begin{pmatrix} 1 \\ 0 \end{pmatrix},\\
  \bar{p}   &= 0,& \qquad
\tilde{p}^T &= R_\phi \cdot
               \begin{pmatrix} p_r \\ \frac{p_\phi}{r} \end{pmatrix}
             = \frac{1}{r} R_\phi \cdot
               \begin{pmatrix} \zeta \\ p_\phi \end{pmatrix} + \mathcal{O}(1).
\end{alignat*}
Now we consider the image under $J_L$ of the asymptotic solution
curve. By testing against a vector field in $\mathfrak{X}_{\rm
  div}(\R^n)$ we find
\begin{align}
  J_L(q_1,p_1,q_2,p_2)
  &= p_1 \otimes \delta_{q_1} + p_2 \otimes \delta_{q_2} \notag\\
  &= \bar{p} \otimes \delta_{\bar{q}} +\tilde{p} \otimes
     \big({-\tilde{q}^j} \partial_j\delta_{\bar{q}}\big)
    +\mathcal{O}\Big(\big(|\bar{p}|+|\tilde{p}|\big)|\tilde{q}|^2\Big) \notag\\
\intertext{and inserting our reconstructed solution leads to}
  &= -\Big[R_\phi \cdot \begin{pmatrix} \zeta \\ p_\phi \end{pmatrix}\Big]_i
      \Big[R_\phi \cdot \begin{pmatrix} 1   \\ 0         \end{pmatrix}\Big]^j \dx^i\,\partial_j
      \delta_{\bar{q}} +\mathcal{O}(r) \label{eq:asymptotic-dynamics}
\end{align}
with $\phi(t) = \frac{3}{4} p_\phi\,t$. Our aim is to show that
the factor in front of $\dx^i\,\partial_j \delta$, which we shall denote
by $\tilde{\mu}\indices{_i^j}(t)$, corresponds to
$\mu\indices{_i^j}(t)$ for a $1$-jetlet with position $q^{(0)} = 0$
and momentum $p^{(0)} = 0$. From here on, we use the Frobenius inner
product to identify $\mathfrak{sl}(2)^* \cong \mathfrak{sl}(2)$, which
for explicit matrices corresponds to taking the transpose.
For such a setup we have equations of motion
\begin{equation*}
  \dot{\mu}\indices{_i^j}
  = \mu\indices{_i^k} \partial_k u^j(0) - \partial_i u^k(0) \mu\indices{_k^j}
  = -\partial_{km} K^{jl}(0) \mu\indices{_l^m} \mu\indices{_i^k}
    +\partial_{im} K^{kl}(0) \mu\indices{_l^m} \mu\indices{_k^j},
\end{equation*}
or in short, $\dot{\mu} = -\ad^*_\xi(\mu) = [\mu,\xi^T]$ with
$\xi\indices{^i_j} = \partial_j u^i(0) = -\partial_{jk} K^{i l}(0) \mu\indices{_l^k}$.
To ease calculations, let us choose the basis
\begin{equation*}\label{eq:basis-sl2}
  \omega = \begin{pmatrix}
    0 & -1 \\
    1 & 0
  \end{pmatrix},\qquad
  \sigma = \begin{pmatrix}
    1 & 0 \\
    0 & -1
  \end{pmatrix},\qquad
  \tau = \begin{pmatrix}
    0 & 1 \\
    1 & 0
  \end{pmatrix}
\end{equation*}
and note that these matrices have norm $\sqrt{2}$. We decompose
$\mu = j\,\omega + s\,\sigma + t\,\tau$ and the second derivative of
the kernel as a tensor product. A calculation verified by symbolic
computer algebra software shows that
\begin{equation*}
  \partial_{ij} K^{kl}(0)
  = \frac{1}{4} \left(\begin{array}{rr}
     -\begin{pmatrix} 1 & 0 \\ 0 & 3 \end{pmatrix} &
      \begin{pmatrix} 0 & 1 \\ 1 & 0 \end{pmatrix} \\[2.5ex]
      \begin{pmatrix} 0 & 1 \\ 1 & 0 \end{pmatrix} &
     -\begin{pmatrix} 3 & 0 \\ 0 & 1 \end{pmatrix}
    \end{array}\right)
  =-\frac{1}{2} \omega_i^k \omega_j^l
   -\frac{1}{4} \big( \sigma_i^k \sigma_j^l + \tau_i^k \tau_j^l \big),
\end{equation*}
where the indices $i,j$ and $k,l$ label the outer and inner matrix
elements respectively. With these decompositions we find that $\xi$ as
a function of $\mu$ can be written as
\begin{equation*}
  \xi = j\,\omega + \frac{1}{2}(s\,\sigma + t\,\tau).
\end{equation*}
Using the commutation relations $[\omega,\sigma] = 2\tau$,
$[\omega,\tau] = -2\sigma$, $[\sigma,\tau] = -2\omega$, it then
follows that
\begin{equation}\label{eq:mu-dynamics}
  \dot{\mu}
  = [\mu , \xi^T]
  = \big[j\,\omega+s\,\sigma+t\,\tau \,,\, -j\,\omega + \frac{1}{2}(s\,\sigma + t\,\tau)\big]
  = -3j\big(-t\,\sigma + s\,\tau\big)
  = -\frac{3}{2}j\,[\omega,\mu].
\end{equation}
That is, the $(\sigma,\tau)$ components of $\mu$ as a tensor rotate
with angular velocity $-3j$.

On the other hand, from~\eqref{eq:asymptotic-dynamics} we have for the
asymptotic dynamics that
\begin{equation*}
  \tilde{\mu}(t)
  = -\Big[R_\phi \cdot \begin{pmatrix} \zeta \\ p_\phi \end{pmatrix}\Big] \cdot
     \Big[R_\phi \cdot \begin{pmatrix} 1   \\ 0         \end{pmatrix}\Big]^T
  = -\Ad_{R_\phi} \begin{pmatrix} \zeta & 0 \\ p_\phi & 0 \end{pmatrix}
\end{equation*}
with only $\dot{\phi} = \frac{3}{4} p_\phi$ depending on time.
Note that $\tilde{\mu} \not\in \mathfrak{sl}(2)$ as a matrix, but
since it is actually a dual element, we can simply ignore its trace
part and project it out. Also note that the $\omega$ component of
$\tilde{\mu}(0)$ is $j = \frac{1}{2}p_\phi$. Differentiating with respect to 
time yields
\begin{equation*}
  \dot{\tilde{\mu}}
  = -\dot{\phi}\,\ad_\omega(\tilde{\mu})
  = -\frac{3}{4} p_\phi [\omega,\tilde{\mu}]
  = -\frac{3}{2} j [\omega,\tilde{\mu}], 
\end{equation*}
and comparing to~\eqref{eq:mu-dynamics} we find that the asymptotic
solution of the two merging $0$-jetlets matches that of a $1$-jetlet
with the same angular (and linear) momentum.

Let us finally suggest a more abstract way to view these particle
mergers. Our hierarchy of reduced spaces $T^*Q_N^{(k)}$ embeds into
$\mathfrak{X}_{\rm div}(\R^n)^*$ under the momentum map $J_L$.
Consider a merging pair of $0$-jetlets, described by a curve
$x_0(t) \in T^*Q_2^{(0)}$. As the particles approach each other,
$x_0(t)$ approaches the boundary of $T^*Q_2^{(0)}$ given by
\begin{equation*}\label{eq:boundary}
  \partial\big( T^*Q_2^{(0)} \big)
  = \{ (q_1,p_1,q_2,p_2) \in T^* \R^n \times T^* \R^n \mid q_1 = q_2 \}.
\end{equation*}
On the other hand, the image curve
$y_0(t) = J_L(x_0(t)) \in \mathfrak{X}_{\rm div}(\R^n)^*$ consists of
two covector-valued delta distributions at $q_1,q_2$, and in the limit
as their distance goes to zero, this can be approximated by a momentum
valued distribution of a delta and its
derivative~\eqref{eq:asymptotic-dynamics}, that is, an element
$y_1(t) = J_L(x_1(t))$, where $x_1(t) \in T^*Q_1^{(1)}$ is a curve in
the space of single $1$-jetlet particles.

We can view the boundary of $T^*Q_2^{(0)}$ as a subset of $T^*Q_1^{(1)}$,
and consider a topology on $\mathfrak{X}_{\rm div}(\mathbb{R}^n)^*$ in which
the embedding is continuous,
see diagram~\eqref{eq:embedding}.
This picture naturally generalizes to the whole hierarchy of spaces
$T^*Q_N^{(k)}$, suggesting that it might be interpreted as a
CW-complex. In this setting, the question whether the solution curve
$y_0(t)$ of the merging $0$-jetlets converges to a solution curve
$y_1(t)$ of a $1$-jetlet, basically\footnote{%
  One has to be careful, however, since continuity of the vector field will
  only imply that the asymptotic curve $\tilde{y}_0(t) \in J_L(T^*Q_1^{(1)})$
  is a pseudo orbit of the $1$-jetlet dynamics. This does not imply
  existence of a solution curve $y_1(t) \in J_L(T^*Q_1^{(1)})$ that
  $\tilde{y}_0(t)$ is asymptotic to; that would require the `limit
  shadowing property', which is closely related to hyperbolic
  properties of the dynamics~\cite{Palmer2012,Ribeiro2014}.%
} boils down to the question whether the vector field of the dynamics
on $\mathfrak{X}_{\rm div}(\R^n)^*$ is continuous.
We have not pursued in detail the question of which topology on
$\mathfrak{X}_{\rm div}(\R^n)^*$ to use for this more abstract
characterization.


\begin{equation} \label{eq:embedding}
   \xymatrix{
     T^*Q_N^{(k)} \ar[r]^{J_L} & \mathfrak{X}_{\rm div}(\R^n)^* \\
     T^*Q_1^{(1)} \ar[ur] \ar@{.>}[u] \\
     T^*Q_2^{(0)} \ar[uur] \ar[u]^{\partial}
   }
\end{equation}

\section{Numerical experiments}
\label{sec:Numerical experiments}

We have performed a number of numerical simulations\footnote{%
  The simulation was written using Python and NumPy, the source code
  and generated videos can be found at:
  \url{https://github.com/hoj201/incompressible_jet_particles}%
} of jetlet particles. These confirm that the conserved
quantities are indeed preserved and that two $0$-jetlet particles
merge as shown by the analysis in Section~\ref{sec:collisions},
providing a sanity check for the formulas in the previous section and
the numerical code. Moreover, the merging behavior shows to be stable
under perturbations of initial conditions.

The numerical code works for any spatial dimension $n \ge 2$, but for
the sake of tractability and simplicity we have studied $n = 2$ 

As basic experiment we take two $0$-jetlet particles with initial
states
\begin{equation}\label{eq:exp1-initial}
  \begin{alignedat}{2}
    q_1 &= (-3, 0), &\qquad& p_1 = ( 1.5,-d), \\
    q_2 &= ( 3, 0), &      & p_2 = (-1.5, d),
  \end{alignedat}
\end{equation}
aimed at each other with an offset parameter $d$.
The initial state is given in center of mass polar coordinates,
see~\eqref{eq:CoM-coords} and~\eqref{eq:polar-coords}, by
\begin{align*}
    r    &= 6, \qquad p_r    = -1.5, \\
    \phi &= 0, \qquad p_\phi = 6d.
\end{align*}
Furthermore, we use
$\sigma = 1$ throughout our experiments. The experiments show that for
$d = 0.27$ the two particles merge while spinning around each other,
while for $d = 0.288$ they get close, but then emerge from their close
spinning state and scatter in opposite directions. This confirms the
analytical value of $d = \sqrt{3/40} \approx 0.2739$ within reasonable
precision, noting that this is the asymptotic value for particles
starting close to each other.

We performed a number of more complex simulations, all of
those being small perturbations of the basic experiment described
above. First, we added a small angular momentum `spin' component to
both particles, turning both into $1$-jetlets, one level higher in the
hierarchy. Then we added a third particle (both a $0$-jetlet and
$1$-jetlet) at such a distance and momentum that it exhibits medium
range interaction with the first two particles. Finally, we added a
small hyperbolic-like `stretching' momentum to the first particle
only. We found an analytic study of these configurations to be
infeasible, but the simulations show that the behavior observed in the basic
experiment persists. We can find parameter values of $d$ close to the
original one where the system shows a transition between the two
particles merging or scattering.

These experiments also confirm the preservation of the conserved
quantities present in the system. For all experiments described above
we observed that the energy, total linear and angular momentum,
as well as $J_R$ (individually for each particle) were
preserved with absolute errors less than $4 \cdot 10^{-4}$ over a time
of $60$ seconds, while the energy was of the order one. Unlike $J_R$,
which is conserved for each particle, linear and angular momentum can be
exchanged between particles (although total momentum is conserved).
Figure~\ref{fig:angmom-exchange} left shows
the angular momentum of two scattering $1$-jetlets and
the right plot of two scattering $0$-jetlets
interacting weakly with a third jetlet particle.

\begin{figure}[htb]
  \centering
  \includegraphics[width=8cm]{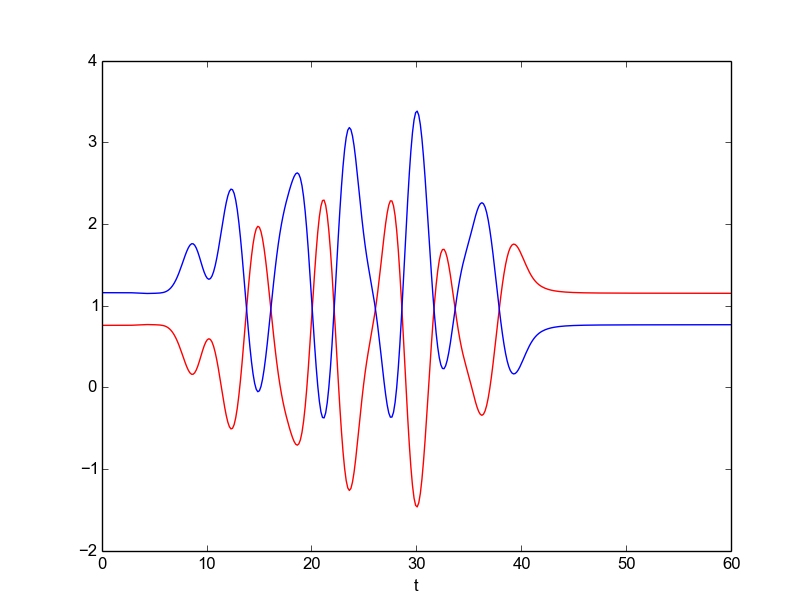}
  \includegraphics[width=8cm]{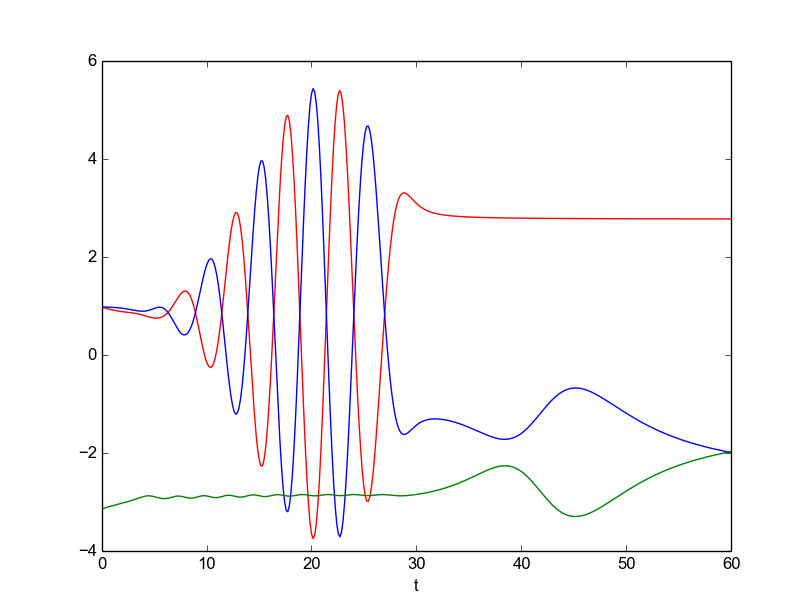}
  \caption{Angular momentum exchange between two scattering
    $1$-jetlets (left) and with a third particle present (right).}
  \label{fig:angmom-exchange}
\end{figure}

\section{Conclusion}
  In this paper we derived a hierarchy of weak dual pairs
  which induce a family of particle-like solutions, called \emph{jetlets}, and conserved quantities
  that shadow the conservation of circulation.
  The jetlets have internal degrees of freedom given by a jet group.
  As the jet group is a finite-dimensional model of the diffeomorphism
  group, we suggested the use of jetlets as a finite-dimensional model
  of self-similarity, wherein a ``large'' diffeomorphism advects a ``small''
  diffeomorphism.
  We also studied the dynamics of mergers and provided a rigorous analysis showing that merging
  0-jetlets asymptotically approach 1-jetlets.

The developments discussed in the present paper give rise to a number of promising directions for future research. These include:
\begin{enumerate}
	\item An investigation of the relationship between jetlets and point vortices or vortex blobs.
	\item Further investigation of the numerical implementation.
		The use of parallelization and the fast multipole method would be particularly interesting to consider.
	\item Finding a way to implement boundary conditions.
		In such scenarios, the kernel is no longer invariant under rigid transformations
		and we must consider a general kernel $K:M \times M \to \mathbb{R}^{n \times n}$
		where $M \subset \mathbb{R}^n$ is an $n$-manifold with boundary.
        \item An analysis of convergence to Euler equations when $\sigma \to 0$ for the case where the power of the Helmholtz operator, $p$, goes to infinity.  The advantage of the $p=\infty$ case is that the limiting kernel can be written in terms of elementary functions \cite{MicheliGlaunes2014}.
\end{enumerate}

\section{Acknowledgements}
We are indebted to the anonymous referees for very carefully refereeing our article, including catching a problem with our initial use of dual pairs.
JE, DDH, HOJ and DMM are grateful for partial support by the European Research Council Advanced Grant 267382 FCCA.

\appendix

\section{Hamiltonian mechanics}
\label{sec:Hamiltonian}
The goal of this section is to prove Theorem~\ref{thm:dual_pairs} (see page~\pageref{thm:dual_pairs}).
Those who understand and accept these theorems on a first reading
should be able to skip this section without any consequence.
Most of this section will be a crash course in Poisson geometry
and Hamiltonian mechanics as described in \cite{FOM,MandS,Weinstein1983}.

A typical introduction to Poisson structures in mechanics
begins by considering Hamilton's equations
\begin{equation*}
  \dot{q} = \pder{H}{p} \quad, \quad  \dot{p} = - \pder{H}{q}.
\end{equation*}
If we consider the two-form $dp \wedge dq$, then Hamilton's equations
can be written as $(\dot{q},\dot{p}) \contract (dp \wedge dq) = dH(q,p)$.
This is the starting point for symplectic geometry, which
will be discussed in Section~\ref{sec:Symplectic}.
Alternatively, these equations can be written using the bilinear map $\{ \cdot , \cdot \}_{\rm can} : C^1( T^*Q) \times C^1(T^*Q) \to C^0(T^*Q)$
given by
\begin{equation*}
  \{ F , G \}_{\rm can} = \pder{F}{q}\pder{G}{p} - \pder{G}{q} \pder{F}{p}.
\end{equation*}
In particular, we may write $\dot{q} = \{ q , H \}_{\rm can}$ and $\dot{p} = \{ p , H\}_{\rm can}$.
The object $\{ \cdot , \cdot \}_{\rm can}$ is a special case of more general
object known as a \emph{Poisson bracket} which will be introduced in
Section~\ref{sec:Poisson}.

A word of warning: symplectic geometry has developed greatly
since its origins in mechanics, and
has branched into an independent subfield of pure mathematics.
Many notions were revised and optimized in the $1970$'s and $1980$'s for
the purpose of proving theorems.
Occasionally these revisions entailed a sacrifice
in clarity, from the perspective of ``outsiders''.
This paper is intended to allow ``outsiders''
(such as ourselves) to reap the benefits of Poisson geometry.
Therefore, we will cut away as much abstraction as possible in this introductory
section.
Nonetheless, a minimal amount of abstraction is needed in order to
maintain mathematical rigor and stand firmly upon the shoulders of giants.

\subsection{Symplectic manifolds}
\label{sec:Symplectic}
We begin with the definition.
\begin{defn}
  Let $S$ be a manifold
  and let $\omega$ be a closed two-form on $S$ such that the map
  ``$v \in TS \mapsto \omega( v , \cdot ) \in T^*S$'' is weakly non-degenerate.\footnote{
    A linear map $L:V \to V^*$ is \emph{weakly non-degenerate} if $L$ is injective.  If $V$ is finite-dimensional, this simply means that $L$ is invertible.}
  We call $\omega$ a \emph{symplectic form}.
  We call the pair $(S,\omega)$ a \emph{symplectic manifold}.
\end{defn}
All of the expressions derived in this article are formal, and we refer to \cite{GayBalmazVizman2012}
for the functional analytic details of infinite-dimensional symplectic manifolds.
As a first example, consider the manifold $\R^2$
with coordinates $(q,p)$.
The two-form $dq \wedge dp$ is a symplectic form.
Given a manifold $Q$,
the cotangent bundle $T^*Q$ has local fiber bundle coordinates
given by $(q^1,\dots, q^n,p_1,\dots,p_n)$ and there is a unique symplectic form
which is locally expressed by $dp^i \wedge dq_i$, where
a sum on repeated indices is assumed.
This local expression corresponds to a global symplectic
form on $T^*Q$, known as the \emph{canonical symplectic form}
and denoted $\omega_{\rm can}$ \cite[Theorem 3.2.10]{FOM}.
In fact, given any symplectic manifold $(S,\omega)$, the dimension of $S$
is even, and there exist local coordinates $(q^1,\dots,q^n,p_1,\dots,p_n)$
such that $\omega \stackrel{locally}{=} dp_i \wedge dq^i$.
This is known as \emph{Darboux's theorem} and we call this type of
coordinates \emph{Darboux coordinates}\cite[Theorem 3.2.2]{FOM}.

Given a function $H:S \to \mathbb{R}$,
the exterior derivative is the one-form $dH:S \to T^*S$
expressed in local coordinates by $dH(x) = \pder{H}{x^i} \dx^i$.
The Hamiltonian vector field $X_H:S \to TS$ is the unique vector
field defined by the condition
$
  X_H \contract \omega = dH.
$
The symbol ``$\contract$'' is the operation of contraction between
the contravariant indices of $X_H$ and the first set of covariant
indices of $\omega$.
In Darboux coordinates, the Hamiltonian vector field induces the
equations of motion $\dot{q}^i = \pder{H}{p_i}$, $\dot{p}_i = -\pder{H}{q^i}$.

An important aspect of study in Hamiltonian mechanics is that of symmetry.
This yields the following notions.

\begin{defn}
  Let $G$ be a Lie group and let $\rho:G \to \Diff(S)$ be a group action on a symplectic manifold $(S,\omega)$.
  The group $G$ is said to \emph{act symplectically} if
  \begin{equation*}
    \omega( \rho( g ) \cdot v , \rho(g) \cdot w) = \omega(v,w)
  \end{equation*}
  for any $g \in G, v,w \in T_xS$ and $x \in S$.
  If $\mathfrak{g}$ is the Lie algebra of such a group,
  the \emph{momentum map}, $J:S \to \mathfrak{g}^*$,
  is defined by the property
  \begin{equation*}
    d \langle J , \xi \rangle = \xi_S \contract \omega.
  \end{equation*}
\end{defn}

Alternatively, we can characterize a momentum map $J: S \to \mathfrak{g}^*$, as the unique map such that $X_{\langle J , \xi \rangle} = \xi_S$ for any $\xi \in \mathfrak{g}$.
In the special case where $S = T^*Q$, a left/right action of $G$ on $Q$ can be lifted to a right/left symplectic action on $T^*Q$ given by
\begin{equation*}
  (q,p) \in T^*Q \mapsto (g^{-1} \cdot q , g^* p) \in T^*Q.
\end{equation*}
where $g^*p$ is the unique covector such that $\langle g^*p , v \rangle = \langle p , Tg \cdot v \rangle$.
In this case the momentum map is characterized by the condition
\begin{equation}
  \langle J(q,p) , \xi \rangle = \langle p , \xi \cdot q \rangle.
  \label{eq:cotangent_momap}
\end{equation}
This is contained in Theorem 12.1.4 of \cite{MandS}.

Finally, given two functions $f,h \in C^{\infty}(S)$ we can consider the function $\{ f,h\} = \omega( X_f, X_h)$.
In Darboux coordinates $\{ f , h \} = \pder{f}{q^i} \pder{h}{p_i} - \pder{f}{p_i} \pder{h}{q^i}$.
Hamilton's equations can then be written as $\dot{q}^i = \{ q^i , h \}$,
$\dot{p}_i = \{ p_i , h \}$.
We call $\{ \cdot , \cdot \}$ a Poisson bracket, and it is the subject of
the next subsection.

\subsection{Poisson manifolds}
\label{sec:Poisson}
We begin with the definition.

\begin{defn} \label{defn:Poisson}
  Let $P$ be a manifold, and $\{ \cdot , \cdot \}$ be a bilinear
  operation on $C^{\infty}(P)$ such that 
  $( C^{\infty}(P) , \{ \cdot , \cdot \} )$ is a Lie algebra
  and $\{ \cdot , h \}$ has the derivation property for any $h \in C^{\infty}(P)$.
  That is to say
  \begin{equation*}
    \{ gf , h \} = \{ f , h \} \cdot g + \{ g , h \} \cdot f,
  \end{equation*}
  for any $f,g,h \in C^{\infty}(P)$.
  We call $\{ \cdot , \cdot \}$ a \emph{Poisson bracket},
  and we call the pair $(P, \{ \cdot , \cdot \})$ a Poisson manifold.
\end{defn}

The most important example of a Poisson bracket is
that of a Poisson bracket on a symplectic manifold $(S,\omega)$.
Here the Poisson bracket is $\{ f , g \} = \omega( X_f , X_g )$.
When $S$ is a cotangent bundle, and $\omega$ is the canonical
symplectic form, we call this bracket the \emph{canonical Poisson
bracket}.

The second most important example of a Poisson bracket,
after the canonical Poisson bracket,
is the Lie--Poisson bracket.  Let $\mathfrak{g}$ be a Lie algebra
and let $\mathfrak{g}^*$ denote its dual.
The \emph{Lie--Poisson bracket} on $\mathfrak{g}^*$ is given 
by
\begin{equation}
  \{ f , g \}_{\mathfrak{g}^*}( x ) = \pm
  \left \langle x , \left[ \pder{f}{x} , \pder{g}{x} \right] \right \rangle, 
  \label{eq:Lie-Poisson}
\end{equation}
where $\langle \cdot , \cdot \rangle$ is the canonical pairing between
dual-vectors and vectors, and $[ \cdot , \cdot ]$ is the Lie bracket
on $\mathfrak{g}$.
The ``$+$'' Poisson bracket is nothing but the canonical Poisson bracket on $T^*G$,
mapped to the space $\mathfrak{g}^*$ via the left trivialization map $\lambda: (g,p) \in T^*G \mapsto (L_g)^*p \in \mathfrak{g}^*$.
The ``$-$'' bracket is obtained through the right trivialization map
$\rho:(g,p) \in T^*G \mapsto (R_g)^*p \in \mathfrak{g}^*$.

On a Poisson manifold $(P,\{ \cdot , \cdot \})$
the derivation property implies that the functional operator
$\{ \cdot , h \}$ is equivalent
to the Lie derivative operator of a unique vector field $X_h:P \to TP$.
That is to say, $X_h$ is the unique vector field such that $\lie_{X_h}[f] = \{ f , h \}$ for any $f \in C^1(P)$.
We call $X_h$ the \emph{Hamiltonian vector field} and the ODE $\dot{x} = X_h(x)$ is called a \emph{Hamiltonian equation}.
It is standard to write this ODE as ``$\dot{x} = \{ x , H\}$'',
despite the fact that one typically intends for ``$x$'' to represent
a point in $P$, and not a function.
Since one can take ``$x$'' to be a place-holder for a set of
local coordinate functions which determine $x$ uniquely, this
sloppiness is usually harmless.

\begin{prop}[Proposition 10.2.2 \cite{MandS}] \label{prop:Lie_hom}
  Let $(P,\{ \cdot , \cdot \})$ be a Poisson manifold.
  Then $X_{ \{ h ,f \} } = - [X_h , X_f ]$.
\end{prop}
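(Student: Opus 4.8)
The plan is to prove the identity by testing both sides against an arbitrary function $g \in C^\infty(P)$, since a vector field on $P$ is completely determined by its action as a derivation on $C^\infty(P)$. The only two ingredients I need are the defining property of the Hamiltonian vector field, namely $\lie_{X_h}[g] = \{g,h\}$, and the fact that $(C^\infty(P), \{\cdot,\cdot\})$ is a Lie algebra, so that the Jacobi identity and antisymmetry are available.

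First I would expand the left-hand side directly from the definition,
\begin{equation*}
  \lie_{X_{\{h,f\}}}[g] = \{ g , \{ h , f \} \},
\end{equation*}
and then expand the right-hand side using the standard formula for the commutator of two vector fields acting on a function, together with the same defining property,
\begin{equation*}
  [X_h, X_f][g] = \lie_{X_h}\big(\lie_{X_f}[g]\big) - \lie_{X_f}\big(\lie_{X_h}[g]\big) = \{ \{g,f\}, h \} - \{ \{g,h\}, f \}.
\end{equation*}
Note that $[X_h, X_f]$ is again a derivation of $C^\infty(P)$, hence a genuine vector field; this is exactly what makes the comparison on functions meaningful.

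The heart of the matter is then purely algebraic. I would apply the Jacobi identity in the form
\begin{equation*}
  \{ g , \{ h , f \} \} + \{ h , \{ f , g \} \} + \{ f , \{ g , h \} \} = 0
\end{equation*}
and use antisymmetry $\{a,b\} = -\{b,a\}$ to rearrange. This yields $\{g,\{h,f\}\} = \{\{g,h\},f\} - \{\{g,f\},h\}$, which is precisely $-[X_h,X_f][g]$. Since $g$ was arbitrary, the two vector fields agree, establishing $X_{\{h,f\}} = -[X_h, X_f]$.

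I do not expect any serious obstacle here: the computation is a few lines of sign bookkeeping, and the conceptual content is entirely carried by the Jacobi identity. The one point deserving a word of care is the assertion that a vector field is determined by its action on functions; in the infinite-dimensional setting of this paper this is understood formally, in keeping with the convention adopted throughout the appendix.
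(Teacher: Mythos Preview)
Your argument is correct and is the standard proof of this fact. Note, however, that the paper does not supply its own proof of this proposition: it is stated with a citation to \cite{MandS} (Proposition~10.2.2) and left unproved, with only the subsequent corollary being given a one-line justification. So there is nothing to compare against beyond observing that your proof is exactly the textbook one.
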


\begin{cor} \label{cor:Lie_hom}
  Let $(S,\omega)$ be a symplectic manifold
  and let $h,f \in C^{\infty}(S)$.
  Then $[X_h , X_f] = -X_{\omega(X_h,X_f) }$.
\end{cor}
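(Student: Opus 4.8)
The plan is to recognize this as a nearly immediate consequence of Proposition~\ref{prop:Lie_hom} once we identify the two-form expression $\omega(X_h, X_f)$ with a Poisson bracket. The only substantive input is the definition of the Poisson bracket attached to a symplectic manifold, given at the end of Section~\ref{sec:Symplectic}, namely $\{f,h\} = \omega(X_f, X_h)$ for $f,h \in C^\infty(S)$. Reading this with the arguments in the order $h,f$ yields the key identity
\begin{equation*}
  \omega(X_h, X_f) = \{h, f\},
\end{equation*}
so the scalar function appearing inside the Hamiltonian vector field on the right-hand side of the claim is precisely the canonical Poisson bracket $\{h,f\}$.

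With that identification in hand, I would simply invoke Proposition~\ref{prop:Lie_hom}. Since $(S,\omega)$ is in particular a Poisson manifold (its symplectic Poisson bracket satisfies the Lie-algebra and derivation axioms of Definition~\ref{defn:Poisson}), the proposition applies and gives
\begin{equation*}
  X_{\{h, f\}} = -[X_h, X_f].
\end{equation*}
Substituting the identity $\{h,f\} = \omega(X_h, X_f)$ into the subscript on the left then produces $X_{\omega(X_h, X_f)} = -[X_h, X_f]$, and rearranging the sign gives exactly the stated conclusion $[X_h, X_f] = -X_{\omega(X_h, X_f)}$.

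The computation is therefore a one-line substitution, and the genuine content is entirely carried by Proposition~\ref{prop:Lie_hom}; the corollary merely rephrases the Poisson-bracket argument as a symplectic two-form. The one point demanding care is the bookkeeping of argument order and sign: the Poisson bracket is antisymmetric and the two-form $\omega$ is skew, so one must make sure that $\omega(X_h, X_f)$ is matched with $\{h,f\}$ rather than $\{f,h\}$, since swapping the two would flip the sign and produce a spurious factor of $-1$. Apart from confirming that the symplectic bracket indeed satisfies the hypotheses of Proposition~\ref{prop:Lie_hom} (which is guaranteed by the standard fact, recorded just above, that $\{\cdot,\cdot\} = \omega(X_\cdot, X_\cdot)$ defines a Poisson bracket on any symplectic manifold), there is no real obstacle to overcome here.
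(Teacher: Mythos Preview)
Your proposal is correct and follows exactly the paper's approach: the paper's proof is the single line $X_{\omega(X_h,X_f)} = X_{\{h,f\}} = -[X_h,X_f]$, which is precisely the substitution and application of Proposition~\ref{prop:Lie_hom} that you describe.
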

\begin{proof}
  $X_{\omega(X_h,X_f)} = X_{ \{h,f\} } = -[X_h , X_f]$.
\end{proof}

\begin{defn}
  Let $(P_1, \{ \cdot , \cdot \}_1)$ and $(P_2, \{ \cdot , \cdot \}_2)$
  be Poisson manifolds.
  A map $\psi:P_1 \to P_2$ is called a
  \emph{Poisson map} if $\{ f \circ \psi , g \circ \psi \}_1 = \{ f , g \}_2 \circ \psi$  for any $f,g \in C^{\infty}(P_2)$.
\end{defn}

\begin{prop}[Lemma 1.2 of \cite{Weinstein1983} or Proposition 10.3.2 of \cite{MandS}] \label{prop:Poisson_dynamics}
  Let $\psi:P_1 \to P_2$ be a Poisson map.
  Let $h_2 \in C^1(P_2)$.
  If $x(t) \in P_1$ is a solution to Hamilton's equations with respect
  to $h_1 = h_2 \circ \psi \in C^1(P_1)$, then $y(t) = \psi(x(t)) \in P_2$ is a solution
  to Hamilton's equations with respect to $h_2$.
\end{prop}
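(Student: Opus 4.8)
The plan is to test the asserted equation of motion against an arbitrary observable and collapse everything onto the single defining property of a Poisson map. Fix a smooth function $g \in C^\infty(P_2)$ and consider the scalar curve $t \mapsto g(y(t)) = (g \circ \psi)(x(t))$. First I would differentiate along $x(t)$. By hypothesis $x(t)$ is an integral curve of the Hamiltonian vector field $X_{h_1}$ with $h_1 = h_2 \circ \psi$, so using the characterization $\lie_{X_{h_1}}[f] = \{ f , h_1 \}_1$ of the Hamiltonian vector field I obtain
\[
  \frac{d}{dt} g(y(t)) = \lie_{X_{h_1}}[g \circ \psi](x(t)) = \{ g \circ \psi , h_2 \circ \psi \}_1(x(t)).
\]

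Next I would invoke the Poisson-map hypothesis verbatim, namely $\{ g \circ \psi , h_2 \circ \psi \}_1 = \{ g , h_2 \}_2 \circ \psi$. Substituting this and recalling $y(t) = \psi(x(t))$ turns the right-hand side into an expression evaluated entirely on $P_2$:
\[
  \frac{d}{dt} g(y(t)) = \{ g , h_2 \}_2(y(t)) = \lie_{X_{h_2}}[g](y(t)) = \langle dg(y(t)) , X_{h_2}(y(t)) \rangle.
\]
On the other hand, the chain rule gives $\frac{d}{dt} g(y(t)) = \langle dg(y(t)) , \dot{y}(t) \rangle$. Equating the two expressions yields $\langle dg(y(t)) , \dot{y}(t) - X_{h_2}(y(t)) \rangle = 0$ for every choice of $g$, and the final step is to conclude from this that $\dot{y}(t) = X_{h_2}(y(t))$, i.e.\ that $y(t)$ solves Hamilton's equations for $h_2$.

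The main obstacle is precisely this last deduction: one needs the differentials $dg(y)$, as $g$ ranges over $C^\infty(P_2)$, to separate tangent vectors at each point of $P_2$ (equivalently, to span a sufficiently large subspace of $T^*_y P_2$). In finite dimensions this is automatic, since the coordinate functions alone already furnish a basis of the cotangent space, so the argument is complete and purely algebraic. In the infinite-dimensional, formal setting of this paper it is the analogue of the weak non-degeneracy that is built into the symplectic and Poisson structures, and it is exactly the functional-analytic point that the paper defers to \cite{GayBalmazVizman2012}; I would simply note it and proceed formally. Everything preceding it --- the differentiation along $x(t)$ and the single application of the Poisson-map identity --- requires nothing beyond the definitions of the Hamiltonian vector field and of a Poisson map. (An equivalent, slightly more structural route would be to prove once and for all the intertwining relation $T\psi \circ X_{h_2 \circ \psi} = X_{h_2} \circ \psi$ and then read off the claim, but the test-function computation above is cleaner since it avoids manipulating $T\psi$ directly.)
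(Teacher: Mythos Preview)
The paper does not actually supply its own proof of this proposition; it simply cites Weinstein and Marsden--Ratiu and adds a one-line remark about $C^1$ smoothness. Your argument is correct and is the standard one (indeed it is essentially the proof in the cited references): test against an arbitrary observable $g$, differentiate along $x(t)$, apply the Poisson-map identity once, and read off $\dot{y} = X_{h_2}(y)$. Your flagging of the separating-differentials issue in infinite dimensions is appropriate and matches the paper's explicit policy of deferring such functional-analytic points.
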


  Remark that unlike in~\cite{MandS}, we only require $C^1$~smoothness
  since we do not use existence and uniqueness of solutions.

  When the dimension of $P_2$ is larger than that of $P_1$,
  Proposition~\ref{prop:Poisson_dynamics} allows one to find solutions of
  Hamiltonian equations on $P_2$
  by solving lower-dimensional Hamiltonian equations
  on $P_1$.
  \subsection{Weak dual pairs}
  In this section we review the notion of weak dual pairs
  \cite{GayBalmazVizman2012}.
  This is a relaxation of the more frequently invoked notion of a dual pair \cite{MarsdenWeinstein1983,Weinstein1983}.
  Let $(S,\omega)$ be a symplectic manifold.
  Given a distribution $V \subset TS$, denote the fiber over
  $x \in S$ by $V_x \subset T_x S$.
  The \emph{symplectic orthogonal} to $V$ is the distribution
  \begin{equation*}
    V^\omega = \{ w \in TS \mid \omega( w , v ) = 0, \forall v \in V \}.
  \end{equation*}
  \begin{defn}[Weak dual pair \cite{GayBalmazVizman2012}]\label{def:weak_dual_pair}
  Let $J_1\colon S \to P_1$ be a Poisson map.
  The kernel of $J_1$ is the distribution
  \begin{equation*}
    \kernel(J_1) = \{ v \in TS \mid TJ_1 \cdot v  = 0 \}.
  \end{equation*}
  If $J_2\colon S \to P_2$ is a Poisson map as well, and
  \begin{equation*}
    \kernel(J_2)^\omega \subset \kernel(J_1) \quad , \quad
    \kernel(J_1)^\omega \subset \kernel(J_2)
  \end{equation*}
  we call the diagram
  \begin{equation*}
    P_1 \stackrel{J_1}{\longleftarrow} S \stackrel{J_2}{\longrightarrow} P_2
  \end{equation*}
  a \emph{weak dual pair}.
  \end{defn}
  We would have a proper dual pair if the kernel inclusions were replaced by equalities.

  \begin{thm} \label{thm:dual_pairs}
    Let $J_1,J_2:S \to P_1,P_2$ form a weak dual pair.
    Let $h \in C^1(P_1)$.
    Let $x(t) \in S$ be a solution to Hamilton's equations
    with respect to the Hamiltonian $H = h \circ J_1$.
    Then $J_1\left( x(t) \right) \in P_1$ is a solution
    to Hamilton's equations on $P_1$ with respect to $h$,
    and $J_2( x(t))$ is constant in time.
  \end{thm}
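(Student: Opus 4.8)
The plan is to dispatch the two assertions separately, since they draw on different parts of the hypothesis. The first assertion is essentially immediate: because $J_1$ is a Poisson map by assumption and the Hamiltonian has the special form $H = h \circ J_1$, Proposition~\ref{prop:Poisson_dynamics} applies verbatim with $\psi = J_1$, $h_2 = h$, and $h_1 = H$. Hence $J_1(x(t))$ is automatically a solution of Hamilton's equations on $P_1$ with respect to $h$. Note that the weak dual pair structure plays no role here; all that is used is that $J_1$ is Poisson.

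The second assertion is where the kernel inclusions enter. The key step I would establish is that the Hamiltonian vector field $X_H$ is a section of $\kernel(J_2)$; once this is in hand, one computes
\begin{equation*}
  \frac{d}{dt} J_2(x(t)) = TJ_2 \cdot \dot{x}(t) = TJ_2 \cdot X_H(x(t)) = 0,
\end{equation*}
so $J_2(x(t))$ is constant. To show $X_H \in \kernel(J_2)$, I would first argue that $X_H \in \kernel(J_1)^\omega$. This follows from a one-line chain rule computation: for any $v \in \kernel(J_1)$, i.e.\ $TJ_1 \cdot v = 0$, the defining property $X_H \contract \omega = dH$ together with $dH = d(h \circ J_1) = dh(J_1) \circ TJ_1$ yields
\begin{equation*}
  \omega(X_H, v) = dH \cdot v = \langle dh(J_1(x)), TJ_1 \cdot v \rangle = 0.
\end{equation*}
Thus $X_H$ annihilates $\kernel(J_1)$ under $\omega$, i.e.\ $X_H \in \kernel(J_1)^\omega$. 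The weak dual pair hypothesis $\kernel(J_1)^\omega \subset \kernel(J_2)$ then immediately gives $X_H \in \kernel(J_2)$, completing the argument. It is worth remarking that only this one inclusion is used; the complementary inclusion $\kernel(J_2)^\omega \subset \kernel(J_1)$ would be needed for the symmetric statement obtained by exchanging the roles of $J_1$ and $J_2$.

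The only genuinely delicate point is the infinite-dimensional functional-analytic bookkeeping, which I would treat formally in keeping with the rest of the paper: one assumes throughout that $X_H$ exists as a vector field (this is where the $C^1$ regularity of $h$, and in the applications the smoothness hypotheses on $K_{p,\sigma}$, are invoked) and that the distributions $\kernel(J_i)$ and their symplectic orthogonals behave as in the finite-dimensional case. Granting this, the proof is purely formal and splits the hypotheses cleanly: that $J_1$ is Poisson yields the reduced dynamics on $P_1$, while the single inclusion $\kernel(J_1)^\omega \subset \kernel(J_2)$ yields the conservation of $J_2$.
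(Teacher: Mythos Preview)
Your proof is correct and follows essentially the same approach as the paper's own proof: invoke Proposition~\ref{prop:Poisson_dynamics} for the first assertion, then show $X_H \in \kernel(J_1)^\omega$ by pairing against arbitrary $v \in \kernel(J_1)$ and apply the inclusion $\kernel(J_1)^\omega \subset \kernel(J_2)$. Your chain-rule phrasing $dH \cdot v = \langle dh(J_1(x)), TJ_1 \cdot v \rangle$ is just an explicit version of the paper's observation that $H$ is constant on level sets of $J_1$, and your remark that only one of the two kernel inclusions is actually used is a nice addition.
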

  \begin{proof}
    Use proposition \ref{prop:Poisson_dynamics} to show that
    $\mu(t) = J_1 (x(t))$ is a solution to Hamilton's equations
    with respect to $h$.
    To verify that $J_2( x(t) )$ is constant,
    let $v \in \kernel(J_1)$ be a vector over $x(0) \in S$.
    This means that $v$ is tangent to the level set of $J_1$ at $x(0)$.
    Moreover, $H = h \circ J_1$ is constant on such level sets.
    Thus we observe
    \begin{equation*}
      0 = \langle dH(x(0)) , v \rangle =
      \omega \left( \dot{x} , v \right).
    \end{equation*}
    Since $v$ was an arbitrary element of $\kernel(J_1)$ over $x(0)$
    we see that $\dot{x} \in \kernel(J_1)^\omega$.
    Since $J_1$ and $J_2$ form a weak dual pair, this implies $\dot{x}  \in \kernel(J_2)$.
    Thus we have found
    \begin{equation*}
      \frac{d}{dt} J_2( x(t)) = TJ_2 \cdot \dot{x}(t)  
      = 0.
    \end{equation*}
  \end{proof}

\section{Diagrammatic overview} \label{app:diagramatic}
We present here a diagrammatic representation of some of the spaces used in the present paper. We begin by recalling a number of general results that hold for finite-dimensional Lie groups, before we indicate their relevance to the developments in the main text.

\begin{figure}[h]
\begin{center}
\begin{overpic}[scale = 0.7]{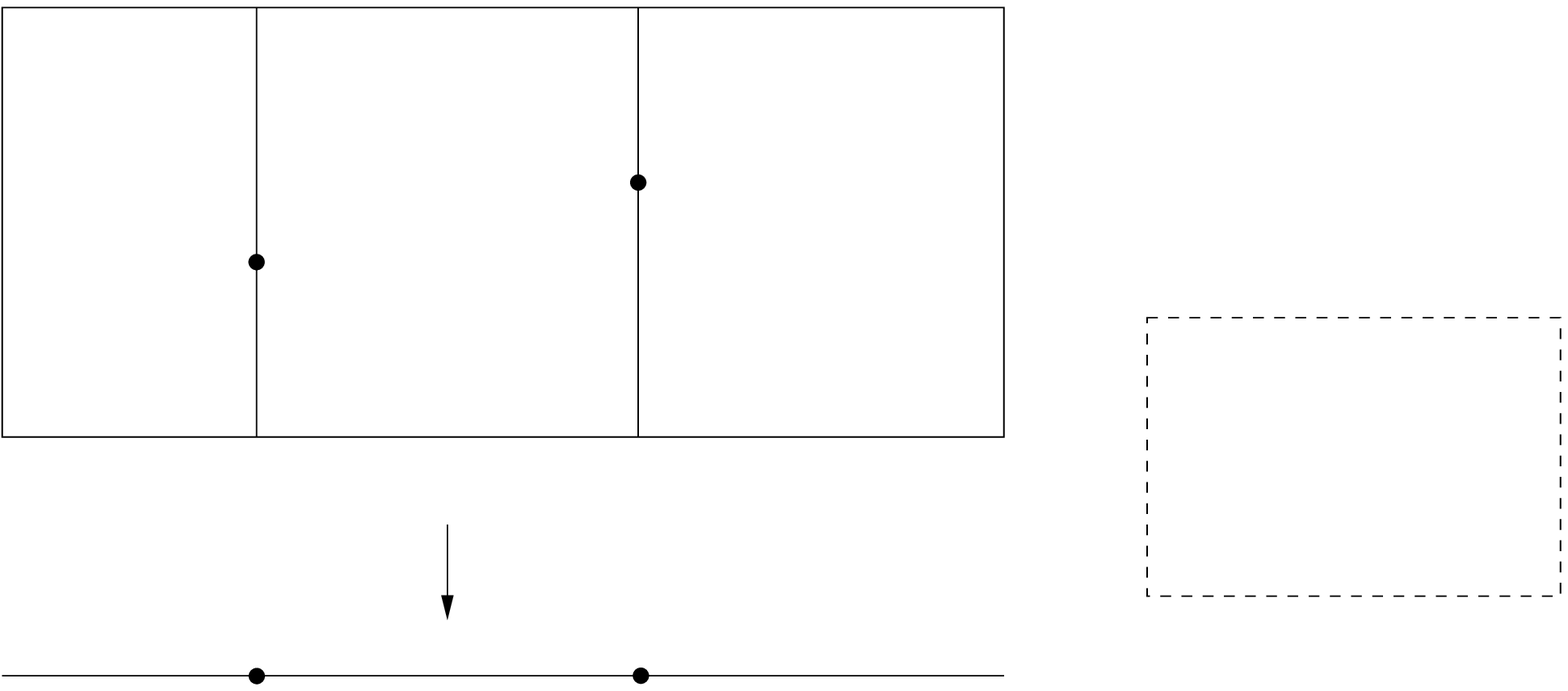}
\put(4, 6){\includegraphics[scale = 0.13]{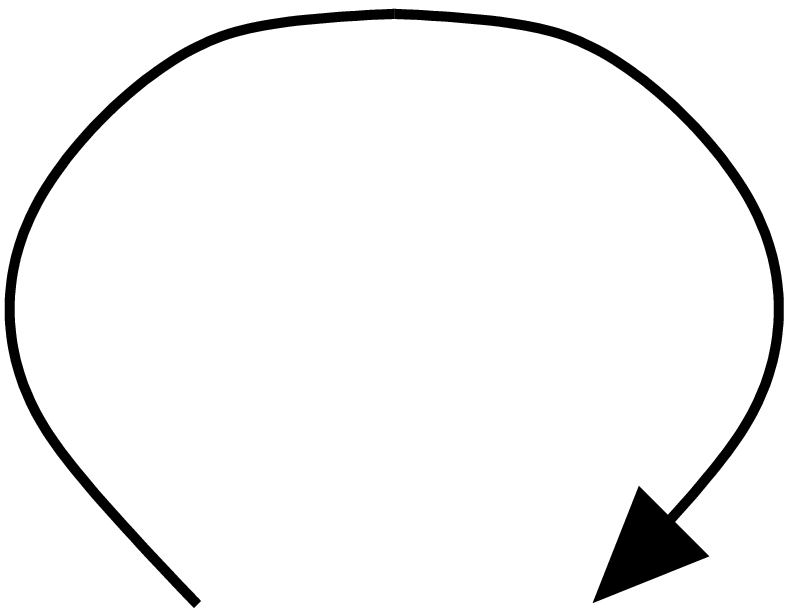}}
\put(6.5, 8){$\phi$}
\put(65, 2.5){\includegraphics[scale = 0.13]{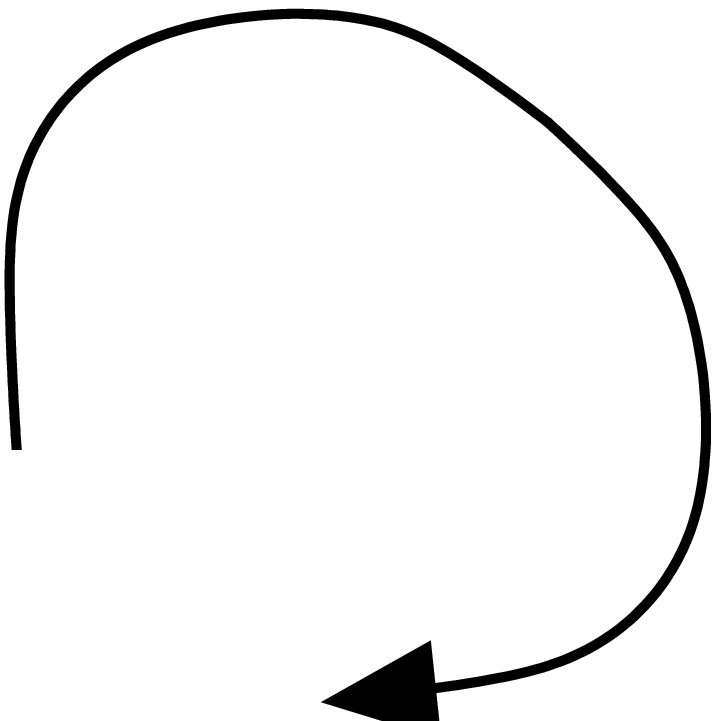}}
\put(67, 5){$\psi$}
\put(61, 42){$G$}
\put(42, 33) {$g$}
\put(18, 27.9) {$e$}
\put(17, 40) {$\updownarrow G_{q_0}$ }
\put(17, 4) {$q_0$ }
\put(35, 4) {$q = \Pi(q) = \phi_g(q_0)$ }
\put(25, 8) {$\Pi$ }
\put(-3, 1) {$Q$ }
\put(96.5, 22) {$K$ }
\end{overpic}
\end{center}
\caption{
\label{fig_diagram_1}
The $G$ action on $Q$ induces a projection $\Pi$. The group $K$ also acts on $G$, and we assume that the two actions commute.
}
\end{figure}

\begin{itemize}
\item
Let a Lie group $G$ act on a manifold $Q$ by the action $\phi: G \times Q \to Q$, which we also write as $\phi_g(\cdot) = \phi( g, \cdot )$. If we fix a particular value $q_0 \in Q$, we can construct a mapping $\Pi: G \to Q$ given by $\Pi(g) = \phi_g(q_0)$. Let us assume that the action is transitive, so that $\Pi$ is surjective. We denote by $G_{q_0}$ the isotropy subgroup leaving $q_0$ invariant, that is,
\begin{equation*}
  G_{q_0} := \{ g \in G | \phi_g(q_0) = q_0\}.
\end{equation*}
Note that $\Pi^{-1}(q) =g  G_{q_0}$ 
for any $g \in G$ such that $\Pi(g) = q$, and hence we can identify $Q$  with $G/G_{q_0}$. Suppose a further Lie group, $K$, also acts on $Q$ with group action $\psi: Q \times K \to Q$, which commutes with $\phi$. This situation arises naturally, for instance, when $K$ is a subgroup of $G$ and, in turn,  $G_{q_0}$ is a normal subgroup of $K$. In that case, one can define the action $\psi$ as
\begin{equation}
  \psi_k(q) := \Pi( g k) = \phi_{gk}(q_0), \label{right_ac_general}
\end{equation}
and check that $\phi$ and $\psi$ indeed commute:
\begin{equation*}
  \phi_s(\psi_k (q)) = \phi_s(\phi_{gk}(q_0)) = \phi_{sgk}(q_0) = \psi_k(\phi_s (q)).
\end{equation*}
We refer to Figure \ref{fig_diagram_1} for a representation of the relevant spaces and maps.
\item The actions $\phi$ and $\psi$ on $Q$ can be lifted to actions $\Phi$ and $\Psi$ on the cotangent bundle $T^*Q$ in the usual manner (see Figure \ref{fig_diagram_2}). These cotangent lifted actions induce equivariant momentum maps $J_1: T^*Q \to \mathfrak{g}^*$ and $J_2:T^*Q \to \mathfrak{k}^*$, where $\mathfrak{g}^*$ and $\mathfrak{k}^*$ are the duals of the Lie algebras of $G$ and $K$. Due to their equivariance, $J_1$ and $J_2$ are Poisson maps (where the duals of the Lie algebras are equipped with appropriate Lie--Poisson brackets). Since the actions $\phi$ and $\psi$ commute, the action $\Psi$ leaves level sets of $J_1$ invariant, and vice versa. This implies that $J_1$ and $J_2$ are a weak dual pair, and if moreover $\Psi$ is transitive on the level sets of $J_1$ and vice versa, then $J_1$ and $J_2$ are a proper dual pair, see~\cite[Corollary~2.6]{GayBalmazVizman2012}.
\item Let $\mathcal{H}:T^*G \to \mathbb{R}$ be a right-invariant Hamiltonian. This means that $\mathcal{H}$ is invariant with respect to the  cotangent lift $TR^*$ of the multiplication from the right of $G$ by itself. In particular, the reduced Hamiltonian $h:\mathfrak{g}^* \to \mathbb{R}$ satisfies $\mathcal{H}(\alpha_g) = h \circ TR_g^*(\alpha_g)$ for any $\alpha_g \in T^*G$, and the reduced dynamics in $\mathfrak{g}^*$ are of Lie--Poisson type. The momentum map $J_1$ can be used to induce the so-called \emph{collective Hamiltonian} $H = h\circ J_1$ on $T^*Q$. Note that $T^*Q$ is a symplectic manifold, and that the symplectic (canonical) dynamics with respect to the collective Hamiltonian are mapped by (the Poisson map) $J_1$ to the reduced dynamics on $\mathfrak{g}^*$. Moreover, $J_2$ is conserved under the dynamics on $T^*Q$. The conservation law follows from Noether's theorem because $J_1$, and hence $H$, are left invariant by $\Psi$ (see Theorem \ref{thm:dual_pairs}).  Note that the elements of $T^*Q$ play the role of symplectic variables (or Clebsch variables in the sense of \cite{MarsdenWeinstein1983}).
\item The appeal of Clebsch variables is their symplectic nature. The symmetry of $H$ with respect to $\Psi$ implies that reduced dynamics on $T^*G/H$ can be constructed by symplectic reduction. Note however that the resulting quotient manifold is not symplectic in general.
\item
Note also that there is a symplectic diffeomorphism between $T^*Q$ and $J^{-1}(0)/G_{q_0}$, where $J$ here is the momentum map associated with the cotangent lift of the action (from the right) of $G_{q_0}$ on $G$, see \cite[Theorem~2.2.2]{HRS}.
\end{itemize}
\begin{figure}[h]
\begin{center}
\begin{overpic}[scale = 0.6]{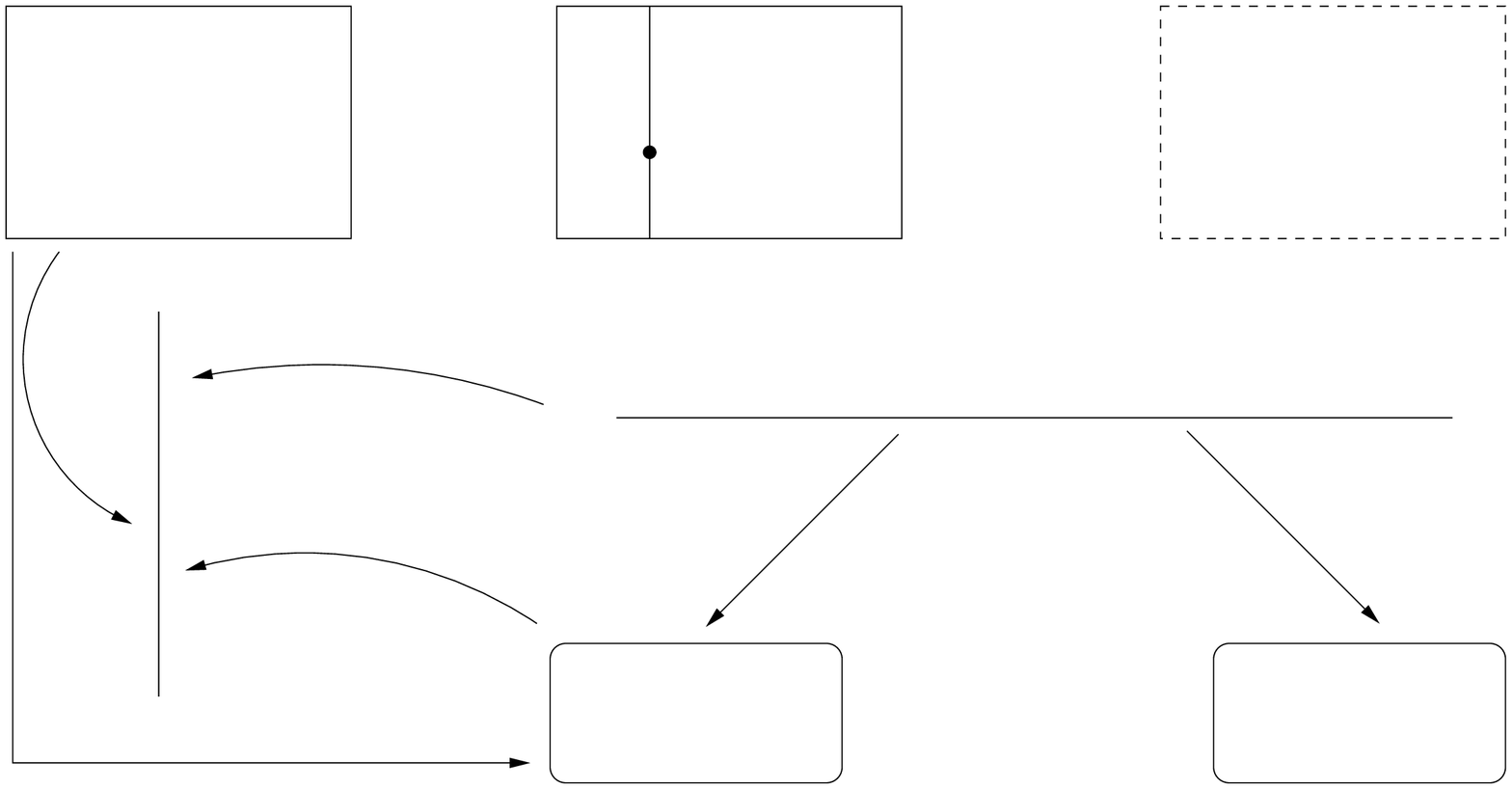}
\put(45, 28){\includegraphics[scale = 0.13]{arrow.eps}}
\put(47.8, 30){$\Phi$}
\put(84, 28){\includegraphics[scale = 0.13]{arrow.eps}}
\put(86.8, 30){$\Psi$}
\put(57, 49){$G$}
\put(44, 41.2) {$e$}
\put(43.3, 47) {$\updownarrow G_{q_0}$ }
\put(97, 23.5) {$T^*Q$ }
\put(96.5, 49) {$K$ }
\put(27, 40){\includegraphics[scale = 0.13]{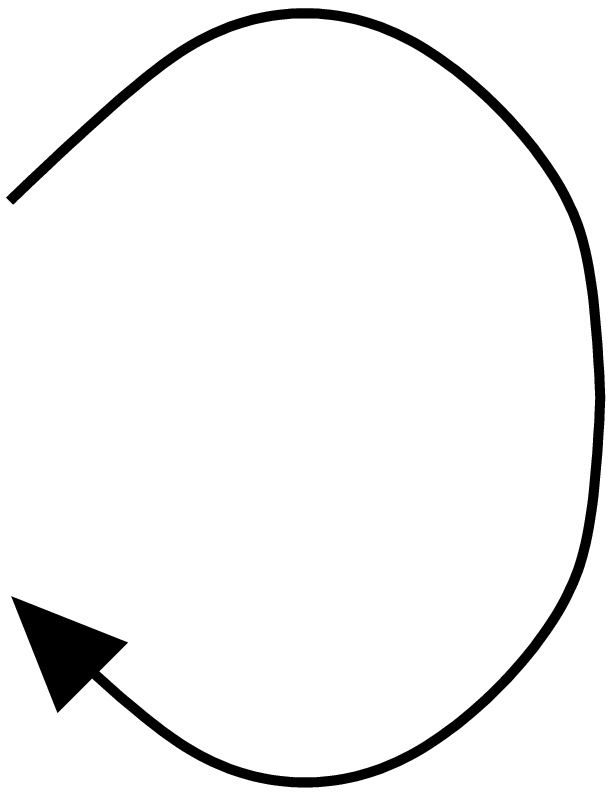}}
\put(17, 49) {$T^*G$ }
\put(3, 25) {$\mathcal{H}$ }
\put(16, 24.5) {$H = h \circ J_1$ }
\put(22, 12) {$h $ }
\put(7, 7) {$\mathbb{R} $ }
\put(52.5, 7) {$\mathfrak{g}^*$ }
\put(96.5, 7) {$\mathfrak{k}^*$ }
\put(49.5, 17) {$J_1$ }
\put(85.5, 17) {$J_2$ }
\put(22, 2.5) {$TR^* $ }
\end{overpic}
\end{center}
\caption{The cotangent lifted actions induce momentum maps $J_1$ and $J_2$. The Hamiltonian $\mathcal{H}$, assumed to be right-invariant, is also shown, along with the reduced Hamiltonian $H$ and the collective Hamiltonian $H = h \circ J_1$.
\label{fig_diagram_2}
 }
\end{figure}

In translating the above facts to the case of interest in the present paper, one encounters technical subtleties to do with the infinite-dimensionality of $\SDiff(\R^n)$. Nevertheless, bullet-by-bullet parallels can be recognized between the developments in the main text of the paper and the general results above, as we will discuss now. For simplicity, we restrict ourselves in what follows to the case of a single particle, the extension to $N$ particles being straightforward.
\begin{itemize}
\item Let $G = \SDiff(\R^n)$, and let $Q= Q_1^{(k)}$ be the space of single-particle $k$-jetlets. We fix the point $q_0 = (z, \mathbf{1}, 0, \ldots, 0) \in Q_1^{(k)}$ corresponding to the Taylor expansion of the identity in $\SDiff(\R^n)$ evaluated at $z \in \R^n$ (cf. Section \ref{sec:higher_order}). We let the projection $\Pi$ be given by ${\rm Jet}_z^{(k)}$, and hence the left action of an element $\varphi \in \SDiff(\R^n)$ on $q \in Q$ is $\phi_\varphi(q) = {\rm Jet}_z^{(k)}(\varphi \circ \rho)$, for any $\rho$ such that ${\rm Jet}_z^{(k)}(\rho) = q$. The role of the isotropy subgroup $G_{q_0}$ is played by ${\rm iso}^{(k)}_z$. Let $K = {\rm iso}(z)$, and note that ${\rm iso}^{(k)}(z)$ is a normal subgroup of ${\rm iso}(z)$ (\cite[Proposition 4.1]{JacobsRatiuDesbrun2013}). Hence, we can define a right action of ${\rm iso}(z)$ on $Q_1^{(k)}$ given by \eqref{right_ac_general}, namely
  \begin{equation*}
    \psi_\rho(q) = {\rm Jet}_z^{(k)}(\varphi \circ \rho),
  \end{equation*}
for any $\varphi$ such that ${\rm Jet}_z^{(k)}(\varphi) = q$.
\item The cotangent lift of the left and right actions on $Q$ lead to the momentum maps $J_L^{(k)}$ and $J_R^{(k)}$, explicitly computed in Section  \ref{sec:higher_order} and shown to be a weak dual pair in Proposition \ref{prop:k dual pairs}.
\item  In Theorem \ref{thm:k-solutions} we showed that the canonical Hamiltonian equations on $T^*{Q_1}^{(k)}$ associated with the collective Hamiltonian $H^{(k)} = h_{p, \sigma} \circ J_L^{(k)}$ lead to trajectories that are mapped, by $J_L^{(k)}$, to solutions of Hamilton's equations on $\mathfrak{X}_{\rm div}(\R^n)^*$. Moreover, we showed that $J_R^{(k)}$ is a constant of motion.
\item We briefly visit further symplectic reduction in Appendix \ref{sec:eom}. For more on this topic we refer to \cite{JacobsRatiuDesbrun2013, CotterHolmJacobsMeier2014}.
\item In Section \ref{sec:Kelvin} we constructed the mapping $i:T^*Q_1^{(k)} \to T^*\SDiff(\R^n)/{\rm iso}^{(k)}_z$ given by \eqref{i_explicit}, namely,
\begin{equation*}
 	i(q, p) = [TR^*_{\varphi^{-1}} J_L^{(k)}(q, p)].
 \end{equation*}
As we briefly mentioned towards the end of that section, if we denote by $J_{\rm conv}^{(k)}: T^*\SDiff(\R^n) \to \mathfrak{iso}^{(k)}(z)^*$ the momentum map associated to the cotangent lift of the right action of $\iso^{(k)}(z)$ on $\SDiff(\R^n)$,
then one can show that $i\bigl(T^*Q^{(k)}_1\bigr) = (J_{\rm conv}^{(k)})^{-1}(0) / \iso^{(k)}(z)$, as suggested by \cite[Theorem~2.2.2]{HRS}.
\end{itemize}

\section{Multi-indices}
\label{sec:multi}
A multiset is a set with some notion of multiplicity \cite{Blizard1989}.
In this paper, a multi-index on $\R^n$ is a multiset of elements derived from the generating set $\{1,\dots,n\}$.
Heuristically, a multi-index is just a ``bag of marbles'' each of which comes in $n$ ``colors''.
Given two multi-indices $\alpha \in \bag^i(n)$ and $\beta \in \bag^j(n)$ one can create the
multiset union $\alpha \cup \beta \in \bag^{i+j}(n)$ by collecting the marbles of $\alpha$ and $\beta$ into a single bag.
Given integers $b_1,\dots,b_j \in \R^n$, we can define the unique multi-index $\beta = [b_1,\dots,b_j] \in \bag^j(n)$
obtained by collecting $b_1,\dots,b_j$ into a bag.
Given these conventions, we can denote the partial differential operator $\partial_{b_1 \cdots b_n }$ by $\partial_\beta$.
Moreover, the notion of equivalence of mixed partials is expressed by the equivalence 
$\partial_\alpha \partial_\beta = \partial_{\alpha \cup \beta} = \partial_\beta \partial_\alpha$.
The cardinality of the multi-index $\beta$ is denoted $|\beta|$ and is given by the number of marbles in the bag.
Thus the order of the partial differential operator $\partial_{\beta}$ is $|\beta|$.
We denote the space of $k$-th order partitions of a multi-index by $\Pi(\alpha,k)$.
Rather than defining all this formally, we will compute an example and refer to \cite{Jacobs2014b} for the formal definitions.

We can consider the integers $1$ $2$ and $1$, and the partial differential operator $\partial_{121}$.
The associated multi-index is just $[1,2,1]$.  This multi-index is equivalent to the multi-index $[1,1,2]$ and $[2,1,1]$.
We say that it contains the elements $1$ and $2$.
Because it contains `$1$' two times, we say that the multiplicity of $1$ is $2$.
The multiset of $2$-fold partitions is $\Pi( [1,2,1],2)$, and consists of three multiset-partitions
\begin{equation*}
	[ [1] , [2,1] ] , [[1,1] , [2] ] , [ [1, 2] , [1] ].
\end{equation*}
Note that the first and the third partition correspond to the same multiset.
The cardinality of $\Pi( [1,2,1],2)$ is $3$, although it only has two distinct elements (one with a multiplicity of $1$, and another with a multiplicity of $2$).

\section{The dual space to divergence free vector fields}
\label{sec:measure_valued_momap}
In this section we will provide a terse and incomplete
characterization of the dual space of divergence free
vector fields.  First let us characterize the dual space of the space of all vector fields (with ``proper'' decay).
Let $\mathfrak{X}(\R^n)$ be the space of vector fields which decay at infinity
in such a way that $\mathfrak{X}(\R^n)$ is Fr\'echet.
Viewing $\mathfrak{X}(\R^n)$ as a subspace of functions from $\R^n$ to $\R^n$
we can view its dual as a space of distributions.
That is to say, given any $m \in \mathfrak{X}(\R^n)^*$ we may write $m$ as a tensor
product $p \otimes \mu$ where $\mu$ is a distribution (perhaps a measure)
on $\R^n$ and $p$ is a covector field (i.e.\ a one-form).
Conversely, given any $p \in \Omega^1(\R^n)$ and distribution $\mu \in \mathcal{D}(\R^n)$
we may form the tensor product $p \otimes \mu$.
The object $p\otimes \mu$ is identified as an element of $\mathfrak{X}(\R^n)^*$ through
the pairing
\begin{equation*}
	\langle p \otimes \mu , u \rangle :=  \int_{\R^n} \langle p(x) , u(x) \rangle \mu.
\end{equation*}
where $\langle p(x),u(x) \rangle$ is the function on $\R^n$ obtained by pairing the covector $p(x) \in T_x^*\R^n$ with the vector $u(x) \in T_x\R^n$.
If we restrict ourselves to the case of divergence free vector fields, we need to quotient
the dual space appropriately.  In particular, we see that the annihilator of $\mathfrak{X}_{\rm div}(\R^n)$
as a subspace of $\mathfrak{X}(\R^n)^*$ is
\begin{align*}
	(\mathfrak{X}_{\rm div}(\R^n) )^\circ := \{ m \in \mathfrak{X}(\R^n)^* \mid \langle m , u \rangle = 0 , \forall u \in \mathfrak{X}_{\rm div}(\R^n) \}\\
		:= {\rm closure}\{ p \otimes \dx \in \mathfrak{X}(\R^n)^* \mid p \text{ is a closed one-form} \}.
\end{align*}
where $\dx$ is the canonical volume form on $\R^n$ and
we have used the fact that the gradient fields and the harmonic vector fields are $L^2$-orthogonal to the divergence free vector fields.
The dual space $\mathfrak{X}_{\rm div}(\R^n)^*$ is identical to the quotient space $\mathfrak{X}(\R^n)^* / (\mathfrak{X}_{\rm div}(\R^n))^\circ$.
In other words, we may view a $m \in \mathfrak{X}_{\rm div}(\R^n)^*$ as an object of the form $p \otimes \mu$ modulo $\mathfrak{X}_{\rm div}(\R^n)^\circ$.
In the text we will typically not mention $\mathfrak{X}_{\rm div}(\R^n)^\circ$ explicitly, and simply identify $m$ with $p \otimes \mu$.
This is a harmless identification as long as we do not pair it with a non-divergence free vector field.

\section{Equations of motion for 1-jetlets}
\label{sec:eom}

The equations of motion are expressible as Hamiltonian
equations on $T^*Q_N^{(1)}$
in canonical variable $(q^{(0)},p^{(0)},q^{(1)},p^{(1)})$.
However, it is more efficient to express the equations of 
motion in the non-canonical variables $(q,p,\mu)$
where $q_a = q_a^{(0)}$, $p_a = p_a^{(0)}$ and $\mu_a = [q_a^{(1)}]\indices{^\ell_i} [p_a^{(1)}] \indices{_\ell^j}$ for $a = 1,\dots,N$.
The Hamiltonian in these coordinates is
\begin{align*}
  H(q,p,\mu) &=
  \frac{1}{2} p_{a i} p_{b j} K^{i j}(q_a - q_b) - p_{a i}
  [\mu_b]\indices{_{j}^{k}} \partial_k K^{i j}(q_a - q_b) \\
& -\frac{1}{2} [\mu_a^{(1)}]\indices{_{i}^l}
  [\mu_b^{(1)}]\indices{_j^k} \partial_{l k} K^{i j}(q_a - q_b),
\end{align*}
where $K^{i j}(x) = \delta^{i j} e^{- \| x \|^2 / 2 \sigma^2}$.
Hamilton's equations are then given in short by
\begin{align}
  \dot{q} &=  \frac{\partial H}{\partial p} \label{eq:dHdp} \\
  \dot{p} &= -\frac{\partial H}{\partial q} \label{eq:dHdq} \\
  \xi     &=  \frac{\partial H}{\partial \mu} \label{eq:dHdmu} \\
  \dot{\mu} &= - \ad^*_{ \xi } ( \mu ) \label{eq:LiePoisson}
\end{align}
where $\ad^*$ refers to the coadjoint operator on $\SL(n)$.
More explicitly, equation~\eqref{eq:dHdp} is given by
\begin{equation*}
  \dot{q}_a^i = p_{b j} K^{i j}(q_a - q_b)
 -[\mu^{(1)}_b] \indices{_j^k} \partial_k K^{i j}(q_a - q_b)
\end{equation*}
equation~\eqref{eq:dHdq} is given by the sum
\begin{equation*}
  \dot{p}_{a i} = T^{00}_{a i} + T^{01}_{a i} + T^{11}_{a i},
\end{equation*}
where we define the three terms in this sum as
\begin{align*}
  T^{00}_{a i} &= -p_{a k} p_{b j} \partial_{i} K^{k j}(q_a - q_b) \\
  T^{01}_{a i} &= (p_{a l}[\mu_b^{(1)}]\indices{_j^k}
                  -p_{b l}[\mu_a^{(1)}]\indices{_j^k}) \partial_{k i} K^{l j}(q_a - q_b) \\
  T^{11}_{a i} &= [\mu_a^{(1)}] \indices{_m^l} [\mu_b^{(1)}]\indices{_j^k}
                  \partial_{l k i} K^{m j}(q_a - q_b). \\
\end{align*}
Next, we calculate the quantities $\xi = \partial H / \partial \mu$
for $k=1,2$ of equation~\eqref{eq:dHdmu} to be
\begin{equation*}
  [\xi_a ]\indices{^i_j} =
  p_{b k} \partial_j K^{i k}(q_a - q_b)
 -[\mu_b]\indices{_{l}^{k}} \partial_{j k} K^{i l}(q_a - q_b),
\end{equation*}
which allows us to compute $\dot{\mu}$ in equation~\eqref{eq:LiePoisson} as
\begin{equation*}
  [\dot{\mu}_a ]\indices{_i^j} =
  [\mu_a]\indices{_i^k} [\xi_a^{(1)}]\indices{^j_k}
 -[\mu_a]\indices{_k^j} [\xi_a^{(1)}]\indices{^k_i}.
\end{equation*}

The dynamics in terms of the original variables $(q^{(0)},q^{(1)})$ with $q^{(1)} \in \SL(n)$ are obtained by integrating the reconstruction equations
$
  [\dot{q}^{(1)}]\indices{^i_j} =
  [\xi]\indices{^i_k} [q^{(1)}]\indices{^k_j}
$.

\bibliographystyle{amsalpha}
\bibliography{hoj_2014}

\end{document}